\documentclass[12pt]{article}
\usepackage{amsmath,graphicx,enumerate,natbib,url,setspace,lscape,longtable}
\usepackage{epsfig,mathrsfs,amsthm,amssymb,color, verbatim,multirow,makecell} 
\usepackage{authblk,float,array,lscape,bm,comment,setspace}
\usepackage{amsfonts,indentfirst,booktabs,blindtext,comment,placeins,mathabx}
\usepackage[colorlinks=true, urlcolor=blue]{hyperref}
\hypersetup{
	linkcolor= blue, 
	citecolor= blue 
}

\usepackage{algorithm}
\usepackage{etoolbox}
\usepackage{subcaption}


\def\1{\mathbf{1}}

\def\sN{\mathsf N}

\def\yb{\mathbf y}

\def\zb{\mathbf z}
\def\wb{\mathbf w}

\def\E{\mathbb{E}}

\def\bv{\mathbf{v}}

\def \tbe_t{\tilde{\bm{\epsilon}}_t}

\theoremstyle{plain}
\newtheorem{theorem}{Theorem}[section]
\newtheorem{lemma}{Lemma}[section]
\newtheorem{remark}{Remark}[section]
\newtheorem{proposition}{Proposition}[section]
\newtheorem{corollary}{Corollary}[section]
\newtheorem{example}{Example}[section]
\newtheorem{definition}{Definition}[section]
\newtheorem{assumption}{Assumption}[section]

\def\beq{\begin{equation}}
	\def\eeq{\end{equation}}
\def\ben{\begin{equation*}}
	\def\een{\end{equation*}}
\def\bea{\begin{eqnarray}}
	\def\eea{\end{eqnarray}}
\def\bda{\begin{eqnarray*}}
	\def\eda{\end{eqnarray*}}

\def\bet{\begin{theorem}}
	\def\eet{\end{theorem}}
\def\bel{\begin{lemma}}
	\def\eel{\end{lemma}}
\def\bep{\begin{proposition}}
	\def\eep{\end{proposition}}
\def\bec{\begin{corollary}}
	\def\eec{\end{corollary}}
\def\bc{\begin{center}}
	\def\ec{\end{center}}

\graphicspath{{Fig/}}

\newcommand{\blind}{1}

\numberwithin{equation}{section}

\begin{document}

\def\spacingset#1{\renewcommand{\baselinestretch}%
	{#1}\small\normalsize} \spacingset{1}


\if1\blind
{
	\title{\bf {Bias-Corrected Multiplier Bootstrap Inference for Spectral Edges of Large Covariance Matrices}}
		\author[1]{Xiucai Ding\thanks{Email: \texttt{xcading@ucdavis.edu.}  XCD is partially supported by NSF DMS-2515104.}}
	\author[1]{Yichen Hu\thanks{Email: \texttt{ethhu@ucdavis.edu.}  YCH is partially supported by NSF DMS-2515104. }}
		\author[1]{Jiahui Xie\thanks{Email: \texttt{jihxie@ucdavis.edu.} JHX is partially supported by a research grant from UC Davis (PI: XCD).}}
	\affil[1]{Department of Statistics, University of California, Davis}
	\renewcommand*{\Affilfont}{\small}  
	\renewcommand*{\Authands}{ and }  
	\date{}  
	\maketitle
} \fi

\if0\blind
{
	\bigskip
	\bigskip
	\begin{center}
		{\LARGE\bf A spectral inference method for determining the number of communities in networks}
	\end{center}
	\medskip
} \fi

\vspace*{-8pt}
\begin{abstract}
Inference for spectral edges of large covariance matrices is a fundamental problem in high-dimensional statistics. A natural source of information is provided by the largest non-spiked sample eigenvalues, since they lie near the boundary of the bulk spectrum. A major difficulty is that these eigenvalues asymptotically follow the Tracy--Widom distribution, so direct confidence intervals based on the sample eigenvalues require precise model-specific quantities. These quantities are difficult to estimate in practice under general unknown population covariance structures, making direct Tracy--Widom-based inference practically inconvenient. In this paper, we propose a bias-corrected multiplier bootstrap procedure for inference on the spectral edges. The key idea is to introduce a carefully calibrated multiplier perturbation that regularizes the edge fluctuation to a slightly larger scale at which Gaussian approximation becomes tractable. The confidence interval is constructed directly from bootstrap eigenvalues, together with a data-driven recentering step that corrects the bootstrap-induced shift of the edge. On the theoretical side, we show that, after bias correction, the largest few non-spiked bootstrap eigenvalues are asymptotically Gaussian conditionally on the data. We also establish the asymptotic validity of the proposed confidence interval and show that it provides a data-driven, theoretically justified cutoff for the scree plot. 
\end{abstract}

	\spacingset{1.3} 


\section{Introduction}

Inference for extreme eigenvalues of large covariance matrices is a fundamental problem in high-dimensional statistics \citep{Johnstone2007,YaoZhengBai2015}. It arises naturally in principal component analysis \citep{anderson2003introduction}, factor models \citep{bai2002determining,onatski2009formal,Dobriban2020}, and signal detection \citep{9779233,Nadler2008}, where the spectral edge separates bulk variation from informative low-rank structure. In many modern applications, such as genomics \citep{patterson2006population,Price2006,lappalainen2013transcriptome,1000Genomes2015}, finance \citep{Markowitz1952,LedoitWolf2003,FanLiaoMincheva2013}, and signal processing \citep{nadakuditi2014optshrink,CouilletHachem2011}, one is interested not only in locating the edge of the bulk spectrum, but also in using it to assess whether additional eigenvalues have separated from the bulk and, ultimately, to determine the number of spikes \citep{Johnstone2001}.

A natural approach to infer the spectral edge is to use the sample eigenvalues, since the leading non-spiked eigenvalues concentrate near the boundary of the bulk spectrum and thus provide the most direct information about the edge location \citep{BaiSilverstein2010, knowles2017anisotropic,ErdosYau2017,bloemendal2016principal}. The main difficulty, however, is that these eigenvalues fluctuate on the Tracy--Widom scale \citep{tracy1994level,tracy1996orthogonal,Johnstone2001,el2007tracy,lee2016tracy,bloemendal2016principal}. As a result, valid inference for the spectral edge requires accurate centering by the deterministic edge together with a precise scaling constant. Although the corresponding asymptotic theory is well understood \citep{Bao2015,Ding&Yang2018,9779233,Fan2019separable,knowles2017anisotropic}, the associated centering and scaling quantities are often difficult to estimate efficiently in practice for Tracy--Widom-based statistical inference, especially under general population covariance structures \citep{Karoui2009,paul2009no,lee2016tracy}. This makes direct Tracy--Widom-based inference both theoretically delicate and practically inconvenient.

In this paper, we develop a practical bootstrap-based approach for inference on the spectral edge of large covariance matrices. Our primary goal is to construct a confidence interval for the deterministic edge of the bulk spectrum. Once such a confidence interval is available, it can be used directly to test whether a candidate eigenvalue still belongs to the bulk, and hence to detect the presence of additional spikes. As a further consequence, the same construction also yields a data-driven estimator of the number of spikes.

More specifically, suppose we observe $n$ i.i.d.\ random vectors $\yb_i\in\mathbb R^p$ generated from
\vspace*{-10pt}
\begin{equation}
	\vspace*{-10pt}
	\yb_i=\Sigma^{1/2}\zb_i,
	\label{eq_populationfirstdefinition}
\end{equation}
where $\Sigma$ is a population covariance matrix of general structure that may contain $r$ spikes, and $\zb_i$ has i.i.d.\ entries with unit variance; see Section~\ref{sec_theory} for a more detailed discussion. Our main inferential target is the deterministic right edge of the limiting non-spiked spectrum of the sample covariance matrix \vspace*{-10pt}
\begin{equation}\label{eq_matrixsample}
	\vspace*{-10pt}
	Q= n^{-1}\sum_{i=1}^n (\yb_i-\bar{\yb})(\yb_i-\bar{\yb})^\top,
\end{equation}
where $\bar{\yb}=n^{-1}\sum_{i=1}^n \yb_i.$ More precisely, this edge, denoted by $\mathsf{E}$ (cf.\ \eqref{eq_edgeonehaa}), is the right endpoint of the support of the generalized Marchenko--Pastur law \citep{BaiSilverstein2010}, which describes the limiting spectral distribution of $Q$.

To perform inference, we introduce the multiplier-bootstrap covariance matrix
\vspace*{-10pt}
\begin{equation}
	\vspace*{-10pt}
	Q_{\mathrm{MB}}
	:=
	\frac1n\sum_{i=1}^n \xi_i^2(\yb_i-\bar{\yb})(\yb_i-\bar{\yb})^\top,
	\label{eq_boostrap}
\end{equation}
where the multipliers $\{\xi_i^2\}$ are chosen so that the bootstrap perturbation regularizes the edge fluctuation while preserving the relevant spectral structure.

The main novelty of our approach in Section \ref{sec_proposedapproach} is that the confidence interval is constructed directly from bootstrap eigenvalues, together with a bias-correction step, rather than through direct estimation of the Tracy--Widom centering and scaling constants (see Figure \ref{illustrationintro} for an illustration). This is nonstandard and novel because the multiplier bootstrap changes not only the fluctuation scale but also the deterministic edge itself. Our analysis in Section \ref{sec_theory} shows that, with a suitable choice of multipliers, the bootstrap perturbation enlarges the edge fluctuation to a slightly larger scale where Gaussian approximation becomes tractable, while the induced edge bias can still be consistently corrected. This leads to a practical and fully data-driven procedure for edge inference, and, in turn, a threshold-free estimator for the number of spikes without a sequential testing.


The remainder of the introduction is organized as follows. In Section~\ref{sec_literature}, we review the most relevant literature on spectral edge inference, spike detection, and related bootstrap methods. In Section~\ref{sec_overview}, we summarize our main results and highlight the novelties of the proposed method.

\subsection{Summary of related literature}
\label{sec_literature}

The present paper is related to several strands of literature in high-dimensional statistics and random matrix theory. The first strand concerns the asymptotic behavior of extreme eigenvalues of large covariance matrices. The Tracy--Widom fluctuation of the largest eigenvalue for covariance matrices without spikes was established in the seminal work of \citep{Johnstone2001}, building on the original Tracy--Widom laws in \citep{tracy1994level,tracy1996orthogonal}. This theory has since been extended to broad covariance settings; see, for example, \citep{el2007tracy,FanJohnstone2022,lee2016tracy,knowles2017anisotropic,Bao2015,Ding&Yang2018,Fan2019separable}, and to the leading non-spiked eigenvalues in spiked covariance models \citep{9779233}. These works provide sharp asymptotic descriptions of the edge behavior of the largest non-spiked eigenvalues and therefore form the mathematical foundation for edge inference. However, as mentioned earlier, direct use of these results for statistical inference typically requires accurate estimation of the deterministic edge and the corresponding scaling constant, which can be difficult in practice under general population covariance structures.

A second strand studies spiked covariance models \citep{Johnstone2001,ding2021spiked} and the statistical problem of detecting spikes or estimating their number. In such models,  strong spikes separate from the bulk and generate outlier sample eigenvalues, whereas weaker spikes remain buried at the edge, a phenomenon commonly referred to as the \emph{BBP transition} \citep{BaikBenArousPeche2005}; see, for example, \citep{bloemendal2016principal,ding2021spiked,Paul2007Spiked}. Motivated by this phenomenon, many methods have been developed for estimating the number of spikes, including \citep{bai2002determining}, \citep{onatski2009formal}, \citep{passemier2014estimation}, \citep{braeken2017empirical}, \citep{dobriban2019deterministic}, \citep{9779233}, \citep{fan2022estimating}, and \citep{ke2023estimation}. Most of these procedures rely on eigenvalue gaps, thresholding rules, or sequential testing schemes. By contrast, our approach starts from a confidence interval for the deterministic spectral edge and then uses this interval to induce a threshold-free estimator for the spike number. In this sense, spike detection and spike-number estimation in our framework arise as consequences of edge inference.

A third relevant strand concerns bootstrap and resampling methods for high-dimensional spectral statistics. Such methods are attractive because they can, in principle, avoid direct estimation of delicate centering and scaling quantities. However, for edge eigenvalues, bootstrap approximation is particularly challenging because the native fluctuation occurs on the Tracy--Widom scale and is highly sensitive to the location of the deterministic edge. Existing works in related directions illustrate both the promise and the difficulty of resampling-based spectral inference for global spectral quantities in high dimensions; see, for example, \citep{Dobriban2020,LopesBlandinoAue2019,el2019non,DetteRohde2024}. For individual eigenvalues, the available results are much more limited. In particular, the works of \citep{YuZhaoZhou2025,el2019non} study the standard bootstrap methods for sample covariance matrices and establish useful asymptotic results for sufficiently large spiked eigenvalues. To the best of our knowledge, however, our paper is the first to develop a resampling-based inferential procedure for a non-spiked spectral edge in the high-dimensional regime. Moreover, our method differs from standard multiplier-bootstrap approaches in a more essential way: rather than attempting to directly reproduce the original Tracy--Widom law, we introduce a carefully chosen multiplier perturbation that deliberately regularizes the edge fluctuation to a slightly larger scale where Gaussian approximation becomes available. This  creates a new bias issue, since the multiplier perturbation shifts the deterministic edge itself, and a main novelty of our procedure is to correct this bias through a data-driven recentering step.

Overall, our work connects these three lines of literature. It is rooted in the edge theory of large covariance matrices, is motivated by inferential problems for spikes, and develops a bootstrap-based procedure tailored specifically to edge inference. Compared with existing methods, the proposed approach avoids direct estimation of Tracy--Widom centering and scaling constants, does not require distinct and very large spikes, and provides a unified framework for confidence intervals, testing, and spike-number estimation.
\subsection{Main results and novelties}
\label{sec_overview}

Our main contribution is to propose a new multiplier-bootstrap procedure for inference on the deterministic right edge $\mathsf E$ of the bulk spectrum as in Section \ref{sec_algorithm}. The method is designed to construct a confidence interval for $\mathsf E$ (cf. (\ref{eq_CI})) directly from bootstrap eigenvalues. The key idea is to introduce a carefully calibrated multiplier perturbation (cf. Definition \ref{defn_feasiblemultiplier} and Example \ref{example_one}) that regularizes the original edge fluctuation: the perturbation is chosen to be asymptotically larger than the native Tracy--Widom fluctuation, while still small enough to preserve the relevant spectral structure near the edge. This produces a bootstrap edge fluctuation at a slightly larger scale where Gaussian approximation becomes tractable.

A second novelty is that the proposed procedure is inherently bias-corrected. Unlike standard bootstrap settings, the multiplier perturbation changes not only the fluctuation scale but also the deterministic location of the edge itself. As a result, the bootstrap eigenvalues are centered around a perturbed edge $\mathsf E_{\mathrm{MB}}$ rather than the original edge $\mathsf E$. To address this issue, we introduce a data-driven bias-correction step (cf. (\ref{eq_correctionterm})) based on the difference between the sample eigenvalue and the average of the bootstrap eigenvalues. This correction allows the bootstrap eigenvalues to be properly recentered and leads to a feasible confidence interval for $\mathsf E$. For illustration, we provide a schematic in Figure~\ref{illustrationintro}.

\begin{figure}[ht]
	\includegraphics[width=16.5cm, height=6cm]{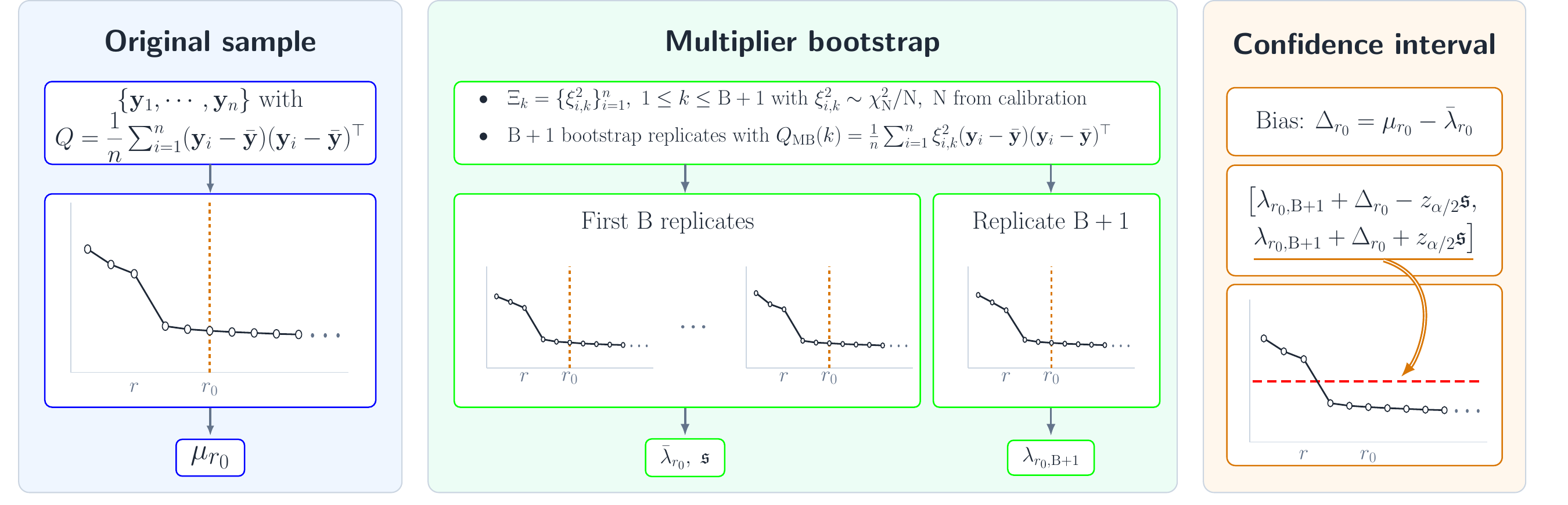} 
	\caption{{\footnotesize Schematic illustration of the proposed procedure. In the original sample, $\mu_{r_0}$ denotes the $r_0$-th largest eigenvalue of the sample covariance matrix, where $r$ is the true number of spikes and $r_0$ is a candidate upper bound chosen so that, the $r_0$-th sample eigenvalue is non-spiked. In the bootstrap step, $\xi_{i,k}^2$ are i.i.d.\ multipliers constructed as in Example~\ref{example_one} and used to form the bootstrap covariance matrices $Q_{\mathrm{MB}}(k)$, with their distribution calibrated through the tuning parameter $\sN$. The first $\mathrm{B}$ bootstrap replicates are used to estimate the bootstrap center $\bar{\lambda}_{r_0}$ and fluctuation scale $\mathfrak{s}$, while the additional replicate $\lambda_{r_0,\mathrm{B}+1}$ is used for the final inferential step. The bias-correction term $\Delta_{r_0}=\mu_{r_0}-\bar{\lambda}_{r_0}$ recenters the bootstrap eigenvalue, leading to the confidence interval for the deterministic edge $\mathsf E$. The upper endpoint of the confidence interval provides a data-driven and theoretically justified cutoff for the scree plot, and can therefore be used to estimate the number of spikes.}}
	\label{illustrationintro}
\end{figure}

The above methodological construction is justified by our first main theoretical result in Theorem \ref{thm_boostrapmain}, which shows that, after bias correction and rescaling, the largest few non-spiked bootstrap eigenvalues are asymptotically Gaussian conditionally on the data. In this sense, the multiplier bootstrap replaces the original Tracy--Widom edge fluctuation by a Gaussian fluctuation at a slightly enlarged scale $n^{-2/3}$ up to a polynomial factor. This conditional Gaussian approximation forms the main theoretical foundation of the proposed procedure; see Section \ref{sec_strate} for more details. 

We then establish the validity of the resulting confidence interval for $\mathsf E$ in Theorem~\ref{thm_power}. Under the null hypothesis corresponding to the case where the interval is constructed from non-spiked eigenvalues, we show that the interval achieves the desired asymptotic coverage probability. Under the alternative, we prove that the interval fails to cover the true edge with probability tending to one, provided that the additional spike separates from the bulk at the scale specified in our assumptions. Notably, this separation requirement is local and does not require the spikes to be distinct; in particular, the procedure continues to apply to possibly degenerate spikes as long as they exceed the edge at a scale much larger than the order of $n^{-1/6}$ (cf. (\ref{eq_spiketruedefinition})). Therefore, the same edge confidence interval yields not only valid inference for $\mathsf E$, but also a  procedure for spike detection.

As a further consequence, the proposed edge confidence interval leads naturally to a threshold-free estimator of the number of spikes (cf.\ \eqref{eq_estimator}). Since this estimator is constructed from the same edge-inference procedure, its theoretical guarantee in Corollary~\ref{cor_spikeestimation} follows from the same comparison between sample eigenvalues and the confidence interval. Equivalently, the procedure yields a data-driven and theoretically justified cutoff for the scree plot, separating eigenvalues associated with spikes from those belonging to the bulk. In this way, confidence interval construction, hypothesis testing, and spike-number estimation are unified within a single framework.

In addition, we develop a practical calibration procedure for selecting the multiplier parameter $\sN$ in Section \ref{sec_choiceofSN}. The rule is based on a Gaussian reference model with explicitly known edge behavior and is motivated by the universality principle in random matrix theory. This yields a simple data-independent choice depending only on the dimension pair $(p,n)$, making the procedure convenient to implement in practice.

Finally, numerical studies in Section \ref{sec_numerical} and real-data analyses in Section \ref{sec:real_data} demonstrate that the proposed procedure performs accurately in finite samples and compares favorably with several existing methods, especially in settings with relatively weak spikes or heterogeneous bulk spectra.

The paper is organized as follows. In Section~\ref{sec_proposedapproach}, we introduce the proposed bias-corrected multiplier-bootstrap procedure for spectral edge inference. In Section~\ref{sec_theory}, we provide the theoretical justification. Numerical studies and real-data analyses are presented in Sections~\ref{sec_numerical} and \ref{sec:real_data}, respectively, and the corresponding code can be found at \href{https://github.com/EthanHuStat/MultiplierBootstrap}{GitHub}. The supplementary file contains additional numerical results, detailed technical proofs, and further discussion of the results.


\section{Our proposed procedure}\label{sec_proposedapproach}

\subsection{Feasible multipliers and their constructions}
One of the key ingredients in high-dimensional multiplier bootstrap is the proper choice of the multipliers. Unlike in low-dimensional settings, the size of the multiplier perturbation plays a crucial role in determining the edge behavior of the bootstrapped covariance matrix. We therefore introduce a class of feasible multipliers as in Definition \ref{defn_feasiblemultiplier} below, designed to produce a perturbation that is asymptotically strong enough to regularize the edge law while still preserving the relevant spectral structure.

\begin{definition}\label{defn_feasiblemultiplier}
	\normalfont
	Throughout the paper, we call a multiplier $\xi^2$ \emph{feasible} if it satisfies the following conditions:
	\begin{enumerate}
		\item Its mean is normalized as
		\begin{equation}\label{eq_meancondition}
			\mathbb{E}\xi^2 = 1.
		\end{equation}
		
		\item  Its variance satisfies
		\begin{equation}\label{eq_variancecondition}
			\operatorname{Var}(\xi^2) \asymp n^{-1/3+\delta},
		\end{equation}
		for some constant $\delta$ such that $2\delta_* < \delta < 1/3$, where $\delta_*>0$ is a sufficiently small constant to be specified later in Definition~\ref{def_OmegaX} of the supplement.
	\end{enumerate}
\end{definition}

\begin{remark}
	The assumptions in Definition~\ref{defn_feasiblemultiplier} guarantee that the multiplier bootstrap remains within the desired perturbative regime required for high-dimensional statistical inference. The mean condition (\ref{eq_meancondition}) preserves the first-order location of the covariance structure. The variance condition (\ref{eq_variancecondition}) ensures that the multiplier-induced fluctuation is asymptotically larger than the native edge fluctuation, which follows a Tracy--Widom law on the scale $n^{-2/3}$. More precisely, the constant $\delta_*$ is tied to the edge rigidity bound of order $n^{-2/3+\delta_*}$ appearing in Definition~\ref{def_OmegaX} of the supplement. Thus, requiring $\delta>2\delta_*$ guarantees that the bootstrap fluctuation dominates the intrinsic Tracy--Widom-scale variation at the edge and yields a Gaussian limit. At the same time, the restriction $\delta<1/3$ ensures that the perturbation remains sufficiently mild so that the relevant spectral structure of the original covariance matrix is preserved.
\end{remark}

To construct feasible multipliers satisfying Definition~\ref{defn_feasiblemultiplier}, one may rescale standard probability distributions so that the resulting multiplier has mean one and variance of the desired order. A convenient way to do this is to introduce an integer-valued tuning parameter $\sN \equiv \sN(p,n)\in\mathbb{Z}$ controlling the magnitude of the perturbation. We next present the following example as concrete constructions of feasible multipliers.

\begin{example}\label{example_one}
	For Chi-squared distribution with $\sN$ degrees of freedom $\chi^2_{\sN},$ let $\xi^2=\chi^2_{\sN}/\sN.$ Then $\E\xi^2=1$ and $\operatorname{Var}(\xi^2)=2/\sN$. Hence, by choosing $\sN \asymp n^{1/3-\delta}$ appropriately for some $\delta$ as in (\ref{eq_variancecondition}), this construction yields a feasible multiplier. 
	
\end{example}

\begin{remark}
	Other feasible multiplier families can also be constructed, for instance from suitably rescaled Beta distributions. Since such alternatives require additional shape parameters to be tuned, we restrict attention to the location-rescaled Chi-squared multiplier as in Example \ref{example_one} in this paper.
\end{remark}

The practical selection of the tuning parameter $\sN$ is deferred to Section \ref{sec_choiceofSN}, where we calibrate it under a Gaussian reference model. This is motivated by the universality principle in random matrix theory, which suggests that the edge behavior of large covariance matrices is largely independent of the precise entry distribution. Thus, Gaussian calibration provides a natural and tractable rule for choosing $\sN$.

\subsection{Edge inference via bias-corrected multiplier-bootstrap eigenvalues}\label{sec_algorithm}

We now introduce our inferential procedure based on multiplier bootstrap as illustrated in Figure \ref{illustrationintro}. The procedure relies on a calibrated multiplier perturbation to regularize the edge fluctuation and produce a Gaussian approximation, together with a bias-correction step that compensates for the systematic shift between the bootstrap and sample eigenvalues. This yields a fully data-driven confidence interval and testing procedure for the spectral edge $\mathsf{E}$. The practical selection of the tuning parameter $\sN$ will be discussed in Section~\ref{sec_choiceofSN}. 

Our inferential procedure can be described in the following four steps. 

\begin{enumerate}
	\item Choose $\sN$ as in Section \ref{sec_choiceofSN} and construct multipliers as in Example \ref{example_one}. 
	\item For a given large integer $\mathrm{B}$, we independently generate $\mathrm{B}+1$ sets of multipliers \vspace*{-10pt}
	\begin{equation}\label{eq_multipliernumberofB}
		\vspace*{-10pt}
		\Xi_k = \{\xi^2_{i,k}\}_{i=1}^n, \qquad 1 \leq k \leq \mathrm{B}+1.
	\end{equation}
	Based on these multipliers, we construct $\mathrm{B}+1$ bootstrap matrices as in \eqref{eq_boostrap} by \vspace*{-10pt}
	\begin{equation*}
		\vspace*{-10pt}
		Q_{\mathrm{MB}}(k) := \frac{1}{n}\sum_{i=1}^n \xi^2_{i,k} (\yb_i-\bar{\yb}) (\yb_i-\bar{\yb})^\top, 
		\qquad 1 \leq k \leq \mathrm{B}+1.
	\end{equation*}
	Let $\lambda_{r_0,k}$ be the $r_0$-th largest eigenvalue of $Q_{\mathrm{MB}}(k)$ for each $1 \leq k \leq \mathrm{B}+1$, where $r_0$ is a  large integer serving as an upper bound for the true number of spikes $r$.
	\item Let $\bar{\lambda}_{r_0}$ denote the sample mean of $\{\lambda_{r_0,k}\}_{k=1}^{\mathrm{B}}$, and define \vspace*{-10pt}
	\begin{equation*}
		\vspace*{-10pt}
		\mathfrak{s}:=\sqrt{\frac{1}{\mathrm{B}}\sum_{k=1}^{\mathrm{B}}(\lambda_{r_0,k}-\bar{\lambda}_{r_0})^2},
	\end{equation*}
	to be their sample standard deviation. Recall $\mu_{r_0}$ is the $r_0$-th largest eigenvalue of the sample covariance matrix  $Q$ in (\ref{eq_matrixsample}). Estimate the bias of $\bar{\lambda}_{r_0}$ to $\mu_{r_0}$ by \vspace*{-10pt} \begin{equation}\label{eq_correctionterm}
		\vspace*{-10pt}
		\Delta_{r_0}:=\mu_{r_0}-\bar{\lambda}_{r_0}.
	\end{equation} 
	\item Given a nominal level $\alpha$, the confidence interval for the spectral edge $\mathsf{E}$ is given by \vspace*{-10pt}
	\begin{equation}\label{eq_CI}
		\vspace*{-10pt}
		\left[ \widehat{\mathsf{E}}^-,   \widehat{\mathsf{E}}^+\right]:=\left[ 
		\lambda_{r_0,\mathrm{B}+1} +\Delta_{r_0}- z_{\alpha/2}\,{\mathfrak{s}}, \;
		\lambda_{r_0,\mathrm{B}+1} 
		+\Delta_{r_0}+ z_{\alpha/2}\,\mathfrak{s}
		\right],
	\end{equation}
	where $z_{\alpha/2}$ denotes the $\alpha/2$ upper quantile of a standard Gaussian random variable.
\end{enumerate}

\begin{remark}\label{rmk_rmkaftermaininferenceprocedure}
	The procedure separates the bootstrap replicates into two distinct roles. The first $\mathrm{B}$ bootstrap eigenvalues are used to estimate the center and fluctuation scale of the bootstrap distribution through $\bar\lambda_{r_0}$ and $\mathfrak{s}$, while the additional replicate $\lambda_{r_0,\mathrm{B}+1}$ is reserved for the final inferential step. In particular, the correction term (\ref{eq_correctionterm}) serves as a data-driven recentering, or bias-correction, term that aligns the bootstrap eigenvalues with the observed sample eigenvalue. Consequently, the quantity $\lambda_{r_0,\mathrm{B}+1}+\Delta_{r_0}$ can be regarded as a recentered bootstrap counterpart of $\mu_{r_0}$, while the width of the interval is governed by the estimated bootstrap fluctuation scale $\mathfrak{s}$, which is of order $\mathrm{O}(n^{-2/3+\delta/2})$ under condition (\ref{eq_variancecondition}). The requirement $\mathrm{B}\gg 1$ ensures that the Monte Carlo error arising from the estimation of $\bar\lambda_{r_0}$ and $\mathfrak{s}$ is asymptotically negligible relative to the edge fluctuation scale. 
	
	Finally, for practical implementation, we take $r_0=\lfloor \mathsf{C}\log n\rfloor,$ for some constant $\mathsf{C}$, for example, $\mathsf{C}=3$. We emphasize that our theory assumes the true number of spikes $r$ is bounded, so this choice of $r_0$ is introduced purely as a practical device for automation. In the simulation studies and real-data analyses below, it leads to good empirical performance.
\end{remark}

We note that the confidence interval in \eqref{eq_CI} can also be used to estimate the number of spikes, since its upper endpoint provides a data-driven inferential estimate of the bulk edge. This removes the need for a fully sequential testing scheme as in \citep{9779233,passemier2014estimation,onatski2009formal}, and makes the procedure computationally efficient. Motivated by this observation, we estimate the number of spikes by \vspace*{-10pt}
\begin{equation}\label{eq_estimator}
	\vspace*{-10pt}
	\widehat r
	:=
	\#\left\{
	1\le i\le r_0:
	\mu_i > \lambda_{r_0,\mathrm{B}+1}+\Delta_{r_0}
	+ z_{\alpha/2}\mathfrak{s}
	\right\}.
	%
\end{equation}

That is, $\widehat r$ counts the number of sample eigenvalues that exceed the upper endpoint of the estimated confidence interval for the spectral edge. This is consistent with the interpretation that spiked eigenvalues are separated from the bulk, whereas non-spiked eigenvalues fluctuate around the edge. Therefore, the proposed thresholding rule yields a natural and unified estimator of the number of spikes, and can also be viewed as providing a data-driven and theoretically justified cutoff for the scree plot.

%

\subsection{Calibration-based selection of $\sN$}\label{sec_choiceofSN}

In this section, we propose a practical way to choose the multiplier parameter $\sN$ by calibrating it under a reference model whose edge can be explicitly calculated. This strategy is motivated by the universality phenomenon in random matrix theory \citep{Akemann2011OxfordRMT,ErdosYau2017}: after appropriate normalization, the edge behavior of large covariance matrices is often largely insensitive to the fine details of the underlying entry distribution, and depends primarily on the dimension pair $(p,n)$ and the local spectral structure near the edge. Therefore, rather than tuning $\sN$ directly from the observed spectrum, we calibrate it under a simple reference model and then transfer the resulting choice to the target problem. Similar ideas have been widely explored in the high-dimensional inference literature; see, for example, \citep{Bao2015,9779233,passemier2014estimation,onatski2009formal}.
More specifically, for each fixed dimension pair $(p,n)$, we select a data-independent value, denoted by $\widehat{\sN}\equiv \widehat{\sN}(p,n)$, using a Wishart matrix \citep{anderson2003introduction} associated with a $p\times n$ data matrix whose entries are i.i.d.\ $\mathcal{N}(0,1)$. For this reference model, the rightmost spectral edge of its limiting spectral distribution is given explicitly by \citep{Johnstone2001} \vspace*{-10pt}
\begin{equation}\label{eq_edge}
	\vspace*{-10pt}
	\mathsf E_{\rm I}
	=
	\left(1+\sqrt{p/n}\right)^2.
\end{equation}

This calibrated value is then used in all subsequent tests and confidence intervals with the same dimension pair. The calibration procedure is summarized as follows.
\begin{enumerate}
	\item For some large $\mathrm{R}_{\rm cal}$ and each dimension pair $(p,n)$, generate $\mathrm{R}_{\rm cal}$ Gaussian matrices \vspace*{-10pt}
	$$
	\vspace*{-10pt}
	\mathbf W_{\ell}
	=
	(\wb_{\ell,1},\ldots,\wb_{\ell,n})
	\in \mathbb R^{p \times n},
	\qquad 1\le \ell\le \mathrm{R}_{\rm cal},
	$$
	where $
	\wb_{\ell,i}
	\stackrel{\mathrm{i.i.d.}}{\sim}
	\mathcal N(0,\mathbf I_p),$ for
	$1\le i\le n, \ 1\le \ell\le \mathrm{R}_{\rm cal}.$
	
	\item Let $\mathcal G$ be a grid of candidate values for $\sN$. To ensure compatibility with \eqref{eq_variancecondition}, we consider
	\begin{equation*}
		\mathcal G
		=
		\left\{
		\left\lfloor \frac{n^{1/3}}{5} \right\rfloor+1,\,
		\left\lfloor \frac{n^{1/3}}{5} \right\rfloor+2,\,
		\ldots,\,
		\left\lfloor 5n^{1/3} \right\rfloor
		\right\},
	\end{equation*}
	and let $\mathcal A$ denote a set of nominal levels used for calibration, for example, $\mathcal A=\{0.01, 0.05,0.10\}.$ For each calibration replication $1\le \ell\le \mathrm{R}_{\rm cal}$, each candidate $\sN'\in\mathcal G$, and each nominal level $\alpha\in\mathcal A$, we apply the multiplier bootstrap procedure in Section~\ref{sec_algorithm} to the sample $\mathbf W_\ell$ and construct the associated confidence interval for $\mathsf E_{\rm I}$ in \eqref{eq_edge} according to \eqref{eq_CI} with $r_0=1$. We denote this interval by\vspace*{-10pt}
	\begin{equation*}
		\vspace*{-10pt}
		\left[
		\widehat{\mathsf E}^{-}_{\ell,\alpha}(\sN'),
		\widehat{\mathsf E}^{+}_{\ell,\alpha}(\sN')
		\right].
	\end{equation*}

	\item For each $\sN'\in\mathcal G$ and $\alpha\in\mathcal A$, we calculate the empirical non-coverage probability by \vspace*{-10pt}
	$$
	\vspace*{-10pt}
	\widehat\alpha(\sN')
	=
	\frac{1}{\mathrm{R}_{\rm cal}}
	\sum_{\ell=1}^{\mathrm{R}_{\rm cal}}
	\mathbf 1
	\left\{
	\mathsf E_{\rm I}
	\notin
	\left[
	\widehat{\mathsf E}^{-}_{\ell,\alpha}(\sN'),
	\widehat{\mathsf E}^{+}_{\ell,\alpha}(\sN')
	\right]
	\right\}.
	$$
	\item We choose the calibrated value $\widehat\sN$ by minimizing the aggregate calibration error over the levels in $\mathcal A$: \vspace*{-10pt}
	\begin{equation*}
		\vspace*{-10pt}
		\widehat\sN
		=
		\arg\min_{\sN'\in\mathcal G}
		\sum_{\alpha\in\mathcal A}
		\left|
		\widehat\alpha(\sN')-\alpha
		\right|.
	\end{equation*}
	If the minimizer is not unique, we select the smallest one.
\end{enumerate}

The resulting value $\widehat\sN$ is therefore a lookup value depending only on the pair $(p,n)$. In all subsequent applications with the same $(p,n)$, we set $\sN=\widehat\sN$. Thus, the calibration is performed once for each $(p,n)$ and does not need to be repeated for every new dataset. This makes the procedure computationally convenient and avoids tuning $\sN$ directly from the observed spectrum, which could otherwise introduce additional instability.



\section{Theoretical justification}\label{sec_theory}
In this section, we provide the theoretical justification of our results. Recall that for any symmetric matrix $H \in \mathbb{R}^{n \times n},$ its empirical spectral distribution (ESD) is defined as \vspace*{-10pt}
\begin{equation*}
	\vspace*{-10pt}
	\mu_n(H):=\frac{1}{n} \sum_{i=1}^n \delta_{\lambda_i(H)},
\end{equation*}
where $\delta_x$ is the Dirac-Delta function.

As throughout the paper we focus on the sample covariance matrix and its bootstrap counterpart, for the purpose of theoretical development we may, without loss of generality, assume that the random vectors in (\ref{eq_populationfirstdefinition}) have mean zero and therefore analyze the corresponding uncentered covariance matrices\vspace*{-10pt}
\begin{equation}\label{eq_rmtmodel}
	\vspace*{-10pt}
	Q_{\text{Sam}}=n^{-1}\Sigma^{1/2}ZZ^\top \Sigma^{1/2}, \ \  Q_{\text{MB}}=n^{-1}\Sigma^{1/2}Z \Xi^2 Z^\top \Sigma^{1/2},
\end{equation}
where $Z=(\zb_i) \in \mathbb{R}^{p \times n}$ and $\Xi^2$ is a $n \times n$ diagonal matrix contains the multipliers $\{\xi_i^2\}.$ When the random vectors have nonzero means, one may first center the data matrix as in (\ref{eq_boostrap}), after which all of the subsequent analysis carries over directly, as discussed in Section 9 of \citep{bloemendal2016principal}. To avoid repetition, we state the following assumption once and use it throughout the paper. This assumption is standard and has been widely adopted in the literature \citep{YaoZhengBai2015}.

\begin{assumption}\label{assum_X}
	For the matrices defined in (\ref{eq_rmtmodel}), we introduce the notation $X=Z/\sqrt{n}.$ For notational simplicity, we assume that the entries of $X=(x_{ij})$ are centered i.i.d. random variables satisfying, for $1\le i\le p$ and $1\le j\le n$, \vspace*{-10pt}
	\begin{equation}\label{eq_standard1n}
		\vspace*{-10pt}
		\mathbb{E}x_{ij}=0,\qquad \mathbb{E}x_{ij}^{2}=\frac{1}{n}.
	\end{equation}
	In addition, we assume that for every $k\in\mathbb{N}$, there exists a constant $C_k>0$, independent of $n$, such that
	$\mathbb{E}\bigl|\sqrt{n}x_{ij}\bigr|^{k}\le C_k.$
\end{assumption}

\subsection{Some background in random matrix theory}\label{sec_background}
In this section, we introduce some background in random matrix theory. For the possibly spiked population covariance matrix $\Sigma$ in (\ref{eq_populationfirstdefinition}), we assume that there exists some deterministic matrix $\Sigma_0$ which admits the spectral decomposition that \vspace*{-10pt}
\begin{equation}\label{eq_basepopulationcovariancematrix}
	\vspace*{-10pt}
	\Sigma_0=\sum_{i=1}^p \sigma_i \bv_i \bv_i^\top. 
\end{equation}
Based on $\Sigma_0,$ we follow \citep{ding2021spiked,ding2024eigenvector} and suppose that $\Sigma$ adapts the following spectral decomposition  \vspace*{-10pt}
\begin{equation}\label{eq_spikedmatrix}
	\vspace*{-10pt}
	\Sigma=\sum_{i=1}^p \widetilde{\sigma}_i \bv_i \bv_i^\top,
\end{equation}
where $\widetilde{\sigma}_i \equiv \sigma_i$ for $r+1 \leq i \leq p$, and $\widetilde{\sigma}_i$, $1 \leq i \leq r$, are the possible spikes when $r\geq 1$. We also allow the case $r=0$, in which $\Sigma=\Sigma_0$ has no spikes.

On the one hand, it is well known that the limiting  spectral distribution (LSD) of  $\Sigma_0^{1/2}XX^\top \Sigma_0^{1/2}$ is characterized by the generalized Marchenko–Pastur law \citep{BaiSilverstein2010}, whose density function is denoted by $\varrho$. We define\vspace*{-10pt}
\begin{equation}\label{eq_edgeonehaa}
	\vspace*{-10pt}
	\mathsf{E} := \sup\{ x \in \mathbb{R}_+ : x \in \operatorname{supp}(\varrho) \}.
\end{equation}
Note that $\mathsf{E}$ admits an explicit characterization. Define \vspace*{-10pt}
\begin{equation}\label{eq_mp}
	\vspace*{-10pt}
	f(x) := -\frac{1}{x} + \frac{1}{n} \sum_{i=1}^p \frac{1}{x + \sigma_i^{-1}}.
\end{equation}
Let $\mathsf{b}$ denote the largest critical point of $f$, that is, the largest solution to $f'(x)=0$. It is well known (see, e.g., Section 2.2 of \citep{knowles2017anisotropic}) that $-\sigma_1^{-1} < \mathsf{b} < 0$ and \vspace*{-10pt}
\begin{equation}\label{eq:edge}
	\vspace*{-10pt}
	\mathsf{E} = f(\mathsf{b}).
\end{equation}

On the other hand, the bootstrap counterpart introduces a small perturbation to the spectrum of $\Sigma_0^{1/2}XX^\top\Sigma_0^{1/2}$ through the multiplier matrix $\Xi^2$ (cf. \eqref{eq_rmtmodel}). More precisely, as shown in Theorem~\ref{lem_solutionsystem1} of the supplement, the LSD of $\Sigma_0^{1/2}X \Xi^2 X^{\top}\Sigma_0^{1/2},$ whose density is denoted by $\varrho_{\mathrm{MB}}$, is characterized by the multiplicative convolution of the generalized Marchenko--Pastur law and the multiplier distribution. We therefore define \vspace*{-10pt}
\begin{equation}\label{eq_edgetwohaa}
	\vspace*{-10pt}
	\mathsf{E}_{\mathrm{MB}}
	:=
	\sup\{x\in\mathbb{R}_+ : x\in \operatorname{supp}(\varrho_{\mathrm{MB}})\}.
\end{equation}

In general, the bootstrap perturbation induces a small discrepancy between the original spectral edge $\mathsf E$ in \eqref{eq_edgeonehaa} and the bootstrap edge $\mathsf{E}_{\mathrm{MB}}$ in \eqref{eq_edgetwohaa}. Denote the resulting edge bias by \vspace*{-10pt}
\begin{equation}\label{eq_bias}
	\vspace*{-10pt}
	\Delta_{\mathrm{edge}}
	:=
	\mathsf E-\mathsf E_{\mathrm{MB}}.
\end{equation}

\begin{remark}\label{eq_biastermrmk}
	As will be shown in the technical proof in Section~\ref{appendix_proofoftheorem32} of the supplement, under (\ref{eq_variancecondition}), the bias term satisfies $\Delta_{\mathrm{edge}} \asymp \operatorname{Var}(\xi^2) \asymp n^{-1/3+\delta}.$ As discussed in Remark~\ref{rmk_rmkaftermaininferenceprocedure} and will be made explicit in Theorem~\ref{thm_boostrapmain}, this bias is not negligible relative to the fluctuation scale $n^{-2/3+\delta/2}$. Consequently, it must be corrected using \eqref{eq_correctionterm}. The resulting correction error is of smaller order than $n^{-2/3+\delta/2}$, as justified in (\ref{eq_biasestimation}) of the supplement. 
\end{remark}

Finally, the following assumptions are adopted throughout this paper and are commonly used in the high-dimensional data analysis literature. 

\begin{assumption}\label{assu_modelassumption}
	We assume that for some constant $0<\tau<1:$
	\begin{enumerate}[(i)]
		\item For high-dimensionality, we assume that $\tau \leq p/n \leq \tau^{-1}.$
		\item For non-spiked population covariance matrix $\Sigma_0$ in (\ref{eq_basepopulationcovariancematrix}), we assume  $\tau \leq \sigma_p \leq \sigma_{p-1} \leq \cdots \leq \sigma_1 \leq \tau^{-1}. $ We also assume Assumption \ref{assum_technique} of the supplement holds. 
		\item For the spikes in (\ref{eq_spikedmatrix}), we assume that $r \geq 0$ is fixed, and when $r \geq 1,$ for all $1 \leq i \leq r$ \vspace*{-10pt}
		\begin{equation}\label{eq_spiketruedefinition}
			\vspace*{-10pt}
			\widetilde{\sigma}_i>-\mathsf{b}^{-1}+\mathfrak{t}, \ \text{where} \ \mathfrak{t} \gtrsim n^{-1/6+\kappa}, 
		\end{equation}
		where $\kappa$ is some positive constant satisfying $\kappa>\delta/2$, with $\delta$ as in \eqref{eq_variancecondition}.
	\end{enumerate}
\end{assumption}

\begin{remark}
	A few remarks on Assumption~\ref{assu_modelassumption} are in order. Condition~(i) is a standard assumption used in the literature to characterize the high-dimensional regime; see, for example, \citep{YaoZhengBai2015,Johnstone2007}. Condition~(ii) imposes regularity assumptions on the non-spiked part of the population covariance matrix. Assumption~\ref{assum_technique}, which will be introduced in the supplement after some necessary notation is developed, is also standard in random matrix theory and is satisfied by many commonly used covariance models \citep{knowles2017anisotropic,el2007tracy, BaiSilverstein2010,Karoui2009,lee2016tracy,Fan2019separable}. At a high level, these assumptions ensure the regular square-root behavior of the density of the limiting spectral distribution associated with the matrices in \eqref{eq_rmtmodel}.
	
	Condition~(iii) characterizes what constitutes a spike in the multiplier-bootstrap setting. Roughly speaking, the proposed procedure is able to identify a spike once \eqref{eq_spiketruedefinition} holds. Although the separation scale $n^{-1/6+\kappa}$ is larger than the classical BBP scale $n^{-1/3+\kappa}$ for sample covariance matrices; see \citep{bloemendal2016principal,ding2021spiked,DingYang2021}, it represents the price paid for using the multiplier bootstrap. Nevertheless, this requirement is still substantially weaker than those appearing in the multiplier bootstrap literature \citep{el2019non,boostsanalysis}, where the corresponding quantity $\mathfrak t$ is required to diverge at the rates $n^{1/2+\upsilon}$ and $n^{1/4+\upsilon}$, respectively, for some constant $\upsilon>0$. Moreover, in much of the existing literature on spike detection, $\mathfrak t$ is typically required to be at least of constant order, or even divergent. In this sense, the proposed method is able to handle weaker spikes, as will be illustrated numerically in Section~\ref{sec_simudetectionofspike}.
\end{remark}


\subsection{Asymptotic normality of the largest non-spiked bootstrap eigenvalues}
In this section, we establish the asymptotic distribution of the largest few non-spiked bootstrap eigenvalues. This result provides the  theoretical foundation for  the proposed bootstrap procedure, and in particular for the confidence interval for the edge $\mathsf E$ in \eqref{eq_CI}.

We first introduce some notation. Recall that $\varrho$ denotes the limiting spectral density associated with \eqref{eq_edgeonehaa}. Let
\begin{equation*}
	m(z):=\int \frac{1}{x-z}\varrho(x)\,\mathrm{d}x,
\end{equation*}
be its Stieltjes transform, where $z\in\mathbb C_+$ lies in the upper half-plane. Throughout the paper, we adopt the convention $m(\mathsf E):=\lim_{\eta\downarrow 0}m(\mathsf E+\mathrm{i}\eta).$ Using this notation together with the eigenvalues $\{\sigma_i\}$ in \eqref{eq_basepopulationcovariancematrix}, we further define
\begin{equation*}
	\mathfrak C_k
	=
	\frac{1}{n}\sum_{i=1}^p
	\frac{\sigma_i^k}{\mathsf E^2\bigl(1+\sigma_i m(\mathsf E)\bigr)^2},
	\qquad k=1,2.
\end{equation*}
We then set \vspace*{-10pt}
\begin{equation}\label{eq_def_v}
	\vspace*{-10pt}
	\mathsf v
	=
	\left(\frac{\mathfrak C_1}{\mathfrak C_2}\right)^2
	\bigl(\mathsf E \times m(\mathsf E)\bigr)^2
	\operatorname{Var}(\xi^2),
\end{equation}
where $\operatorname{Var}(\xi^2)$ denotes the variance of the multipliers in Definition~\ref{defn_feasiblemultiplier}.

The main result of this section is summarized in the following theorem. Recall the edge bias term $\Delta_{\mathrm{edge}}$ defined in \eqref{eq_bias}, and let $\{\lambda_k\}$ denote the eigenvalues of the bootstrap sample covariance matrix in \eqref{eq_rmtmodel}.
\begin{theorem}\label{thm_boostrapmain}
	Consider the multipliers in Example~\ref{example_one}, and assume that they satisfy Definition~\ref{defn_feasiblemultiplier}. Suppose Assumptions \ref{assum_X} and \ref{assu_modelassumption} hold. Then, for any fixed integer $\mathrm{K}>0$ and any $1\le i\le \mathrm{K}$, the following holds with probability at least $1-\mathrm{o}(1)$: \vspace*{-10pt}
	\begin{equation*}
		\vspace*{-10pt}
		\sup_{x\in\mathbb{R}}
		\left|
		\mathbb{P}
		\left(
		\sqrt{n\mathsf{v}^{-1}}
		\bigl(\lambda_{r+i}-\mathsf{E}+\Delta_{\mathrm{edge}}\bigr)
		\le x
		\,\middle|\, X
		\right)
		-\Phi(x)
		\right|
		=
		\mathrm{o}(1),
	\end{equation*}
	where $\mathsf{v}$ is defined in (\ref{eq_def_v}) and $\Phi(x)$ denotes the cumulative distribution function of a standard Gaussian random variable.
\end{theorem}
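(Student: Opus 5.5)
Conditioning on $X$, write $\xi_j^2 = 1+\zeta_j$ with $\mathbb E\zeta_j=0$ and $\operatorname{Var}(\zeta_j)=\operatorname{Var}(\xi^2)\asymp n^{-1/3+\delta}$. The bootstrap matrix is
\[
Q_{\mathrm{MB}}=Q_{\rm Sam}+\Sigma^{1/2}X\,\mathrm{diag}(\zeta_j)\,X^\top\Sigma^{1/2},
\]
a perturbation that is linear in independent, small random weights. Our plan is a first-order perturbation expansion of each edge eigenvalue, followed by a conditional Lindeberg CLT in the $\zeta_j$.

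\textbf{Step 1: spikes and rigidity.} Work on a high-probability event $\Omega_X$ for $X$ on which three things hold. First, the non-spiked sample eigenvalues $\mu_{r+i}$ satisfy edge rigidity, $|\mu_{r+i}-\mathsf E|\le n^{-2/3+\delta_*}$, and the anisotropic local law holds. Second, the sample eigenvectors $\mathbf u_{r+i}$, $i\le \mathrm K$, are delocalized in the sense $\mathbf u^\top \Sigma^{1/2}\mathbf x_j\mathbf x_j^\top\Sigma^{1/2}\mathbf u\prec n^{-1}$. Third, the spikes satisfying \eqref{eq_spiketruedefinition} produce outliers separated from $\mathsf E$ by at least $n^{-1/3+2\kappa}\gg n^{-1/3+\delta}$. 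The last point guarantees that, after the multiplier perturbation (whose operator-size effect on edge eigenvalues is $O(n^{-1/3+\delta})$, as we check through a Weyl-type or resolvent argument), the outliers stay separated and the index $r+i$ still labels the $i$-th bulk eigenvalue of $Q_{\mathrm{MB}}$.

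\textbf{Step 2: Gaussian approximation.} We do not perturb around $Q_{\rm Sam}$ directly, because the shift $\Delta_{\rm edge}\asymp n^{-1/3+\delta}$ is too large for eigenvector perturbation theory near the edge. Instead we compare two resolvents: that of $Q_{\mathrm{MB}}$, and the deterministic equivalent from Theorem~\ref{lem_solutionsystem1} of the supplement, whose edge is $\mathsf E_{\rm MB}$. Concretely, $\lambda_{r+i}$ is characterized through the equation $1+\sigma m_{\rm MB}(\lambda)=\dots$. Expanding the self-consistent equation for the bootstrap Stieltjes transform near $\mathsf E_{\rm MB}$ to second order in $\zeta$ gives
\[
\lambda_{r+i}-\mathsf E_{\rm MB}=\mu_{r+i}-\mathsf E+\frac{\mathfrak C_1}{\mathfrak C_2}\,\mathsf E\,m(\mathsf E)\,\frac1n\sum_j \zeta_j+\mathcal E .
\]
The error $\mathcal E$ will be shown to be $o(n^{-2/3+\delta/2})$ conditionally in probability. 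The deterministic second-order term $\tfrac1n\sum_j\mathbb E\zeta_j^2\cdot(\cdots)$ is exactly what builds $\Delta_{\rm edge}$. The random part $\tfrac1n\sum_j(\zeta_j^2-\mathbb E\zeta_j^2)$ has size $n^{-1/2}\cdot n^{-1/3+\delta}\ll n^{-2/3+\delta/2}$, using $\delta<1/3$.

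\textbf{Step 3: the leading term.} The linear statistic $n^{-1}\sum_j\zeta_j$ has conditional variance $n^{-1}\operatorname{Var}(\xi^2)$. Its coefficient comes from the fact that the weights seen by a rank-one derivative $\partial\lambda/\partial\zeta_j=\mathbf u^\top\Sigma^{1/2}\mathbf x_j\mathbf x_j^\top\Sigma^{1/2}\mathbf u$ average, via the isotropic local law, to the deterministic value $\mathsf E\,m(\mathsf E)\,\mathfrak C_1/\mathfrak C_2$. Fluctuations of these weights around their mean contribute only $o(n^{-2/3+\delta/2})$, by delocalization and a variance computation. Since $\chi^2_{\sN}$ weights have all moments, Lyapunov's CLT yields the conditional normal law with variance $\mathsf v/n$. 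Finally, $|\mu_{r+i}-\mathsf E|\le n^{-2/3+\delta_*}\ll n^{-2/3+\delta/2}$ because $\delta>2\delta_*$, so the $\mu_{r+i}-\mathsf E$ term is negligible and the Kolmogorov distance statement follows by Slutsky's lemma.

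\textbf{Main obstacle.} The hard part is justifying the expansion of Step 2 uniformly near the edge, where the eigenvector derivatives are evaluated along an interpolation $Q_{\rm Sam}+t\,\Sigma^{1/2}X\mathrm{diag}(\zeta)X^\top\Sigma^{1/2}$ whose edge moves by $n^{-1/3+\delta}$. We would handle this by a continuity (interpolation in $t$) argument. At each $t$ we apply a local law for the interpolated matrix, which is a covariance model with weights $1+t\zeta_j$. This gives rigidity at scale $n^{-2/3+\epsilon}$ and delocalization along the whole path. We then integrate $\frac{d}{dt}\lambda_{r+i}(t)$ and identify its mean with $\frac{d}{dt}\mathsf E_{\rm MB}(t)$ plus the linear fluctuation term. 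Controlling the second derivative in $t$, which involves level repulsion near the edge, is where we expect the most work.
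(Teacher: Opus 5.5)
There is a genuine gap, and it sits in Step~3. You claim that fluctuations of the weights $w_j=(\mathbf x_j^\top\Sigma^{1/2}\mathbf u)^2$ around their mean contribute $o(n^{-2/3+\delta/2})$ ``by delocalization and a variance computation''. This is false.

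\textbf{Why the weights do not concentrate.} Write $w_j=\mu_{r+i}v_j^2$, where $\mathbf v=X^\top\Sigma^{1/2}\mathbf u/\sqrt{\mu_{r+i}}$ is a unit right singular vector of $\Sigma^{1/2}X$. Delocalization bounds each $w_j$ by $n^{-1+\epsilon}$, but it does not make $nv_j^2$ concentrate. For Gaussian $X$, right-rotation invariance makes $\mathbf v$ uniform on the sphere, so $\sum_j(w_j-\bar w)^2\approx 2\mu_{r+i}^2/n$. Conditionally on $X$, the neglected term $\sum_j(w_j-\bar w)\zeta_j$ then has variance about $2\mathsf E^2\operatorname{Var}(\xi^2)/n$. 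That is twice the variance of the term $\bar w\sum_j\zeta_j$ you keep, so a first-order expansion at $t=0$ predicts the wrong Gaussian variance.

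\textbf{Why the interpolation does not rescue it.} Perturbation theory genuinely breaks down here: second-order terms with $|\mu_{r+i}-\mu_k|\sim n^{-2/3}$ are of size $n^{-2/3+\delta}$, larger than the target scale. Along your interpolation, the $X$-dependent part can only wash out through decorrelation of $\mathbf u(t)$ once the accumulated perturbation exceeds the level spacing. That is an eigenvector-dynamics statement which neither rigidity nor delocalization supplies, so the obstacle is not where you placed it.

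\textbf{Your coefficient is also inconsistent.} Because $w_j\ge 0$ and $\sum_j w_j=\mathbf u^\top Q_{\mathrm{Sam}}\mathbf u=\mu_{r+i}$, the slope of $\frac1n\sum_j\zeta_j$ is $\mu_{r+i}\approx\mathsf E$. This is what the exact identity $Q_{\mathrm{MB}}=(1+\epsilon)Q_{\mathrm{Sam}}$ for $\xi_j^2\equiv 1+\epsilon$ requires. Your $\frac{\mathfrak C_1}{\mathfrak C_2}\mathsf E\,m(\mathsf E)$ is instead negative, since $m(\mathsf E)<0$, and it scales like $c^{-1}$ under $\Sigma_0\mapsto c\Sigma_0$.

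\textbf{The missing idea.} The paper's argument removes eigenvectors entirely.
\begin{enumerate}
\item Conditionally on $\Xi$, $Q^0_{\mathrm{MB}}=\Sigma_0^{1/2}X\Xi^2X^\top\Sigma_0^{1/2}$ is a separable covariance matrix with fixed weights. Its top eigenvalues are therefore rigid to within $n^{-2/3+\epsilon}$ around $\widehat{\mathsf E}_{\mathrm{MB}}$. This is the edge of the deterministic equivalent built from the \emph{empirical} multiplier law in \eqref{eq_systemequation2}, not the deterministic $\mathsf E_{\mathrm{MB}}$ you compare with.
\item By Fubini, this rigidity transfers to ``typical $X$, high $\Xi$-probability''. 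Since $\widehat{\mathsf E}_{\mathrm{MB}}$ does not involve $X$, every fluctuation above scale $n^{-2/3}$ is a function of $\Xi$ alone.
\item Stability of the $2\times2$ edge system (Lemmas \ref{lem_edgeconvergence1} and \ref{lem_stabilityargument}) gives $\widehat{\mathsf E}_{\mathrm{MB}}-\mathsf E_{\mathrm{MB}}=(\mathsf C_2/\mathsf C_1)\mathcal X+\mathrm O(n^{-1+2c_{\Xi,1}})$. This is an average of i.i.d.\ bounded multiplier transforms, and Berry--Esseen finishes.
\item Spikes are absorbed by finite-rank interlacing, since the top few eigenvalues of $Q^0_{\mathrm{MB}}$ all lie within $n^{-2/3+\epsilon}$ of $\widehat{\mathsf E}_{\mathrm{MB}}$. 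No outlier tracking is needed, so neither is your Step~1 Weyl bound. A Weyl bound would in any case only give $\max_j|\zeta_j|\asymp n^{-1/6+\delta/2}$ up to logarithms, not $n^{-1/3+\delta}$.
\end{enumerate}

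\textbf{Do not match the printed $\mathsf v$.} The edge equations \eqref{eq_systemequation_edge_mp} yield $\mathfrak C_1=\mathsf E^{-1}$ and $\mathfrak C_2=(\mathsf E\,m(\mathsf E))^{-2}$. Hence the slope $(\mathsf C_2/\mathsf C_1)(1+x_0)^{-2}\to(\mathfrak C_2/\mathfrak C_1)(\mathsf E\,m(\mathsf E))^2$ of the paper's own linearization also equals $\mathsf E$, and the resulting variance is $\mathsf E^2\operatorname{Var}(\xi^2)/n$. The printed formula \eqref{eq_def_v} instead reduces to $\mathsf E^4m(\mathsf E)^6\operatorname{Var}(\xi^2)$, which is not even homogeneous of degree two in $\Sigma_0$. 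So do not reverse-engineer your coefficient from it.
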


Theorem~\ref{thm_boostrapmain} shows that, conditionally on the observed data, the largest few non-spiked bootstrap eigenvalues are asymptotically Gaussian after recentering by the deterministic edge $\mathsf E$ together with the bias term $\Delta_{\mathrm{edge}}$, and after scaling by the variance parameter $\mathsf v$. In other words, the multiplier perturbation regularizes the intrinsic edge fluctuation and replaces the Tracy--Widom-type behavior at the original scale by a Gaussian limit at the enlarged scale induced by the multipliers. This Gaussian approximation is the key mechanism underlying the proposed confidence interval in \eqref{eq_CI}.

Several remarks are in order. First, the theorem applies to the largest few non-spiked eigenvalues, rather than only to the single eigenvalue $\lambda_{r+1}$. This is important for subsequent applications to inference on the number of spikes. Second, the appearance of the bias term $\Delta_{\mathrm{edge}}$ reflects the fact that the multiplier bootstrap perturbs not only the fluctuation scale but also the deterministic spectral edge. As discussed earlier in Remark \ref{eq_biastermrmk}, this bias is not negligible at the bootstrap fluctuation scale and therefore must be corrected in the inferential procedure. Third, the variance parameter $\mathsf v$ is proportional to $\operatorname{Var}(\xi^2)$, which makes explicit how the choice of the multiplier distribution governs the size of the Gaussianized fluctuation. 



\subsection{Accuracy and power for the proposed procedure}\label{sec_accuracyandpower}

In this section, we provide the theoretical foundation for the proposed multiplier-bootstrap procedure in Section~\ref{sec_algorithm}. In particular, we establish the accuracy and power of the confidence interval \eqref{eq_CI} for inference on the spectral edge $\mathsf E$ under \vspace*{-10pt}
\begin{equation}
	\vspace*{-10pt}
	\mathbf H_0:r<r_0,
	\qquad
	\mathbf H_a:r\ge r_0,
	\label{eq_nullpypothesis}
\end{equation}
where $r$ denotes the true number of spikes and $r_0$ is a candidate upper bound. This formulation is used to determine whether the confidence interval is constructed from non-spiked eigenvalues, and is therefore mainly introduced for theoretical justification. In practice, as discussed in Remark \ref{rmk_rmkaftermaininferenceprocedure}, one can always choose $r_0$ sufficiently large so that $\mathbf H_0$ holds. As a byproduct, we also establish the asymptotic efficiency of the estimator in \eqref{eq_estimator} for the number of spikes.

\begin{theorem}\label{thm_power}
	Suppose the assumptions of Theorem~\ref{thm_boostrapmain} hold. If the null hypothesis $\mathbf H_0$ in \eqref{eq_nullpypothesis} is true and $\mathrm{B}\gtrsim n^{\mathsf{c}}$ for some constant $\mathsf{c}>0$, then with probability at least $1-\mathrm{o}(1),$ the confidence interval in \eqref{eq_CI} satisfies \vspace*{-10pt}
	\begin{equation}\label{eq_thmpower_one}
		\vspace*{-10pt}
		\mathbb{P}\left(
		\mathsf E \in \left[\widehat{\mathsf E}^-,\,\widehat{\mathsf E}^+\right] \,\middle|\, X
		\right)
		=
		1-\alpha+\mathrm{o}(1).
	\end{equation}
	On the other hand, if the alternative hypothesis $\mathbf H_a$ in \eqref{eq_nullpypothesis} holds, then with probability at least $1-\mathrm{o}(1)$ \vspace*{-10pt}
	\begin{equation}\label{eq_thmpower_two}
		\vspace*{-10pt}
		\mathbb{P}\left(
		\mathsf E \in \left[\widehat{\mathsf E}^-,\,\widehat{\mathsf E}^+\right] \,\middle|\, X
		\right)
		=
		\mathrm{o}(1).
	\end{equation}
\end{theorem}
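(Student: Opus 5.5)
Under $\mathbf H_0$ we have $r_0 = r+i$ for some $1\le i\le \mathrm{K}$, so Theorem~\ref{thm_boostrapmain} applies to $\lambda_{r_0,k}$. Write $\sigma_{\mathrm{MB}}:=\sqrt{\mathsf v/n}\asymp n^{-2/3+\delta/2}$. Conditionally on $X$, the replicates $\lambda_{r_0,1},\dots,\lambda_{r_0,\mathrm{B}+1}$ are i.i.d., because the multiplier sets $\Xi_k$ are independent. On an event of probability $1-\mathrm{o}(1)$, each has conditional law $\mathsf E-\Delta_{\mathrm{edge}}+\sigma_{\mathrm{MB}}(G+\mathrm{o}(1))$ in Kolmogorov distance, with $G\sim\mathcal N(0,1)$. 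The coverage event is
\begin{equation*}
|\lambda_{r_0,\mathrm{B}+1}+\Delta_{r_0}-\mathsf E|\le z_{\alpha/2}\mathfrak s.
\end{equation*}
I would normalize by $\sigma_{\mathrm{MB}}$ and show three things, all conditionally on $X$ and with high probability:
\begin{equation*}
\text{(a) } \mathfrak s/\sigma_{\mathrm{MB}}\to1,\qquad \text{(b) } \bigl(\Delta_{r_0}-\Delta_{\mathrm{edge}}\bigr)/\sigma_{\mathrm{MB}}=\mathrm{o}(1),\qquad \text{(c) } \sigma_{\mathrm{MB}}^{-1}\bigl(\lambda_{r_0,\mathrm{B}+1}-\mathsf E+\Delta_{\mathrm{edge}}\bigr)\Rightarrow\mathcal N(0,1).
\end{equation*}
Item (c) is Theorem~\ref{thm_boostrapmain} itself. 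Since $\lambda_{r_0,\mathrm{B}+1}$ is independent of $(\bar\lambda_{r_0},\mathfrak s)$ given $X$, Slutsky's lemma and the continuity of $\Phi$ then give coverage $\Phi(z_{\alpha/2})-\Phi(-z_{\alpha/2})+\mathrm{o}(1)=1-\alpha+\mathrm{o}(1)$.

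For (a) and the Monte Carlo part of (b), convergence in distribution alone is not enough to control sample moments. I would upgrade Theorem~\ref{thm_boostrapmain} to uniform integrability of $\sigma_{\mathrm{MB}}^{-2}(\lambda_{r_0,k}-\mathsf E+\Delta_{\mathrm{edge}})^2$ conditionally on $X$. The natural tool is the high-probability rigidity and local law for the bootstrap matrix from the supplement. Near the edge these give $|\lambda_{r_0,k}-\mathsf E_{\mathrm{MB}}|\prec\sigma_{\mathrm{MB}}$ with overwhelming probability over $\Xi$, on the good event $\Omega_X$ of Definition~\ref{def_OmegaX}. The exceptional event can be handled with the crude bound $\lambda\le\|\Sigma\|\,\|X\|^2\max_i\xi_i^2$, which has polynomial moments. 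With uniform integrability in hand, the conditional mean and variance of the normalized replicate converge to $0$ and $1$. Since $\mathrm{B}\gtrsim n^{\mathsf c}\to\infty$, Chebyshev's inequality gives $\bar\lambda_{r_0}=\mathsf E-\Delta_{\mathrm{edge}}+\mathrm{o}(\sigma_{\mathrm{MB}})$ and $\mathfrak s=\sigma_{\mathrm{MB}}(1+\mathrm{o}(1))$.

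The remaining part of (b) is the key bias-correction step, which I expect to be the main obstacle. It requires
\begin{equation*}
\mu_{r_0}-\mathsf E=\mathrm{o}(\sigma_{\mathrm{MB}}).
\end{equation*}
Edge rigidity for the non-spiked eigenvalues of the spiked sample covariance matrix gives $|\mu_{r+i}-\mathsf E|\le n^{-2/3+\delta_*}$ on $\Omega_X$ (see \citep{ding2021spiked,bloemendal2016principal}). Since $\delta>2\delta_*$, we have $n^{-2/3+\delta_*}\ll n^{-2/3+\delta/2}$, which is exactly the role of the lower bound on $\delta$. The delicate point is that the centering $\mathsf E_{\mathrm{MB}}$ and the variance $\mathsf v$ must be the deterministic ones and not random data-dependent versions. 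If the supplement instead provides Theorem~\ref{thm_boostrapmain} with a random edge $\widehat{\mathsf E}_{\mathrm{MB}}(X)$, I would need $\widehat{\mathsf E}_{\mathrm{MB}}(X)-\mathsf E_{\mathrm{MB}}=\mathrm{o}(\sigma_{\mathrm{MB}})$. I would get this from the expansion in the proof of Remark~\ref{eq_biastermrmk}, combining $\Delta_{\mathrm{edge}}\asymp\operatorname{Var}(\xi^2)$ with the local law at scale $n^{-2/3+\delta_*}$.

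Under $\mathbf H_a$, $r_0\le r$, so $\mu_{r_0}$ is an outlier. By \eqref{eq_spiketruedefinition} and the outlier location $\theta(\widetilde\sigma)$, whose derivative is bounded below near $-\mathsf b^{-1}$ in the square-root regime, we get $\mu_{r_0}-\mathsf E\gtrsim\mathfrak t^2\gtrsim n^{-1/3+2\kappa}$, up to fluctuations of smaller order. The bootstrap outliers $\lambda_{r_0,k}$ sit at the bootstrap-perturbed outlier location, which differs from $\mu_{r_0}$ by a shift of order $\operatorname{Var}(\xi^2)$ and has fluctuations of order $n^{-1/2}\operatorname{Var}(\xi^2)^{1/2}$. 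Hence $\lambda_{r_0,\mathrm{B}+1}+\Delta_{r_0}=\mu_{r_0}+(\lambda_{r_0,\mathrm{B}+1}-\bar\lambda_{r_0})$. The last difference is $\mathrm{O}(n^{-2/3+\delta/2+\epsilon})$, as is $\mathfrak s$. Because $\kappa>\delta/2$, $n^{-1/3+2\kappa}\gg n^{-2/3+\delta/2}$, so the interval lies strictly above $\mathsf E$ with probability $1-\mathrm{o}(1)$. This gives \eqref{eq_thmpower_two}. The only care needed here is a crude bound on the spread of the spiked bootstrap eigenvalue, for which the outlier eigenvalue perturbation formula suffices.
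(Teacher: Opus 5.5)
Your proposal is correct in substance. For $\mathbf H_0$ it follows the paper. It uses the same three ingredients:
\begin{itemize}
\item the conditional CLT of Theorem~\ref{thm_boostrapmain} for the fresh draw;
\item $\bar\lambda_{r_0}=\mathsf E_{\mathrm{MB}}+\mathrm o(\sqrt{\mathsf v/n})$, together with edge rigidity $\mu_{r_0}-\mathsf E=\mathrm O(n^{-2/3+\delta_*})$ and $\delta>2\delta_*$, giving $\Delta_{r_0}-\Delta_{\mathrm{edge}}=\mathrm o(\sqrt{\mathsf v/n})$;
\item $\mathfrak s^2=(\mathsf v/n)(1+\mathrm o(1))$.
\end{itemize}
These are combined by Slutsky's lemma. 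You are more explicit than the paper that getting convergence of $\bar\lambda_{r_0}$ and $\mathfrak s$ from a conditional CLT needs a uniform-integrability argument.

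For $\mathbf H_a$ your route is genuinely different. The paper bounds $|\mathsf E-(\lambda_{r_0,\mathrm B+1}+\Delta_{r_0})|\ge|\vartheta^{\mathtt{MB}}_{r_0}-\mathsf E|-|\lambda_{r_0,\mathrm B+1}-\vartheta^{\mathtt{MB}}_{r_0}|-|\Delta_{r_0}|$. It therefore has to locate the bootstrap outlier relative to $\mathsf E$, via Lemma~\ref{lem_power_bootstrap_gap}, which compares $\vartheta^{\mathtt{MB}}_{r_0}$ with $\vartheta^{\mathtt S}_{r_0}$ up to $\mathrm O(\operatorname{Var}(\xi^2))$. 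It then proves separately that $\Delta_{r_0}=\mathrm o(\varsigma_{r_0}^2)$.

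Your exact identity $\lambda_{r_0,\mathrm B+1}+\Delta_{r_0}=\mu_{r_0}+(\lambda_{r_0,\mathrm B+1}-\bar\lambda_{r_0})$ cancels the bootstrap outlier location altogether. You only need two facts:
\begin{itemize}
\item $\mu_{r_0}-\mathsf E\gtrsim\varsigma_{r_0}^2$ for the \emph{sample} outlier (Lemmas~\ref{lem_power_outlier_expansion} and \ref{lem_power_sample_localization});
\item conditionally on $X$, the draws $\lambda_{r_0,k}$ concentrate at scale $\mathrm o(\varsigma_{r_0}^2)$ (from \eqref{eq_power_outlier_est1} and Lemma~\ref{lem_power_bootstrap_localization}).
\end{itemize}
This is shorter. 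It needs neither the lower bound $|\vartheta^{\mathtt{MB}}_{r_0}-\mathsf E|\gtrsim\varsigma_{r_0}^2$ nor the estimate on $\Delta_{r_0}$. It also shows that the $\mathrm O(\operatorname{Var}(\xi^2))$ shift of the bootstrap outlier plays no role in the power, so your sentence about that shift is never used. In particular, the comparison $\varsigma_{r_0}^2\gg\operatorname{Var}(\xi^2)$, which is where the paper's route uses $\kappa>\delta/2$, is not needed at this step. In exchange, the paper's route tells you where the interval center actually sits.

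A few details should be fixed.
\begin{itemize}
\item \emph{Spread of the bootstrap outlier.} Its rate is not shown to be $\mathrm O(n^{-2/3+\delta/2+\epsilon})$. The available bound is $\eta_{n,r_0}=n^{-1/2+c}\varsigma_{r_0}^{1/2}+n^{-1/2+c}\sqrt{\operatorname{Var}(\xi^2)}$, because the localization error $\lambda_{r_0}-\widehat\vartheta^{\mathtt{MB}}_{r_0}$ may vary with $\Xi$. Under \eqref{eq_spiketruedefinition} its first term exceeds $n^{-2/3+\delta/2}$. It is still $\mathrm o(\varsigma_{r_0}^2)$, so your conclusion is unaffected.
\item \emph{Localization is genuinely needed.} A norm bound via $\|Q_{\mathrm{MB}}-Q_{\mathrm{Sam}}\|$, which is of order $n^{-1/6+\delta/2}$, is $\mathrm o(\varsigma_{r_0}^2)$ only if $\kappa>1/12+\delta/4$.
\item \emph{Source of the quadratic gap.} The map $\widetilde\sigma\mapsto f(-\widetilde\sigma^{-1})$ has zero derivative at $-\mathsf b^{-1}$, since $f'(\mathsf b)=0$. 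The gap of order $\varsigma^2$ comes from $f''(\mathsf b)\ge c_0$, not from a derivative bounded below.
\item \emph{Uniform integrability.} The bound $|\lambda_{r_0,k}-\mathsf E_{\mathrm{MB}}|\prec\sqrt{\mathsf v/n}$ only bounds the normalized variable by $n^{\epsilon}$ and does not give uniform integrability. Use instead the linearization $\lambda_{r_0,k}-\mathsf E_{\mathrm{MB}}=(\mathsf C_2/\mathsf C_1)\mathcal X+\mathrm O(n^{-2/3+\epsilon})$ from Lemma~\ref{lem_edgeconvergence1}, where $\mathcal X$ is an average of bounded i.i.d. terms with explicit moments. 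You also need the exceptional set, on which only your crude bound is available, to have conditional probability $\mathrm o(\mathsf v/n)$. Concentration for the bounded truncated multipliers gives this; the bound $1-\mathrm O(\log^{-D_1}n)$ for $\Omega_\Xi$ does not.
\end{itemize}
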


Theorem~\ref{thm_power} summarizes the basic inferential consequence of the proposed confidence interval. Under the null hypothesis, the interval achieves the desired asymptotic coverage level $1-\alpha$, thereby providing a valid procedure for inference on the spectral edge $\mathsf E$. Under the alternative, the coverage probability vanishes asymptotically, which means that the interval is increasingly unlikely to contain the true edge once an additional spike separates from the bulk. In this sense, the same interval simultaneously yields both valid edge inference under the null and a powerful detection mechanism under the alternative.

From a testing perspective, the theorem implies that the confidence interval in \eqref{eq_CI} induces a consistent decision rule for \eqref{eq_nullpypothesis}. Indeed, accepting $\mathbf H_0$ when $\mathsf E$ is contained in the interval and rejecting otherwise yields asymptotic type-I error $\alpha$ and asymptotic power tending to one. This dual interpretation is particularly useful in our setting, because it connects edge inference directly with spike detection and lays the groundwork for estimating the number of spikes.

The intuition behind the theorem is as follows. Under the null hypothesis, the eigenvalue $\mu_{r_0}$ of the sample covariance matrix behaves like a non-spiked edge eigenvalue, and Theorem~\ref{thm_boostrapmain} shows that its bootstrap analogue admits an asymptotically Gaussian approximation after bias correction and rescaling. In Section~\ref{sec_algorithm}, we use $\bar{\lambda}_{r_0}$ to estimate the bootstrap edge $\mathsf E_{\mathrm{MB}}$ and $\mu_{r_0}$ to estimate the original edge $\mathsf E$. Consequently, the edge bias $\Delta_{\mathrm{edge}}$ is estimated by $\Delta_{r_0}$, while $\mathfrak{s}^2$ estimates the asymptotic variance parameter $\mathsf v/n$. As will be shown in the proof in Section \ref{appendxi_secthreoem32} of the supplement, the corresponding plug-in errors are asymptotically negligible. Under the alternative, however, the eigenvalue $\lambda_{r_0}$ is no longer governed by the non-spiked edge law; instead, it is pushed away from the bulk edge by the additional spike. Consequently, the interval centered at the bootstrap edge analogue no longer covers the true edge, leading to vanishing coverage probability.

Based on Theorem~\ref{thm_power}, we can examine the effectiveness of the estimator in \eqref{eq_estimator} for the number of spikes.

\begin{corollary}\label{cor_spikeestimation}
	Consider the estimator in \eqref{eq_estimator}, and suppose that the candidate value $r_0$ is large such that the null hypothesis $\mathbf H_0$ in \eqref{eq_nullpypothesis} holds. Under the assumptions of Theorem~\ref{thm_boostrapmain}, we have with probability at least $1-\mathrm{o}(1)$ \vspace*{-10pt}
	\begin{equation*}
		\vspace*{-10pt}
		\mathbb{P}(\widehat{r}=r | X)=1-\alpha/2+\mathrm{o}(1).
	\end{equation*}
\end{corollary}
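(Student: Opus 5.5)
The event $\{\widehat r=r\}$ splits into two parts. All $r$ spiked sample eigenvalues must exceed the upper endpoint $\widehat{\mathsf E}^+=\lambda_{r_0,\mathrm B+1}+\Delta_{r_0}+z_{\alpha/2}\mathfrak s$. In addition, the first non-spiked sample eigenvalue $\mu_{r+1}$, and hence all $\mu_i$ with $r+1\le i\le r_0$, must fall below it. By monotonicity of the ordered eigenvalues, the second requirement reduces to $\mu_{r+1}\le\widehat{\mathsf E}^+$. So I will show
\begin{equation*}
	\mathbb P\bigl(\mu_r>\widehat{\mathsf E}^+\,\big|\,X\bigr)=1-\mathrm o(1),\qquad
	\mathbb P\bigl(\mu_{r+1}\le \widehat{\mathsf E}^+\,\big|\,X\bigr)=1-\alpha/2+\mathrm o(1),
\end{equation*}
both with probability $1-\mathrm o(1)$ over $X$. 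Combining them by a union bound gives the corollary. Throughout I work under $\mathbf H_0$ ($r<r_0$), so the interval in \eqref{eq_CI} is built from a non-spiked eigenvalue and the ingredients of Theorem~\ref{thm_power} are available.

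\textbf{Upper-endpoint analysis (the $1-\alpha/2$ part).} I will reuse the decomposition from the proof of \eqref{eq_thmpower_one}. By Theorem~\ref{thm_boostrapmain}, conditionally on $X$,
\begin{equation*}
	\lambda_{r_0,\mathrm B+1}=\mathsf E-\Delta_{\mathrm{edge}}+\sqrt{\mathsf v/n}\,G+\mathrm o_{\mathbb P}(\sqrt{\mathsf v/n}),
\end{equation*}
with $G$ asymptotically standard Gaussian. Here $r_0-r$ is bounded, so $\mathrm K:=r_0$ fixed suffices (under the theory's fixed $r$). The plug-in facts used for Theorem~\ref{thm_power} give, with $\mathrm B\gtrsim n^{\mathsf c}$:
\begin{itemize}
	\item $\bar\lambda_{r_0}=\mathsf E-\Delta_{\mathrm{edge}}+\mathrm o(\sqrt{\mathsf v/n})$;
	\item $\mathfrak s=\sqrt{\mathsf v/n}\,(1+\mathrm o(1))$;
	\item $\mu_{r_0}=\mathsf E+\mathrm O(n^{-2/3+\delta_*})$ by edge rigidity (Definition~\ref{def_OmegaX}); this is $\mathrm o(n^{-2/3+\delta/2})$ since $\delta>2\delta_*$.
\end{itemize}
Likewise $\mu_{r+1}=\mathsf E+\mathrm O(n^{-2/3+\delta_*})$. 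Hence
\begin{equation*}
	\widehat{\mathsf E}^+-\mu_{r+1}=\sqrt{\mathsf v/n}\,\bigl(G+z_{\alpha/2}+\mathrm o_{\mathbb P}(1)\bigr),
\end{equation*}
and $\mathbb P(\mu_{r+1}\le\widehat{\mathsf E}^+\mid X)=\mathbb P(G\ge -z_{\alpha/2})+\mathrm o(1)=1-\alpha/2+\mathrm o(1)$. This is the one-sided version of \eqref{eq_thmpower_one}. The uniformity in $x$ of Theorem~\ref{thm_boostrapmain} together with the continuity of $\Phi$ absorbs the $\mathrm o_{\mathbb P}(1)$ perturbations.

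\textbf{Spike separation (the $1-\mathrm o(1)$ part).} I need $\mu_r-\mathsf E\gg\sqrt{\mathsf v/n}\asymp n^{-2/3+\delta/2}$. By the spiked outlier theory underlying \eqref{eq_spiketruedefinition} (e.g.\ \citep{bloemendal2016principal,ding2021spiked}), a spike at distance $\mathfrak t$ above the BBP threshold $-\mathsf b^{-1}$ produces an outlier with
\begin{equation*}
	\mu_i-\mathsf E\gtrsim \mathfrak t^2\gtrsim n^{-1/3+2\kappa},\qquad 1\le i\le r,
\end{equation*}
because of the square-root behavior of $f$ near its critical point $\mathsf b$. This holds uniformly even for degenerate spikes, with fluctuations of order $n^{-1/2}\mathfrak t^{1/2}$ that are negligible. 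Since $\kappa>\delta/2$, we get $-1/3+2\kappa>-1/3+\delta>-2/3+\delta/2$, so $\mu_r$ exceeds $\widehat{\mathsf E}^+=\mathsf E+\mathrm O_{\mathbb P}(n^{-2/3+\delta/2})$ with conditional probability $1-\mathrm o(1)$. This is the same mechanism that yields \eqref{eq_thmpower_two}.

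\textbf{Main obstacle.} I expect the main difficulty to be the outlier lower bound $\mu_r-\mathsf E\gtrsim\mathfrak t^2$ in the weak-spike regime $\mathfrak t$ close to $n^{-1/6}$, uniformly over possibly coinciding spikes. My first attempt is to quote the eigenvalue-sticking/outlier location results of \citep{bloemendal2016principal} with their error $n^{-1/2+\epsilon}\mathfrak t^{1/2}$, which is $\ll\mathfrak t^2$ here. A secondary point is making the conditional statements hold jointly on one high-probability $X$-event: I will intersect the rigidity event, the outlier event and the event from Theorem~\ref{thm_power}.
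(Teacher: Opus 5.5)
Your proposal is correct and takes essentially the same route as the paper. You write $\{\widehat r=r\}=\{\mu_r>\widehat{\mathsf E}^+\}\cap\{\mu_{r+1}\le\widehat{\mathsf E}^+\}$ and handle the non-spiked event through edge rigidity ($\mu_{r+1}-\mathsf E=\mathrm o(\mathfrak s)$) plus the studentized Gaussian approximation from the null part of Theorem~\ref{thm_power}; you handle the spiked event through the outlier bound $\mu_r-\mathsf E\gtrsim\varsigma_r^2\gg\sqrt{\mathsf v/n}$, which is exactly what the paper obtains from Lemmas~\ref{lem_power_outlier_expansion} and~\ref{lem_power_sample_localization}.
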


Corollary~\ref{cor_spikeestimation} shows that the proposed estimator $\widehat r$ recovers the true number of spikes with asymptotic probability $1-\alpha/2$, provided that $r_0$ is chosen sufficiently large so that it serves as an upper bound for $r$. Its performance is therefore directly tied to the validity of the confidence interval in \eqref{eq_CI}. In particular, if one allows the nominal level to vary with the sample size and chooses $\alpha \equiv \alpha_n\downarrow 0$, then the corollary yields the consistency statement $\mathbb{P}(\widehat r=r\mid X)\to 1$. At the same time, for practical implementation, fixing a small value of $\alpha$ is often sufficient to obtain accurate and stable finite-sample performance, as confirmed by our numerical experiments.

\subsection{Proof strategies}\label{sec_strate}
At a high level, the proof of Theorem~\ref{thm_boostrapmain} relies on two complementary ingredients: a deterministic comparison between the original edge $\mathsf E$ and its multiplier bootstrap analogue $\mathsf E_{\mathrm{MB}}$, and a conditional Gaussian approximation for the largest non-spiked bootstrap eigenvalues. The key mechanism is that the multiplier bootstrap introduces a perturbation of size $\operatorname{Var}(\xi^2)$, which is asymptotically larger than the intrinsic Tracy--Widom fluctuation, yet still sufficiently small to preserve the bulk spectral geometry. This separation of scales is precisely what makes Gaussian edge inference possible.

The analysis is based on the decomposition \vspace*{-10pt}
\begin{equation}\label{eq_decompositionproofstrategy}
	\vspace*{-10pt}
	\lambda_{r+i}-\mathsf E
	=
	\bigl(\lambda_{r+i}-\widehat{\mathsf E}_{\mathrm{MB}}\bigr)
	+
	\bigl(\widehat{\mathsf E}_{\mathrm{MB}}-\mathsf E_{\mathrm{MB}}\bigr)
	-
	\Delta_{\mathrm{edge}},
	\qquad 1\le i\le \mathrm K,
\end{equation}
where $\Delta_{\mathrm{edge}}=\mathsf E-\mathsf E_{\mathrm{MB}}$, and $\widehat{\mathsf E}_{\mathrm{MB}}$ is the random counterpart of $\mathsf E_{\mathrm{MB}}$ defined through \eqref{eq_edgeequations1} of the supplement. Conditionally on suitably constructed high-probability events, the first term is of Tracy--Widom order $n^{-2/3}$, the second term provides the leading Gaussian fluctuation, and the last term is a deterministic bias. 

To establish the Gaussian fluctuation and derive the corresponding variance formula, we first identify the self-consistent equations characterizing the deterministic right edge $\mathsf E_{\mathrm{MB}}$ in \eqref{eq_edgeequations2} of the supplement, together with its random counterpart $\widehat{\mathsf E}_{\mathrm{MB}}$ associated with a realized multiplier sample in \eqref{eq_edgeequations1} of the supplement. A stability analysis of these edge equations shows that the leading fluctuation of $\widehat{\mathsf E}_{\mathrm{MB}}-\mathsf E_{\mathrm{MB}}$ is an average of multiplier transforms, and therefore admits a conditional Gaussian approximation via the Berry--Esseen theorem; see, for example, \citet[Section 2.11]{vdV1998}. The corresponding variance can be computed explicitly as in \eqref{eq_def_v}. Consequently, the Gaussian fluctuation occurs on the scale $\sqrt{\mathsf v/n}\gg n^{-2/3}$, so the random part of the right-hand side of \eqref{eq_decompositionproofstrategy} is dominated by the second term on the right-hand side, which yields the Gaussian limit.

For the proof of the first statement of Theorem~\ref{thm_power}, Theorem~\ref{thm_boostrapmain} reduces the problem to showing that the unknown deterministic shift $\Delta_{\mathrm{edge}}$ can be consistently estimated by the empirical bias-correction term $\Delta_{r_0}$ in \eqref{eq_correctionterm}. As shown in (\ref{eq_biasestimation}) of the supplement, this plug-in step introduces only a negligible error relative to the Gaussian fluctuation. Combined with a consistent estimate of the variance from the bootstrap replicates, this yields the confidence interval in \eqref{eq_CI}. 

To establish the second statement of Theorem \ref{thm_power}, it remains to analyze the behavior of spiked eigenvalues. Under the alternative hypothesis in (\ref{eq_nullpypothesis}), the $r_0$-th eigenvalue is no longer an edge eigenvalue but instead becomes an outlier. The key point is to show that the center of the confidence interval is separated from $\mathsf E$ by a deterministic gap that is much larger than the interval width and the bias term. Our analysis is based on the decomposition that  \vspace*{-10pt}
\begin{equation}\label{eq_spikepicture}
	\vspace*{-10pt}
	\lambda_{r_0}-\mathsf{E}=\lambda_{r_0}-\widehat{\vartheta}^{\mathtt{MB}}_{r_0}+\widehat{\vartheta}^{\mathtt{MB}}_{r_0}-\vartheta^{\mathtt{MB}}_{r_0}+\vartheta^{\mathtt{MB}}_{r_0}-\vartheta_{r_0}^{\mathtt S}+\vartheta_{r_0}^{\mathtt S}-\mathsf{E},
\end{equation}
where we compare the locations: the sample outlier location $\vartheta_{r_0}^{\mathtt S}$ in \eqref{eq_def_vartheta_S} of the supplement, the deterministic bootstrap outlier location $\vartheta_{r_0}^{\mathtt{MB}}$ in \eqref{eq_def_vartheta_MB} of the supplement, and the random bootstrap outlier location $\widehat{\vartheta}_{r_0}^{\mathtt{MB}}$ associated with the realized multipliers on some suitably constructed probability event. Under the assumption of (\ref{eq_spiketruedefinition}), the last term on the right-hand side of (\ref{eq_spikepicture}) can be bounded from below by a gap of order $ n^{-1/3+2\kappa}$, as shown in \eqref{eq_power_outlier_lower} of the supplement. This term is much larger than both the interval width and the bias term, the latter being of order $\mathrm{O}(n^{-1/3+\delta})$. Moreover, the last but two term of (\ref{eq_spikepicture}) can be controlled by $\mathrm O(\operatorname{Var}(\xi^2)) \asymp \mathrm O(n^{-1/3+\delta}) $, which is obtained by comparing the sample and bootstrap outlier equations through a one-dimensional root perturbation argument as in Lemma \ref{lem_power_bootstrap_gap} of the supplement. The second term fluctuates on the smaller scale $n^{-1/2}\sqrt{\operatorname{Var}(\xi^2)} \asymp n^{-2/3+\delta/2} \ll n^{-1/3+2 \kappa}$ as in (\ref{eq_power_bootstrap_localization}) of the supplement under the assumption of (\ref{eq_spiketruedefinition}), while the first term fluctuates on the scale $ n^{-2/3+\kappa} \ll n^{-1/3+2\kappa}$ as in (\ref{eq_power_outlier_est1}) of the supplement. Taken together, these estimates imply that, under the quantitative separation condition in \eqref{eq_spiketruedefinition}, the center $\lambda_{r_0,\mathrm{B}+1}+\Delta_{r_0}$ stays asymptotically farther away from $\mathsf E$ than the interval half-width, as $\Delta_{r_0}$ can also be controlled by $\mathrm{O}(n^{-1/3+\delta})$ as shown in (\ref{eq_power_delta_small}) of the supplement. Consequently, the interval misses $\mathsf E$ with probability tending to one.

The proof of Corollary~\ref{cor_spikeestimation} combines the null and alternative results through the rule in \eqref{eq_estimator}, where under the alternative hypothesis, the second statement of Theorem~\ref{thm_power} implies rejection with probability tending to one, while under the null hypothesis, the first statement of  Theorem~\ref{thm_power} yields acceptance with probability {$1-\alpha/2+\mathrm{o}(1)$}.

\section{Numerical simulations}\label{sec_numerical}
In this section, we investigate the empirical performance of the proposed multiplier bootstrap procedure introduced in Section~\ref{sec_algorithm}. Section~\ref{sec:simu_setup} describes the simulation setup. In Section~\ref{sec_simuanalysisofouralgorithm}, we examine the finite-sample accuracy and power of the proposed method. In Section~\ref{sec_simudetectionofspike}, we apply the method to the problem of estimating the number of spikes and compare its performance with several existing approaches.


\subsection{Simulation setup}
\label{sec:simu_setup}

Throughout the simulations, we generate samples according to \eqref{eq_populationfirstdefinition}. The entries of $\zb_i$ are taken to be i.i.d.\ either from the standard Gaussian distribution or from a Student's-$t$ distribution. In the main text, we focus on the Gaussian case and defer the results for the Student's-$t$ case to Section~\ref{appendix_additionalnumericalresult} of the supplement. The qualitative conclusions are similar in both settings.

For the possibly spiked population covariance matrix $\Sigma$ in \eqref{eq_populationfirstdefinition}, we adopt the decomposition in \eqref{eq_spikedmatrix} based on its non-spiked counterpart $\Sigma_0$ in \eqref{eq_basepopulationcovariancematrix}. We consider three models for the non-spiked part $\Sigma_0$. The eigenvectors $\{\bv_i\}$ are chosen as the columns of a pre-generated random orthogonal matrix, which is fixed throughout the simulations, while the eigenvalues satisfy one of the following three configurations:
\begin{enumerate}[(I)]
	\item $\sigma_1 \equiv \sigma_2 \equiv \cdots \equiv \sigma_p \equiv \mathfrak{a}$.
	\item The eigenvalues $\sigma_i$ are uniformly spaced over $[0.75,1.25]$.
	\item $\sigma_1 \equiv \sigma_2 \equiv \cdots \equiv \sigma_{\lfloor p/2 \rfloor} \equiv 1.25,$ and $\sigma_{\lfloor p/2 \rfloor +1} \equiv \cdots \equiv \sigma_p \equiv 0.75.$
\end{enumerate}

In the simulations, we set $\mathfrak{a}=0.9$ and consider three dimension pairs, $$(p,n)\in\{(200,500),(500,750),(750,500)\}. $$ Under these settings, the corresponding population edge $\mathsf E$ can be computed explicitly from \eqref{eq:edge} for all three cases.

In the numerical experiments below, for each dimension pair $(p,n)$, we implement the calibration procedure in Section~\ref{sec_choiceofSN} to select $\sN$ and then construct the multipliers as in Example~\ref{example_one}. For the three dimension pairs above, this calibration yields $\sN=4,9,15$, corresponding respectively to $(200,500)$, $(500,750)$, and $(750,500)$.

%

\subsection{Accuracy and power of the proposed procedure}\label{sec_simuanalysisofouralgorithm}
In this section, we investigate the finite-sample accuracy and power of the proposed method in Section \ref{sec_algorithm}.  Throughout, we adopt the simulation setup described in Section~\ref{sec:simu_setup} and choose $\mathrm{B}=2000$ in (\ref{eq_multipliernumberofB}).

To assess accuracy, we report the empirical coverage probability that $\mathsf E$ lies in the confidence interval \eqref{eq_CI} under the null hypothesis in \eqref{eq_nullpypothesis}, for various values of $r_0$ in the case $r_0=r+1$. This allows us to evaluate the finite-sample accuracy of the proposed procedure and to illustrate the conclusion of Theorems~\ref{thm_boostrapmain} and \ref{thm_power}. Similar conclusions continue to hold under the more general null hypothesis $r<r_0$. More specifically, under the null setting $r_0=r+1$, we consider the following three spike configurations in the model \eqref{eq_spikedmatrix}: (a). $r=0$; (b). $r=1$ with $\widetilde{\sigma}_1=6$; (c). $r=2$ with $\widetilde{\sigma}_1=7$ and $\widetilde{\sigma}_2=6$.

Table~\ref{tab:coverage_gaussian} of the supplement reports the finite-sample accuracy of the proposed confidence interval at the nominal levels $\alpha=0.05$ and $\alpha=0.1$ for Gaussian data. The empirical coverage probabilities are generally close to the corresponding nominal levels across different covariance structures, different values of $r$, and different dimension pairs $(p,n)$. This demonstrates that the proposed procedure achieves accurate finite-sample performance under a variety of settings. In Table~\ref{tab:coverage_student_appendix} of the supplement, we report the corresponding results for Student's-$t$ data and obtain similar conclusions.

To assess power, we report the corresponding coverage probabilities under the alternative hypothesis in \eqref{eq_nullpypothesis} with $r_0=2$. To examine how the power changes as the model moves from the null regime $r=r_0-1$ to the alternative regime $r \geq r_0$, we consider a two-spike model in which the first spike is fixed at $7$, while the second spike is placed near the theoretical separation threshold in \eqref{eq_spiketruedefinition}: $ \widetilde{\sigma}_1=7, \ \widetilde{\sigma}_2=-\mathsf{b}^{-1}+\varsigma,$ where $\varsigma>0$ is a constant. As $\varsigma$ increases, the second spike becomes more strongly separated from the edge, so the confidence interval in \eqref{eq_CI} with $r_0=2$ is increasingly unlikely to cover the true spectral edge $\mathsf E$. This setting allows us to evaluate the power of the proposed procedure and to illustrate the conclusion of Theorem~\ref{thm_power}. 


Figure~\ref{fig:power} of the supplement illustrates the empirical power of the proposed procedure under the local alternative. As $\varsigma$ increases, the second spike moves farther above the transition threshold, making it easier to distinguish the associated sample eigenvalue from the bulk edge. Consequently, the confidence interval for the edge is increasingly unlikely to contain the true edge, and the empirical power rises steadily toward one. This behavior is consistent across all three covariance models and all three dimension pairs. Although the finite-sample power curves differ slightly across settings, the overall trend is very stable: once the local separation exceeds a moderate level, the proposed procedure quickly achieves high power. This numerical behavior is in agreement with the theoretical prediction in Theorem~\ref{thm_power}.  In Figure \ref{fig:t10_power_all_dimensions} of the supplement, we report the corresponding results for Student's-$t$ data and obtain similar conclusions.

\subsection{Detection for spikes and comparison with other algorithms}\label{sec_simudetectionofspike}
In this section, we investigate a downstream application of the proposed method, namely estimating the number of spikes, and compare its performance with several existing procedures for spike-number estimation in covariance matrices. Since the literature on this problem is extensive, we focus on a representative, rather than selective, collection of existing methods. Specifically, following the review in Section~\ref{sec_literature}, we compare our method with the procedures of \citep{bai2002determining} (denoted by BN2002), \citep{onatski2009formal} (denoted by Onat2009), \citep{passemier2014estimation} (denoted by PY2014), \citep{braeken2017empirical} (denoted by BA2017), \citep{dobriban2019deterministic} (denoted by DO2019), \citep{9779233} (denoted by DY2022), \citep{fan2022estimating} (denoted by FGZ2022), and \citep{ke2023estimation} (denoted by KML2023). For the proposed procedure, in order to apply the estimator in \eqref{eq_estimator}, we choose a reasonably large value of $r_0$ as discussed in Remark \ref{rmk_rmkaftermaininferenceprocedure}.

For the simulation setup, we consider a possibly five-spiked covariance model. More specifically, in the notation of \eqref{eq_spikedmatrix} and \eqref{eq_spiketruedefinition}, we study a relatively challenging setting in which all spikes are identical, namely, $\widetilde{\sigma}_1=\widetilde{\sigma}_2=\widetilde{\sigma}_3=\widetilde{\sigma}_4=\widetilde{\sigma}_5=-\mathsf b^{-1}+\varsigma, $ where $\varsigma>0$ is a constant. As $\varsigma$ increases, the spikes move farther away from the edge and become more clearly separated, so that the population covariance matrix exhibits five detectable spikes. For the eigenvalues of $\Sigma_0$ in \eqref{eq_basepopulationcovariancematrix}, we consider Cases~(II) and~(III) described in Section~\ref{sec:simu_setup}.

We report the main results for the Gaussian case with $(p,n)=(200,500)$ in Figures~\ref{fig:mean_p200_n500} and~\ref{fig:accuracy_p200_n500} of the supplement. The corresponding results for other dimension pairs and for Student's-$t$ entries are presented in Section~\ref{appendix_additionalnumericalresult} of the supplement, and the qualitative conclusions are similar. In Figure~\ref{fig:mean_p200_n500}, we report the average estimated number of spikes, as $\varsigma$ varies from small to large values. The proposed method increases rapidly with $\varsigma$ in both covariance settings and approaches the true value $r=5$ substantially earlier than most competing procedures. This indicates that the method is able to identify weak spikes soon after they move beyond the transition. Among the competitors, KML2023 and BA2017 also perform reasonably well, but they generally require a larger separation level before stabilizing near the true spike number. By contrast, BN2002, Onat2009, and DY2022 tend to underestimate the number of spikes over a wide range of $\varsigma$, while FGZ2022 and PY2014 improve more gradually. Overall, the proposed method provides more accurate and stable spike-number estimation, especially in the weak-signal regime and in the more challenging two-mass bulk setting of Case~(III).

Moreover, in Figure~\ref{fig:accuracy_p200_n500}, we report the exact detection accuracy, that is, the proportion of 1000 replications in which all five spikes are correctly identified. The proposed method performs competitively over the full range of $\varsigma$, and approaches the theoretical asymptotic accuracy level $1-\alpha/2$. We note that the results are reported for $\alpha=0.01$, which explains why the empirical accuracy does not approach one even when $\varsigma$ is large. As $\alpha \downarrow 0$, this upper limit moves closer to one. Among the competing methods, KML2023 also performs well and eventually achieves nearly perfect accuracy when $\varsigma$ is sufficiently large. However, the proposed method has a clear advantage for small and moderate values of $\varsigma$, particularly in the two-mass-bulk setting, where it identifies the correct number of spikes substantially earlier. DO2019 is also competitive for moderate values of $\varsigma$, whereas BA2017, FGZ2022, and PY2014 improve more gradually as $\varsigma$ increases. BN2002 performs poorly throughout these settings, while Onat2009 and DY2022 require substantially larger values of $\varsigma$ before their accuracy begins to improve. Overall, these results show that the proposed method provides accurate and robust spike-number estimation across both bulk configurations, and is especially advantageous when the spikes are weak. In Section~\ref{appendix_additionalspikecomparision} of the supplement, we also report the results for other combinations of $(p,n)$ and for Student's-$t$ data, and obtain similar conclusions.

				\section{Real data analyses}
				\label{sec:real_data}
				
				In this section, we evaluate the empirical performance of the proposed method for estimating the number of spikes on two real datasets and compare it with the same competing approaches as in Section~\ref{sec_simudetectionofspike}.
				
				\subsection{Data description}
				\subsubsection{GEUVADIS gene expression data}
				The first dataset is the GEUVADIS gene expression dataset~\citep{lappalainen2013transcriptome}, which contains gene expression measurements from lymphoblastoid cell lines from five populations: CEU, FIN, GBR, TSI, and YRI.
				
				We use the gene-level RPKM matrix and apply standard preprocessing steps commonly used for RNA-seq expression data~\citep{watanabe2019genetic}. Specifically, we remove genes with more than $10\%$ missing values, transform the remaining expression measurements using $\log_2(\mathrm{RPKM}+1)$, and impute missing entries using the corresponding gene means. We then randomly sample $400$ individuals and retain the $p=200$ genes with the largest sample variances. The resulting data matrix therefore has dimension $(p,n)=(200,400)$.
				
				\subsubsection{EUR genotype data}
				The second dataset is obtained from the Phase 3 release of the 1000 Genomes Project~\citep{1000Genomes2015}. We restrict the analysis to the EUR super-population, which consists of five  populations: CEU, FIN, GBR, IBS, and TSI.
				
				Starting from the Phase 3 genotype data, we apply standard preprocessing steps for genotype-based multivariate analysis; see, e.g., \cite{prive2020efficient,chang2015second,ke2023estimation}. Specifically, we retain common biallelic variants, encode each variant additively by the minor-allele count, and perform linkage-disequilibrium pruning. After preprocessing, we randomly sample $400$ individuals from the five EUR populations and retain the $p=1500$ variants with the largest sample variances. The final data matrix therefore has dimension $(p,n)=(1500,400)$.
				
				\subsection{Results}
				Since both datasets contain samples from five labeled populations, we use $r=5-1=4$ as a label-based benchmark for the number of spikes. This follows the standard eigenanalysis interpretation that $r+1$ populations give $r$ leading between-population directions, which correspond to $r$ spiked eigenvalues in covariance matrices \citep{patterson2006population,ke2023estimation}.
				
				Figure~\ref{fig:real_data_scree} displays the leading sample eigenvalues and the corresponding multiplier bootstrap upper thresholds of the proposed procedure as in (\ref{eq_estimator}). For the GEUVADIS data, the first four sample eigenvalues exceed the threshold, while the remaining eigenvalues fall below it. The proposed procedure therefore estimates $\widehat r=4$. The same pattern appears for the EUR genotype data, where the proposed procedure also estimates $\widehat r=4$. In both datasets, the estimated number of spikes agrees with the label-based benchmark.
				
				
				\begin{figure}[ht]
					\centering
					\includegraphics[
					width=\textwidth
					]{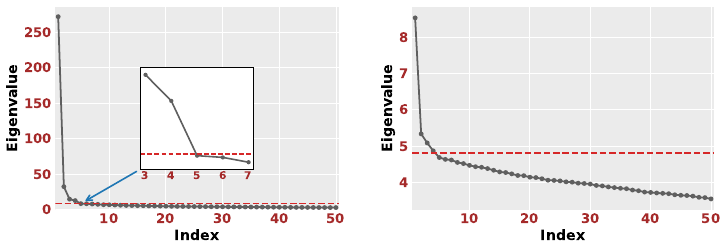}
					\caption{{\footnotesize Leading sample eigenvalues for the two real datasets. The left and the right correspond to 
							GEUVADIS RNA-seq data and EUR genotype data, 
							respectively. For both the left and the right, the red dashed line denotes the
							multiplier bootstrap threshold of the proposed procedure. The inset in
							the left zooms in on the third through seventh eigenvalues. The
							threshold in both the left and the right lies between the fourth and fifth eigenvalues
							and hence gives four estimated spikes.}}
					\label{fig:real_data_scree}
					
				\end{figure}

				Table~\ref{tab:real_data_comparison} reports the estimated numbers of spikes from all methods. For the GEUVADIS RNA-seq data, FGZ2022, KML2023, DO2019, PY2014, and the proposed method estimate four spikes. BA2017 overestimates the number of spikes, whereas BN2002, Onat2009, and DY2022 underestimate it. For the EUR genotype data, as illustrated in Figure~\ref{fig:real_data_scree}, the separation between the first four eigenvalues and the remaining bulk spectrum is weaker, making the estimation of the number of spikes more challenging. In this case, only our proposed method estimates four spikes. BA2017 overestimates the number of spikes, whereas the remaining methods estimate fewer spikes. Overall, the proposed method is the only one among methods compared that agrees with the label-based benchmarks in both datasets. This illustrates the stable performance of the proposed method across different types of real data.
				
				\begin{table}[H]
					\centering
					\resizebox{\textwidth}{!}{%
						\begin{tabular}{lccccccccc}
							\toprule
							& BA2017 & FGZ2022 & KML2023 & BN2002 & DO2019 & Onat2009 & DY2022 & PY2014 & Proposed \\
							\midrule
							GEUVADIS RNA-seq & 31 & $\mathbf{4}$ & $\mathbf{4}$ & 3 & $\mathbf{4}$ & 0 & 2 & $\mathbf{4}$ & $\mathbf{4}$ \\
							EUR genotype & 288 & 3 & 3 & 0 & 3 & 0 & 1 & 2 & $\mathbf{4}$ \\
							\bottomrule
							\vspace{0.2pt}
					\end{tabular}}
					\caption{Estimated numbers of spikes for the GEUVADIS RNA-seq data and the EUR genotype data. Estimates equal to the benchmark $r=4$ are shown in bold.
					}
					\label{tab:real_data_comparison}
				\end{table}

\clearpage
		\begin{center}
			{\large\bf SUPPLEMENTARY MATERIAL}
		\end{center}
		
In this supplement, we provide additional numerical results and detailed technical proofs for the theoretical results. 
			
			
		\setcounter{figure}{0} 
		\setcounter{table}{0}
		\setcounter{section}{0}  
		\counterwithin{table}{section}
		\counterwithin{figure}{section}

\appendix

\section{Additional simulation results}\label{appendix_additionalnumericalresult}

\subsection{Numerical results from the simulation studies}

In this section, we provide the numerical results from the simulation studies.

\begin{table}[ht]
	\centering
	\begingroup
	\footnotesize
	\setlength{\tabcolsep}{5pt}
	\renewcommand{\arraystretch}{1.0}
	\begin{tabular}{llcccccc}
		\toprule
		& & \multicolumn{3}{c}{$1-\alpha=0.95$} & \multicolumn{3}{c}{$1-\alpha=0.90$} \\
		\cmidrule(lr){3-5}\cmidrule(lr){6-8}
		$(p,n)$ & $\Sigma_0$ / $r$ & (a) & (b) & (c) & (a) & (b) & (c) \\
		\midrule
		\multirow{3}{*}{$(200,500)$}
		& (I)   & 0.960 & 0.950 & 0.943 & 0.903 & 0.898 & 0.895 \\
		& (II)  & 0.953 & 0.956 & 0.941 & 0.899 & 0.902 & 0.885 \\
		& (III) & 0.953 & 0.943 & 0.933 & 0.903 & 0.878 & 0.887 \\
		\midrule
		\multirow{3}{*}{$(500,750)$}
		&  (I)   & 0.956 & 0.946 & 0.949 & 0.910 & 0.909 & 0.895 \\
		&  (II)  & 0.959 & 0.945 & 0.946 & 0.912 & 0.889 & 0.895 \\
		&  (III) & 0.955 & 0.944 & 0.939 & 0.901 & 0.889 & 0.898 \\
		\midrule
		\multirow{3}{*}{$(750,500)$}
		&  (I)   & 0.953 & 0.941 & 0.942 & 0.914 & 0.891 & 0.880 \\
		&  (II)  & 0.953 & 0.946 & 0.946 & 0.901 & 0.893 & 0.891 \\
		&  (III) & 0.939 & 0.938 & 0.939 & 0.883 & 0.885 & 0.883 \\
		\bottomrule
		\vspace{0.2pt}
	\end{tabular}
	\endgroup
	\caption{Empirical coverage probabilities of the proposed confidence interval for the spectral edge $\mathsf E$ under Gaussian data. The rows labeled Case (I)--(III) correspond to the three covariance models described in Section~\ref{sec:simu_setup}, while the columns (a)--(c) correspond to the three null configurations for $r$: (a) $r=0$, (b) $r=1$ with $\widetilde{\sigma}_1=6$, and (c) $r=2$ with $\widetilde{\sigma}_1=7$ and $\widetilde{\sigma}_2=6$. The nominal coverage levels are $1-\alpha=0.95$ and $1-\alpha=0.90$. Across the different covariance models, spike configurations, and dimension pairs, the empirical coverage probabilities are generally close to their nominal levels, indicating good finite-sample accuracy of the proposed procedure. The reported results are based on 1000 simulation replications.}
	\label{tab:coverage_gaussian}
\end{table}

\begin{figure}[!ht]
	\centering
	\includegraphics[
	width=1.05\textwidth
	]{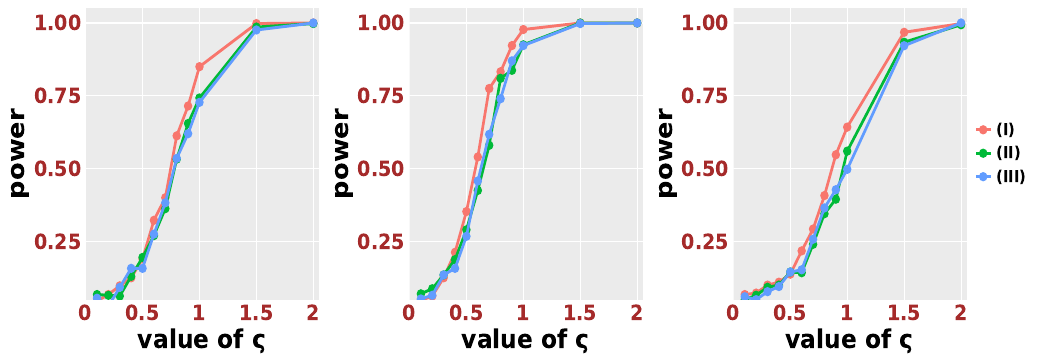}
	\caption{{\footnotesize Empirical power of the proposed procedure under the alternative hypothesis with $r_0=2$. From left to right, the three
			plots correspond to the three dimension pairs $(p,n)=(200,500), (500,750)$ and $(750,500)$, respectively. In all plots, the spike configuration is given by $\widetilde{\sigma}_1=7$ and $\widetilde{\sigma}_2=-\mathsf{b}^{-1}+\varsigma$, where $\varsigma>0$ controls the local separation of the second spike from the threshold in (\ref{eq_spiketruedefinition}). The horizontal axis shows the value of $\varsigma$, and the vertical axis reports the empirical power based on 1000 simulation replications. Within each plot, the three curves labeled (I)--(III) correspond to the covariance models in Section~\ref{sec:simu_setup}.}}
	\label{fig:power}
\end{figure}

\begin{figure}[!ht]
	\centering
	\includegraphics[width=1.05\textwidth]{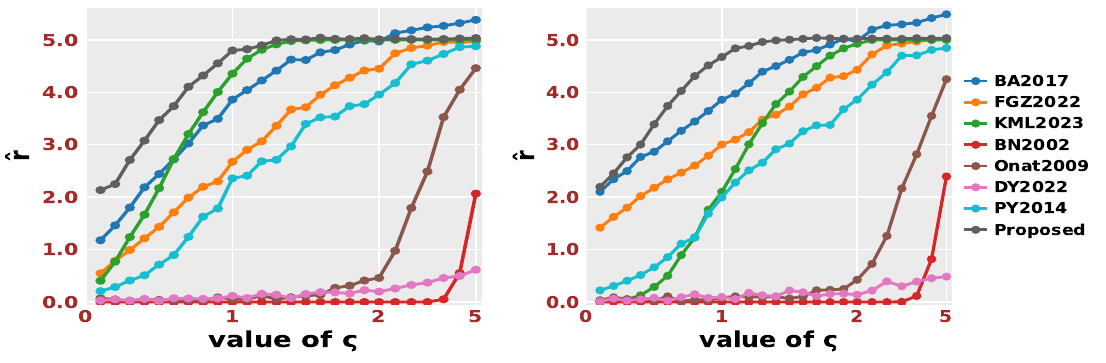}
	\caption{ {\footnotesize
			Comparison of spike-number estimation for $(p,n)=(200,500)$ under Gaussian data. The left and the right correspond to covariance models Case~(II) and Case~(III), respectively, as described in Section~\ref{sec:simu_setup}. In both settings, the population covariance matrix has $r=5$ identical spikes satisfying $\widetilde{\sigma}_1=\cdots=\widetilde{\sigma}_5=-\mathsf b^{-1}+\varsigma.$ The quantity $\widehat r$ denotes the average estimated number of spikes over 1000 simulation replications.}
	}
	\label{fig:mean_p200_n500}
\end{figure}

\begin{figure}[!ht]
	\centering
	\includegraphics[
	width=1.05\textwidth
	]{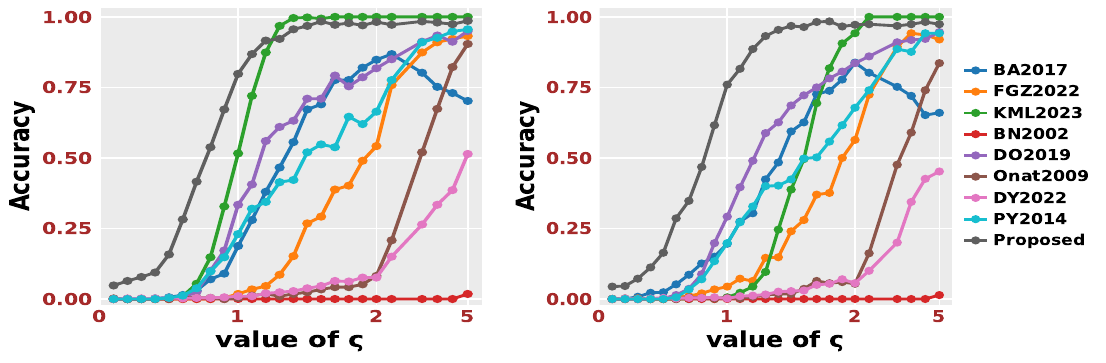}
	\caption{ {\footnotesize
			Comparison of the empirical accuracy of spike-number estimation
			for $(p,n)=(200,500)$ under Gaussian data. The left and the right
			correspond to the covariance matrix models Case~(II) and
			Case~(III), respectively, as described in
			Section~\ref{sec:simu_setup}. In both configurations, the
			population covariance matrix has $r=5$ equal spikes satisfying
			$\widetilde{\sigma}_1=\cdots=\widetilde{\sigma}_5
			=-\mathsf b^{-1}+\varsigma$. Empirical accuracy is defined as the
			proportion of replications in which the estimated number of
			spikes equals $5$. The results are based on $1{,}000$ simulation
			replications. }
	}
	\label{fig:accuracy_p200_n500}
	
\end{figure}

\subsection{Additional results of accuracy and power of the proposed procedure}
\label{appendix_additionalaccuracypower}

This appendix reports the finite-sample accuracy and power of the proposed procedure for standardized Student's-$t_{10}$ data. We use the same simulation setup as in Sections~\ref{sec:simu_setup} and \ref{sec_simuanalysisofouralgorithm}, except that the entries of $\mathbf z_i$ in \eqref{eq_populationfirstdefinition} are drawn i.i.d.\ from $t_{10}\sqrt{8/10}$, so that they have unit variance. This allows us to assess the robustness of the proposed procedure under moderately heavy-tailed data.

Table~\ref{tab:coverage_student_appendix} reports the finite-sample accuracy of the proposed confidence interval at the nominal levels $\alpha=0.05$ and $\alpha=0.1$ for standardized Student's-$t_{10}$ data. The empirical coverage probabilities are generally close to the corresponding nominal levels across different covariance structures, different values of $r$, and different dimension pairs $(p,n)$. These results demonstrate that the proposed procedure maintains accurate finite-sample performance under moderately heavy-tailed distributions.

\begin{table}[!ht]
	\centering
	\begingroup
	\footnotesize
	\setlength{\tabcolsep}{5pt}
	\renewcommand{\arraystretch}{1.0}
	\begin{tabular}{llcccccc}
		\toprule
		& & \multicolumn{3}{c}{$1-\alpha=0.95$} & \multicolumn{3}{c}{$1-\alpha=0.90$} \\
		\cmidrule(lr){3-5}\cmidrule(lr){6-8}
		$(p,n)$ & $\Sigma_0$ / $r$ & (a) & (b) & (c) & (a) & (b) & (c) \\
		\midrule
		\multirow{3}{*}{$(200,500)$}
		& (I)   & 0.956 & 0.953 & 0.955 & 0.906 & 0.905 & 0.914 \\
		& (II)  & 0.950 & 0.953 & 0.947 & 0.903 & 0.911 & 0.907 \\
		& (III) & 0.941 & 0.957 & 0.940 & 0.886 & 0.906 & 0.902 \\
		\midrule
		\multirow{3}{*}{$(500,750)$}
		& (I)   & 0.957 & 0.958 & 0.951 & 0.913 & 0.908 & 0.896 \\
		& (II)  & 0.957 & 0.949 & 0.953 & 0.911 & 0.890 & 0.897 \\
		& (III) & 0.952 & 0.941 & 0.944 & 0.910 & 0.895 & 0.885 \\
		\midrule
		\multirow{3}{*}{$(750,500)$}
		& (I)   & 0.953 & 0.956 & 0.954 & 0.906 & 0.912 & 0.909 \\
		& (II)  & 0.937 & 0.950 & 0.938 & 0.883 & 0.903 & 0.886 \\
		& (III) & 0.944 & 0.952 & 0.950 & 0.897 & 0.895 & 0.911 \\
		\bottomrule
		\vspace{0.2pt}
	\end{tabular}
	\endgroup
	\caption{Empirical coverage probabilities of the proposed confidence interval for the spectral edge $\mathsf E$ under standardized $t_{10}$ data. The rows labeled Case (I)--(III) correspond to the three covariance models described in Section~\ref{sec:simu_setup}, while the columns (a)--(c) correspond to the three null configurations for $r$: (a) $r=0$, (b) $r=1$ with $\widetilde{\sigma}_1=6$, and (c) $r=2$ with $\widetilde{\sigma}_1=7$ and $\widetilde{\sigma}_2=6$. The nominal coverage levels are $1-\alpha=0.95$ and $1-\alpha=0.90$. Across the different covariance models, spike configurations, and dimension pairs, the empirical coverage probabilities are generally close to their nominal levels, indicating good finite-sample accuracy of the proposed procedure. The reported results are based on 1000 simulation replications.}
	\label{tab:coverage_student_appendix}
\end{table}

Figure~\ref{fig:t10_power_all_dimensions} illustrates the empirical power of the proposed procedure under the local alternative. Across all three dimension pairs and covariance models, the empirical power increases steadily with $\varsigma$, and once the local separation exceeds a moderate level, the proposed procedure quickly achieves high power. Similar to the trend in Figure~\ref{fig:power}, this numerical behavior is consistent with the theoretical prediction in Theorem~\ref{thm_power}. These results further demonstrate the robust finite-sample performance of the proposed procedure under moderately heavy-tailed distributions.


\begin{figure}[!ht]
	\centering
	\includegraphics[
	width=1.05\textwidth
	]{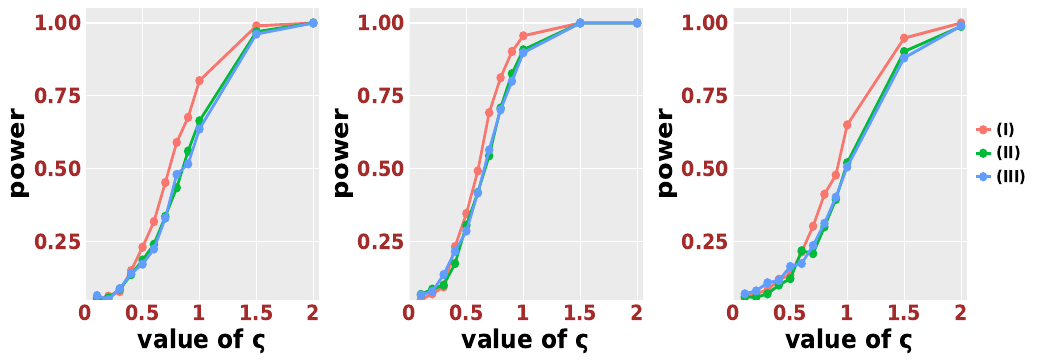}
	\caption{Empirical power of the proposed procedure under the alternative hypothesis with $r_0=2$ or standardized Student-$t_{10}$ data. From left to right, the three
		plots correspond to the three dimension pairs $(p,n)=(200,500), (500,750)$ and $(750,500)$, respectively. In all plots, the spike configuration is given by $\widetilde{\sigma}_1=7$ and $\widetilde{\sigma}_2=-\mathsf{b}^{-1}+\varsigma$, where $\varsigma>0$ controls the local separation of the second spike from the threshold in (\ref{eq_spiketruedefinition}). The horizontal axis shows the value of $\varsigma$, and the vertical axis reports the empirical power based on 1000 simulation replications. Within each plot, the three curves labeled (I)--(III) correspond to the covariance models in Section~\ref{sec:simu_setup}.}
	\label{fig:t10_power_all_dimensions}
	\par
\end{figure}

\subsection{Additional results of detection of spikes and comparison with existing methods}\label{appendix_additionalspikecomparision}

This appendix reports additional simulation results for spike-number estimation by the proposed procedure and the competing methods considered in Section~\ref{sec_simudetectionofspike}. The spike configurations and covariance models are the same as those in Section~\ref{sec_simudetectionofspike}. For Gaussian data, we include the additional dimension pair $(p,n)=(750,500)$. For standardized Student's-$t_{10}$ data, we report results for dimension pairs $(p,n)=(200,500)$ and $(750,500)$.

Figure \ref{fig:mean_p750_n500} shows the average estimated number of spikes under Gaussian data. Across both dimension pairs, the proposed procedure reaches the true value $r=5$ earlier than all competing methods as the separation parameter $\varsigma$ increases. KML2023 and BA2017 also perform reasonably well, but typically require a larger separation before stabilizing near the true spike number. In contrast, BN2002, Onat2009, and DY2022 tend to underestimate the number of spikes over a wider range of $\varsigma$, while FGZ2022 and PY2014 improve more gradually. Moreover,  Figure~\ref{fig:accuracy_p750_n500} reports the corresponding exact detection accuracy under Gaussian data. The proposed method is competitive throughout the range of $\varsigma$ and approaches the nominal asymptotic level $1-\alpha/2$. KML2023 also performs well for large $\varsigma$, eventually achieving nearly perfect accuracy, but the proposed method has a clear advantage for small and moderate values of $\varsigma$, especially in the two-mass Case (III) covariance model. DO2019 is also competitive for moderate $\varsigma$, whereas BA2017, FGZ2022, and PY2014 improve more slowly. BN2002 performs poorly in these settings, while Onat2009 and DY2022 require substantially larger separations before their accuracy improves.

		
		\begin{figure}[!ht]
			\centering
			\includegraphics[
			width=1.05\textwidth
			]{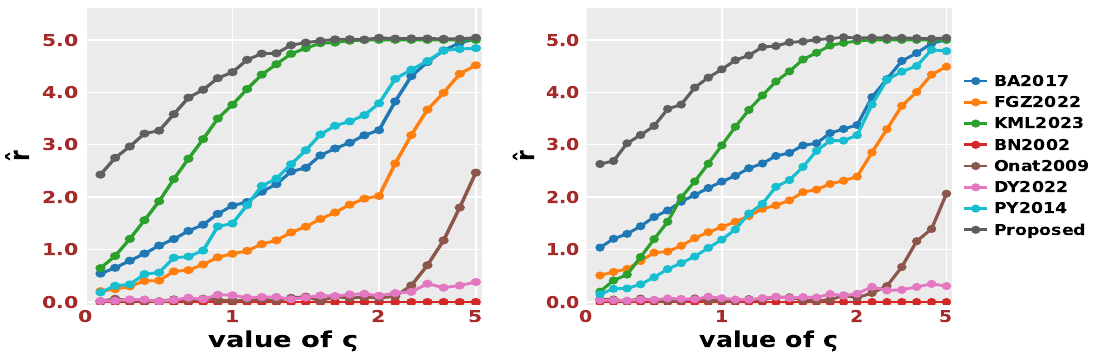}
			\caption{
				Comparison of spike-number estimation for $(p,n)=(750,500)$ under
				Gaussian data. The left and the right correspond to covariance models
				Case~(II) and Case~(III), respectively, as described in
				Section~\ref{sec:simu_setup}. In both settings, the population
				covariance matrix has $r=5$ identical spikes satisfying
				$\widetilde{\sigma}_1=\cdots=\widetilde{\sigma}_5
				=-\mathsf b^{-1}+\varsigma.$ The quantity $\widehat r$ denotes the
				average estimated number of spikes over 1000 simulation replications.
			}
			\label{fig:mean_p750_n500}
		\end{figure}
		
				
				\begin{figure}[!ht]
					\centering
					\includegraphics[
					width=1.05\textwidth
					]{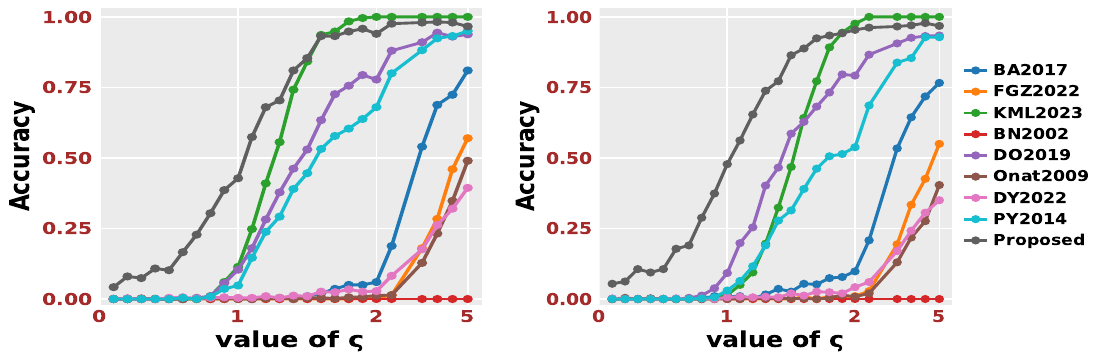}
					\caption{
						Comparison of the empirical accuracy of spike-number estimation
						for $(p,n)=(750,500)$ under Gaussian data. The left and the right
						correspond to the covariance matrix models Case~(II) and
						Case~(III), respectively, as described in
						Section~\ref{sec:simu_setup}. In both configurations, the
						population covariance matrix has $r=5$ equal spikes satisfying
						$\widetilde{\sigma}_1=\cdots=\widetilde{\sigma}_5
						=-\mathsf b^{-1}+\varsigma$. Empirical accuracy is defined as the
						proportion of replications in which the estimated number of spikes
						equals $5$. The results are based on 1000 simulation replications.
					}
					\label{fig:accuracy_p750_n500}
				\end{figure}

				Figures~\ref{fig:mean_t10_p200_n500} and \ref{fig:mean_t10_p750_n500} show the average estimated number of spikes under standardized Student's-$t_{10}$ data, while Figures~\ref{fig:accuracy_t10_p200_n500} and \ref{fig:accuracy_t10_p750_n500} report the exact detection accuracy. The qualitative conclusions are consistent with the Gaussian results: the proposed method remains accurate and robust across different dimension ratios, covariance structures, and entry distributions.
				%

						
						\begin{figure}[!ht]
							\centering
							\includegraphics[
							width=1.05\textwidth
							]{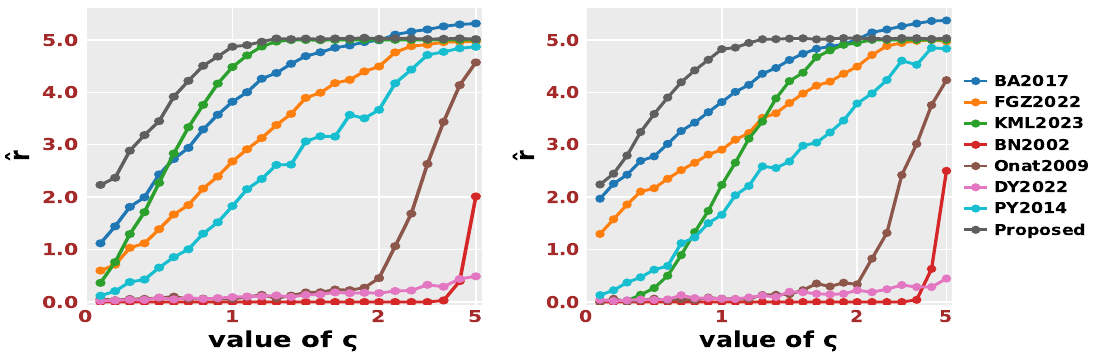}
							\caption{
								Comparison of spike-number estimation for $(p,n)=(200,500)$ under
								standardized $t_{10}$ data. The left and the right correspond to covariance
								models Case~(II) and Case~(III), respectively, as described in
								Section~\ref{sec:simu_setup}. In both settings, the population
								covariance matrix has $r=5$ identical spikes satisfying
								$\widetilde{\sigma}_1=\cdots=\widetilde{\sigma}_5
								=-\mathsf b^{-1}+\varsigma.$ The quantity $\widehat r$ denotes the
								average estimated number of spikes over 1000 simulation replications.
							}
							\label{fig:mean_t10_p200_n500}
						\end{figure}
						%
								
								\begin{figure}[!ht]
									\centering
									\includegraphics[
									width=1.05\textwidth
									]{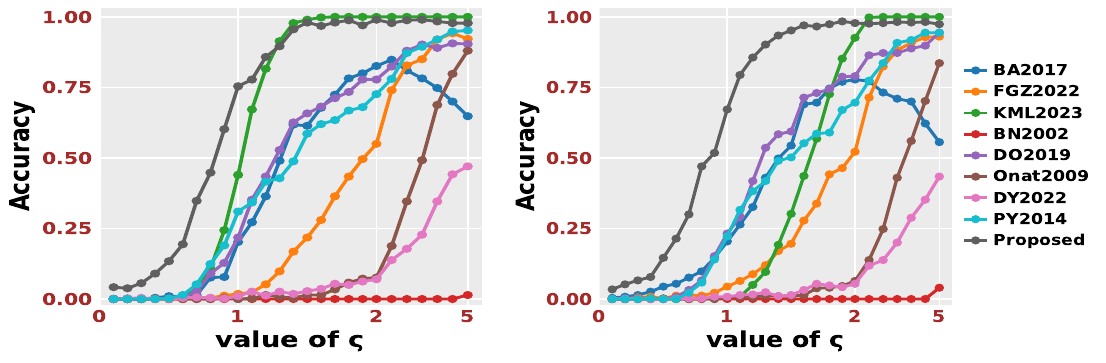}
									\caption{
										Comparison of the empirical accuracy of spike-number estimation
										for $(p,n)=(200,500)$ under standardized $t_{10}$ data. The left and the right correspond to covariance matrix models Case~(II) and
										Case~(III), respectively, as described in
										Section~\ref{sec:simu_setup}. In both configurations, the
										population covariance matrix has $r=5$ equal spikes satisfying
										$\widetilde{\sigma}_1=\cdots=\widetilde{\sigma}_5
										=-\mathsf b^{-1}+\varsigma$. Empirical accuracy is defined as the
										proportion of replications in which the estimated number of spikes
										equals $5$. The results are based on 1000 simulation replications.
									}
									\label{fig:accuracy_t10_p200_n500}
								\end{figure}
								%
								%
								%
								
										
										\begin{figure}[!ht]
											\centering
											\includegraphics[
											width=1.05\textwidth
											]{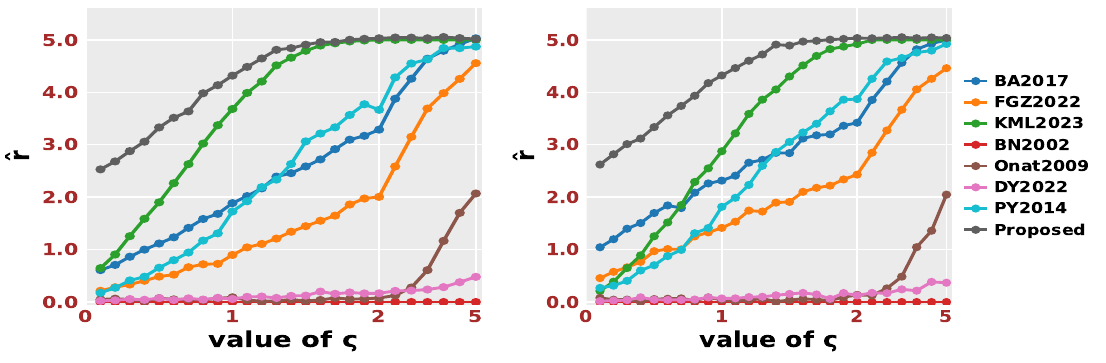}
											\caption{
												Comparison of spike-number estimation for $(p,n)=(750,500)$ under
												standardized $t_{10}$ data. The left and the right correspond to covariance
												models Case~(II) and Case~(III), respectively, as described in
												Section~\ref{sec:simu_setup}. In both settings, the population
												covariance matrix has $r=5$ identical spikes satisfying
												$\widetilde{\sigma}_1=\cdots=\widetilde{\sigma}_5
												=-\mathsf b^{-1}+\varsigma.$ The quantity $\widehat r$ denotes the
												average estimated number of spikes over 1000 simulation replications.
											}
											\label{fig:mean_t10_p750_n500}
										\end{figure}
										
												
												\begin{figure}[!ht]
													\centering
													\includegraphics[
													width=1.05\textwidth
													]{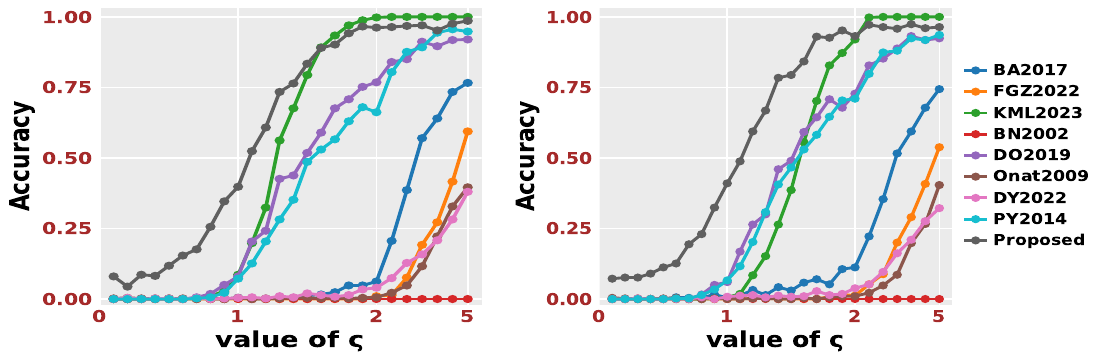}
													\caption{
														Comparison of the empirical accuracy of spike-number estimation
														for $(p,n)=(750,500)$ under standardized $t_{10}$ data. The left and the right correspond to covariance matrix models Case~(II) and
														Case~(III), respectively, as described in
														Section~\ref{sec:simu_setup}. In both configurations, the
														population covariance matrix has $r=5$ equal spikes satisfying
														$\widetilde{\sigma}_1=\cdots=\widetilde{\sigma}_5
														=-\mathsf b^{-1}+\varsigma$. Empirical accuracy is defined as the
														proportion of replications in which the estimated number of spikes
														equals $5$. The results are based on 1000 simulation replications.
													}
													\label{fig:accuracy_t10_p750_n500}
												\end{figure}

\section{Some preliminary results in random matrix theory}

\subsection{Notations, definitions and global laws}
In this subsection, we introduce notation, definitions, and several preliminary results. In terms of \eqref{eq_rmtmodel} and Assumption \ref{assum_X}, we rewrite the observed sample covariance matrix and its non-spiked counterpart by
\begin{align}\label{eq_def_originalcovariance}
	Q_{\mathrm{Sam}}=\Sigma^{1/2}XX^{\top}\Sigma^{1/2},\quad Q_{\mathrm{Sam}}^0:=\Sigma_0^{1/2}XX^{\top}\Sigma_0^{1/2}.
\end{align}
As discussed in Section \ref{sec_background} and in Section 2.2 of \citep{knowles2017anisotropic}, the LSD of $Q_{\mathrm{Sam}}^0$ is characterized by the generalized Marchenko-Pastur law with density $\varrho$, whose Stieltjes transform $m(z)$ satisfies the equation in \eqref{eq_mp} that $f(m(z))=z, \ z\in\mathbb{C}_{+}.$ Equivalently,
\begin{align}\label{eq_systemequation_mp}
	m(z)=\frac{1}{-z+\frac{1}{n}\sum_{i=1}^p\frac{\sigma_i}{1+\sigma_im(z)}}.
\end{align}
Moreover, the rightmost edge $\mathsf{E}$ of $\varrho$ satisfies that \citep{el2007tracy}
\begin{align}\label{eq_systemequation_edge_mp}
	m(\mathsf{E})=\frac{1}{-\mathsf{E}+\frac{1}{n}\sum_{i=1}^p\frac{\sigma_i}{1+\sigma_im(\mathsf{E})}},\quad 1=\frac{\frac{1}{n}\sum_{i=1}^p\frac{\sigma_i^2}{|1+\sigma_im(\mathsf{E})|^2}}{|\mathsf{E}-\frac{1}{n}\sum_{i=1}^p\frac{\sigma_i}{1+\sigma_im(\mathsf{E})}|^2}.
\end{align}

We now define the non-spiked multiplier bootstrap matrix by $Q_{\mathrm{MB}}^0=\Sigma_0^{1/2}X\Xi^2X^{\top}\Sigma_0^{1/2}$, where $\Xi$ is the diagonal multiplier matrix with diagonal entries $\{\xi_i\}_{1\le i\le n}$. We denote the companion matrix of $Q_{\mathrm{MB}}^0$ by $\mathcal{Q}_{\mathrm{MB}}^0=\Xi X^{\top}\Sigma_0 X\Xi$. The empirical spectral distributions (ESDs) of $Q_{\mathrm{MB}}^0$ and $\mathcal{Q}_{\mathrm{MB}}^0$ are denoted by
\begin{equation}\label{eq_esd}
	\mu_{Q_{\mathrm{MB}}^0}:=\frac{1}{p}\sum_{i=1}^p \delta_{\lambda^0_i},\quad \mu_{\mathcal{Q}_{\mathrm{MB}}^0}:=\frac{1}{n}\sum_{j=1}^n \delta_{\lambda^0_j},
\end{equation}
where $\lambda_i^0:=\lambda_i(Q_{\mathrm{MB}}^0), i=1,\dots, p\wedge n$, while $\lambda_i^0=0,i=p\wedge n,\dots, p\vee n$, since the non-zero eigenvalues of $Q_{\mathrm{MB}}^0$ and $\mathcal{Q}_{\mathrm{MB}}^0$ are the same.

It is well known that these ESDs can be characterized through their Stieltjes transforms:
\begin{equation}\label{eq_mq}
	m_{Q_{\mathrm{MB}}^0}(z):=\int\frac{1}{x-z}\mu_{Q_{\mathrm{MB}}^0},\quad m_{\mathcal{Q}_{\mathrm{MB}}^0}(z):=\int\frac{1}{x-z}\mu_{\mathcal{Q}_{\mathrm{MB}}^0}, \ z \in \mathbb{C}_+.
\end{equation}
In the finite-sample regime, $\mu_{Q_{\mathrm{MB}}^0}$ reflects both the deformation of the spectral distribution of $\Sigma_0$ and the small perturbation induced by the multipliers. For sufficiently large $n$, the Stieltjes transform of $\mu_{Q_{\mathrm{MB}}^0}$ admits a nonrandom deterministic equivalent $m_{n,c}(z)$, which is uniquely characterized by the following system of equations:

\begin{align}\label{eq_systemequation1}
	m_{1n,c}(z)=\frac{1}{n}\sum_{i=1}^p &\frac{\sigma_i}{-z(1+\sigma_im_{2n,c}(z))},\quad m_{2n,c}(z)=\int\frac{t}{-z(1+ tm_{1n,c}(z))}\mathrm{d}F_{\xi^2}(t) , \nonumber\\
	&m_{n,c}(z)=\frac{1}{p}\sum_{i=1}^p\frac{1}{-z(1+\sigma_im_{2n,c}(z))},
\end{align} 
where $F_{\xi^2}(t)$ is the cumulative distribution function of the multiplier $\xi^2$. We impose the following technical assumption, which excludes certain singular behaviors of the limiting spectral density of $Q_{\mathrm{MB}}^0$ and is standard in the random matrix theory literature; see, for example, \cite{Bao2015,Ding&Yang2018,ding2021spiked,9779233,el2007tracy,Fan2019separable,knowles2017anisotropic,lee2016tracy}.
\begin{assumption}\label{assum_technique}
	When multipliers satisfy Definition \ref{defn_feasiblemultiplier}, for $\Sigma_0$ defined in \eqref{eq_basepopulationcovariancematrix}, we assume for some constant $\tau_0>0$ such that 
	\begin{align*}
		\min_{1\le i\le p}|1+\sigma_i m_{2n,c}(\mathsf{E}_{\mathrm{MB}})|\ge\tau_0,\quad \inf_{t\in\operatorname{supp}(F_{\xi^2})}|1+tm_{1n,c}(\mathsf{E}_{\mathrm{MB}})|\ge\tau_0,
	\end{align*}
	where recall that $\mathsf{E}_{\mathrm{MB}}$ is the rightmost edge of the limiting spectral density of $Q_{\mathrm{MB}}^0$.
\end{assumption}

The following result characterizes the limiting spectral density $\varrho_{\mathrm{MB}}$ of $Q_{\mathrm{MB}}^0$ through (\ref{eq_systemequation1}). Its proof follows by adapting the arguments of \cite[Theorem 2]{Karoui2009} and \cite[Theorem 1]{paul2009no}, so we omit the details.
\begin{theorem}\label{lem_solutionsystem1}
	Suppose Assumptions \ref{assu_modelassumption} and \ref{assum_technique}, and Definition \ref{defn_feasiblemultiplier} hold. Then 
	for any $z \in \mathbb{C}_+,$ when $n$ is sufficiently large, there exists a unique solution $(m_{1n,c}(z), m_{2n,c}(z), m_{n,c}(z)) \in \mathbb{C}_+^3$ to the systems of equations in  (\ref{eq_systemequation1}). Moreover, $m_{n,c}(z)$ is the Stieltjes transform associated with probability density function $\varrho_{\mathrm{MB}}\equiv\varrho_{\mathrm{MB},n}$ defined on $\mathbb{R}.$ 
\end{theorem}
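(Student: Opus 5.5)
The plan is to follow the classical route of \cite[Theorem 2]{Karoui2009} and \cite[Theorem 1]{paul2009no} for separable models $A^{1/2}XBX^\top A^{1/2}$. Here we take $A=\Sigma_0$ and $B=\Xi^2$, and we treat $B$ as having (asymptotic) spectral distribution $F_{\xi^2}$. Since the $\xi_i^2$ are i.i.d. and independent of $X$, the empirical distribution of $\{\xi_i^2\}$ converges to $F_{\xi^2}$. We may also condition on the multipliers and regard $\Xi^2$ as deterministic.

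\textbf{Existence.} We obtain existence of a solution in $\mathbb{C}_+^3$ by a compactness argument.
\begin{itemize}
\item For fixed $z\in\mathbb{C}_+$, consider the map $T:(m_1,m_2)\mapsto(\cdots)$ defined by the first two equations of \eqref{eq_systemequation1}.
\item Each coordinate is a finite or integral average of terms of the form $t/(-z(1+t m))$. When $\operatorname{Im} m$ and $z$ have the appropriate signs, these are Stieltjes transforms of positive measures; equivalently, $z m_1$ and $z m_2$ have negative imaginary part.
\item Hence $T$ maps the closed set $\{m:\operatorname{Im}(zm)\le 0,\ |m|\le C/\operatorname{Im} z\}$ into itself.
\item A Brouwer fixed point argument, or alternatively a limit of the finite-$n$ resolvent quantities (the standard construction), then produces a solution. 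Indeed, the tightness of the ESD of $Q^0_{\mathrm{MB}}$ (Assumption \ref{assu_modelassumption}(i),(ii) and $\mathbb E\xi^2=1$) gives subsequential limits of $\frac1n\operatorname{tr}$ of weighted resolvents. By the Schur complement / leave-one-out identities, these limits satisfy \eqref{eq_systemequation1}.
\end{itemize}

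\textbf{Uniqueness.} This is the main step. Suppose two solutions $(m_1,m_2)$ and $(\tilde m_1,\tilde m_2)$ exist, and subtract the equations. This gives
\begin{align*}
m_1-\tilde m_1=(m_2-\tilde m_2)\,\frac1n\sum_i\frac{\sigma_i^2}{z(1+\sigma_i m_2)(1+\sigma_i\tilde m_2)},\qquad
m_2-\tilde m_2=(m_1-\tilde m_1)\int\frac{t^2\,\mathrm dF_{\xi^2}(t)}{z(1+tm_1)(1+t\tilde m_1)}.
\end{align*}
Multiplying the two identities, it suffices to show that the product of the two coefficients has modulus strictly less than one.
\begin{itemize}
\item Take imaginary parts of each original equation. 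This gives $\operatorname{Im}(zm_1)$ as $-\operatorname{Im}(zm_2)\cdot\frac1n\sum\sigma_i^2/|1+\sigma_im_2|^2 - \operatorname{Im} z\cdot(\text{positive})$, and similarly for $m_2$.
\item Apply Cauchy--Schwarz to each coefficient, e.g.
\[
\Big|\frac1n\sum\frac{\sigma_i^2}{z(1+\sigma_i m_2)(1+\sigma_i\tilde m_2)}\Big|\le\Big(\frac1n\sum\frac{\sigma_i^2}{|z||1+\sigma_im_2|^2}\Big)^{1/2}\Big(\cdots\tilde m_2\cdots\Big)^{1/2}.
\]
\item Compare these bounds with the imaginary-part identities. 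The strictly positive $\operatorname{Im} z$ term then yields strict contraction.
\end{itemize}
Since $\operatorname{Im} z>0$ is fixed, we need no use of the smallness of $\operatorname{Var}(\xi^2)$ here, beyond finiteness of the moments of $\xi^2$. Once $(m_1,m_2)$ is unique, $m_{n,c}$ is determined by the third equation.

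\textbf{Stieltjes transform property.} To show that $m_{n,c}$ is the Stieltjes transform of a probability density $\varrho_{\mathrm{MB}}$, we proceed as follows.
\begin{itemize}
\item Show that $m_{n,c}$ is analytic on $\mathbb{C}_+$, by the implicit function theorem using the same contraction estimate, and that $\operatorname{Im} m_{n,c}>0$.
\item Show that $-iy\,m_{n,c}(iy)\to1$ as $y\to\infty$, because $m_1,m_2=\mathrm O(1/y)$ there.
\item By the Nevanlinna representation, $m_{n,c}$ is then the Stieltjes transform of a probability measure.
\item Since it is a pointwise limit of the ESD transforms $m_{Q^0_{\mathrm{MB}}}$, the measure is supported on $[0,C]$.
\item Absolute continuity on $(0,\infty)$ follows as in \cite{Karoui2009}, by showing that $m_{n,c}$ extends continuously to the real axis away from $0$. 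This uses the bound from the contraction step as $\operatorname{Im} z\downarrow0$.
\end{itemize}
The boundary extension near $\mathsf E_{\mathrm{MB}}$ is the delicate point. Assumption \ref{assum_technique} keeps $1+\sigma_i m_{2n,c}$ and $1+tm_{1n,c}$ away from zero there, so the extension stays regular.
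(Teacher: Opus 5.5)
Your route is the one the paper itself invokes: it omits the proof and defers to \cite[Theorem 2]{Karoui2009} and \cite[Theorem 1]{paul2009no}. Your uniqueness and Nevanlinna steps are sound. Taking imaginary parts gives $\operatorname{Im} m_1=\frac{\operatorname{Im} z}{|z|^2}\frac1n\sum_i\frac{\sigma_i}{|1+\sigma_i m_2|^2}+AB\,\operatorname{Im} m_1$, with $A=\frac1n\sum_i\frac{\sigma_i^2}{|z||1+\sigma_i m_2|^2}$ and $B=\int\frac{t^2\,\mathrm{d}F_{\xi^2}(t)}{|z||1+tm_1|^2}$. Hence $AB<1$, Cauchy--Schwarz bounds the product of your two coefficients by $\sqrt{AB}\sqrt{\tilde A\tilde B}<1$, and the same bound makes the Jacobian invertible for analyticity.

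Three steps, however, do not work as written. First, the sign in the existence step is reversed. In fact $\operatorname{Im}(zm_1)=\frac1n\sum_i\sigma_i^2\operatorname{Im} m_2/|1+\sigma_i m_2|^2\ge0$, and likewise $\operatorname{Im}(zm_2)\ge 0$. So your set $\{\operatorname{Im}(zm)\le 0,\ |m|\le C/\operatorname{Im}z\}$ is not invariant, and it even contains the poles $m_2=-1/\sigma_i$ whenever $1/\sigma_i\le C/\operatorname{Im} z$. Brouwer does apply on $\{m:\operatorname{Im}m\ge0,\ \operatorname{Im}(zm)\ge0,\ |m|\le C/\operatorname{Im}z\}$, on which $|z(1+tm)|\ge\operatorname{Im}z$ for all $t\ge0$.

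Second, you conflate the deterministic system \eqref{eq_systemequation1} with the random one. For fixed $n$ it involves the $n$-dependent law $F_{\xi^2}$, and $m_{n,c}$ is a deterministic equivalent, not a pointwise limit of $m_{Q^0_{\mathrm{MB}}}$; conditioning on $\Xi$ produces \eqref{eq_systemequation2} instead. So neither the resolvent-limit construction nor your support argument is available as stated. You would need a blow-up at fixed $n$, e.g.\ $\Sigma_0\otimes I_N$ with $Nn$ i.i.d.\ multipliers drawn from $F_{\xi^2}$ and $N\to\infty$. Support in $[0,\infty)$ follows directly from $\operatorname{Im}(zm_{n,c})=\frac1p\sum_i\sigma_i\operatorname{Im}m_2/|1+\sigma_i m_2|^2>0$. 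Bounded support is not part of the statement and requires bounded (truncated) multipliers.

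Third, the density step is where the real work lies, and the contraction bound cannot deliver it. By the identity above, $1-AB=\frac{\operatorname{Im}z}{|z|^2\operatorname{Im}m_1}\frac1n\sum_i\frac{\sigma_i}{|1+\sigma_im_2|^2}\to0$ as $\operatorname{Im}z\downarrow0$ in the interior of the support, so the contraction factor degenerates exactly where you need it. What is needed is an a priori bound on $|m_{1n,c}|$ and $|m_{2n,c}|$, hence on $\operatorname{Im}m_{n,c}$, uniformly on $\{\operatorname{Re}z\ge\varepsilon,\ 0<\operatorname{Im}z\le 1\}$. Stieltjes inversion then gives absolute continuity on $(\varepsilon,\infty)$. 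Continuity of the density is a further boundary-uniqueness argument, as done by Silverstein and Choi for the one-sided model and by Couillet and Hachem for separable models.

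Your restriction to $(0,\infty)$ is in fact necessary. One checks $p\,m_{n,c}(z)=n\,\underline{m}(z)-(p-n)/z$, where $\underline{m}(z)=\int\frac{\mathrm{d}F_{\xi^2}(t)}{-z(1+tm_{1n,c}(z))}$ is itself the Stieltjes transform of a probability measure. So for $p>n$, which Assumption~\ref{assu_modelassumption}(i) allows, $\varrho_{\mathrm{MB}}$ has an atom of mass at least $1-n/p$ at $0$, and the ``density'' in the statement is only accurate away from the origin.
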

For technical convenience, we also introduce the following random discretized analogue of \eqref{eq_systemequation1} for sufficiently large $n$.

\begin{definition}\label{defn_couplesystem} For $z \in \mathbb{C}_+,$  we define the triplets $(m_{1n}(z), m_{2n}(z), m_n(z))\in\mathbb{C}^3_+,$  via the following systems of equations.
	
	\begin{align}\label{eq_systemequation2}
		m_{1n}(z)=\frac{1}{n}\sum_{i=1}^p & \frac{\sigma_i}{-z(1+\sigma_im_{2n}(z))},\quad m_{2n}(z)=\frac{1}{n}\sum_{i=1}^n\frac{\xi_i^2}{-z(1+\xi^2_im_{1n}(z))},\\
		& m_{n}(z)=\frac{1}{p}\sum_{i=1}^p\frac{1}{-z(1+\sigma_im_{2n}(z))}. \nonumber
	\end{align}
\end{definition}
Analogously to Theorem \ref{lem_solutionsystem1}, we have the following result for $(m_{1n}(z),m_{2n}(z),m_n(z))$.
\begin{theorem}\label{lem_solutionsystem2}
	Suppose Assumptions \ref{assu_modelassumption} and \ref{assum_technique} and Definition \ref{defn_feasiblemultiplier} hold. Then conditional on some event $\Omega_{\Xi} \equiv \Omega_{n,\Xi}$ that $\mathbb{P}(\Omega_{\Xi})=1-\mathrm{o}(1),$ for any $z \in \mathbb{C}_+,$ when $n$ is sufficiently large, there exists a unique solution $(m_{1n}(z), m_{2n}(z), m_{n}(z)) \in \mathbb{C}_+^3$ to the systems of equations in  (\ref{eq_systemequation2}). Moreover, $m_n(z)$ is the Stieltjes transform of some probability density function $\widehat{\varrho}_{\mathrm{MB}} \equiv \widehat{\varrho}_{\mathrm{MB},n}$ defined on $\mathbb{R}.$ 
\end{theorem}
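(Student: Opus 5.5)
Theorem \ref{lem_solutionsystem2} will be derived from Theorem \ref{lem_solutionsystem1} by viewing (\ref{eq_systemequation2}) as the special case of (\ref{eq_systemequation1}) in which $F_{\xi^2}$ is replaced by the empirical distribution $\widehat F_{\xi^2}:=n^{-1}\sum_{i=1}^n\delta_{\xi_i^2}$. With this choice, the second equation of (\ref{eq_systemequation1}) is exactly the second equation of (\ref{eq_systemequation2}). The only inputs of Theorem \ref{lem_solutionsystem1} involving the multipliers are two properties: the perturbative size (Definition \ref{defn_feasiblemultiplier}) and the regularity condition Assumption \ref{assum_technique}. So the plan is to build a high-probability event $\Omega_\Xi$ on which $\widehat F_{\xi^2}$ has both properties, and then rerun the existence, uniqueness and Stieltjes-transform argument of \cite[Theorem 2]{Karoui2009} and \cite[Theorem 1]{paul2009no} for each fixed realization.

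\textbf{Step 1 (the event).} Let $\Omega_\Xi$ be the event on which:
\begin{itemize}
\item[(a)] $\max_i|\xi_i^2-1|\le n^{-1/6+\delta/2+\epsilon}$ for a small $\epsilon>0$;
\item[(b)] $\bigl|n^{-1}\sum_i\xi_i^2-1\bigr|\le n^{-2/3+\delta/2+\epsilon}$;
\item[(c)] $\bigl|n^{-1}\sum_i(\xi_i^2-1)^2-\operatorname{Var}(\xi^2)\bigr|\le n^{-\epsilon}\operatorname{Var}(\xi^2)$.
\end{itemize}
For the chi-squared multipliers of Example \ref{example_one}, $\xi^2-1$ is a normalized sum of $\sN$ i.i.d. centered $\chi^2_1$ variables and so has sub-exponential tails at scale $\sN^{-1/2}$. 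Hence (a) follows from a union bound, and (b) and (c) from Bernstein's inequality. In particular $\mathbb P(\Omega_\Xi)=1-\mathrm o(1)$. On $\Omega_\Xi$, the measure $\widehat F_{\xi^2}$ is supported in $[1-\mathrm o(1),1+\mathrm o(1)]$, which is exactly the regime in which Theorem \ref{lem_solutionsystem1} applies.

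\textbf{Step 2 (existence and uniqueness).} Fix $z\in\mathbb C_+$ and a realization in $\Omega_\Xi$. Substituting the first equation into the second reduces the system to a single fixed-point equation $m_{2n}=\Phi_z(m_{2n})$. Following \cite{Karoui2009}, one checks two things. First, $\Phi_z$ maps the relevant domain in $\mathbb C_+$ into itself: each map $w\mapsto -t/(z(1+tw))$ with $t>0$ preserves the appropriate half-plane. Second, $\Phi_z$ is a strict contraction in the hyperbolic metric. A solution is obtained as the limit of Stieltjes transforms of finite-$n$ random matrices with spectral weights $\widehat F_{\xi^2}$, and uniqueness follows from the standard identity $\operatorname{Im}m_{2n}=\operatorname{Im}m_{2n}\,\cdot A+\eta B$ with $A<1$ and $B>0$. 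The third equation then defines $m_n$ explicitly.

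\textbf{Step 3 (Stieltjes-transform property).} Show that $m_n$ is analytic on $\mathbb C_+$ with $\operatorname{Im}m_n>0$ and $-\mathrm i\eta\, m_n(\mathrm i\eta)\to1$ as $\eta\to\infty$. This gives a probability measure $\widehat\varrho_{\mathrm{MB}}$. Its absolute continuity, together with the boundedness of its density near the edge, follows from the lower bounds $|1+\sigma_i m_{2n}|\ge\tau_0/2$ and $|1+\xi_i^2m_{1n}|\ge\tau_0/2$ near the real axis.

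\textbf{Main obstacle.} I expect the hardest step to be establishing these lower bounds, i.e. transferring Assumption \ref{assum_technique} from $F_{\xi^2}$ to $\widehat F_{\xi^2}$. I would handle it by a stability and perturbation comparison between (\ref{eq_systemequation1}) and (\ref{eq_systemequation2}). Moment matching (b)–(c) together with the support bound (a) gives $|m_{jn}-m_{jn,c}|=\mathrm o(1)$ uniformly on compact subsets of $\mathbb C_+$ and up to the edge. This uses the fact that the Jacobian of the system is invertible away from $\mathsf E_{\mathrm{MB}}$, and has square-root degeneracy at $\mathsf E_{\mathrm{MB}}$, which is controlled by the $\tau_0$-bounds. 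Once the bounds transfer, the remainder of the argument is routine.
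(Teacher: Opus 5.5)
Your plan is essentially the paper's argument. The paper proves this theorem only by invoking \cite[Theorem 2]{Karoui2009} and \cite[Theorem 1]{paul2009no} for each fixed multiplier realization on the event of Definition \ref{def_OmegaM} (Lemma \ref{lem_probabilitycontrol_M}); your event (a)--(c) is an acceptable substitute, with the sup-bound (a) doing the job of the paper's truncation of $\chi^2_{\sN}/\sN$. Two adjustments of emphasis: existence, uniqueness and the Stieltjes-transform property hold for every realization of positive multipliers, so transferring Assumption \ref{assum_technique} to $\widehat F_{\xi^2}$ is not needed for this statement (the paper does it separately, in Lemmas \ref{lem_regulation}--\ref{lem_stabilityargument}); and such edge-local bounds cannot give absolute continuity on all of $\mathbb R$ in Step 3---when $p>n$ the measure has an atom of mass $1-n/p$ at $0$---so that step should only aim at a density on $(0,\infty)$.
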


\begin{remark}
	Several remarks on Theorem \ref{lem_solutionsystem2} are in order. First, its proof can be obtained by adapting the same arguments used for Theorem \ref{lem_solutionsystem1}; see again \cite[Theorem 2]{Karoui2009} and \cite[Theorem 1]{paul2009no}. Second, the high-probability event $\Omega_{\Xi}$ can be constructed explicitly as in Definition \ref{def_OmegaM} and Lemma \ref{lem_probabilitycontrol_M}. Third, we write $\widehat{\mathsf{E}}_{\mathrm{MB}}$ and $\mathsf{E}_{\mathrm{MB}}$ for the rightmost edges of $\widehat{\varrho}_{\mathrm{MB}}$ and $\varrho_{\mathrm{MB}}$, respectively.
\end{remark}

According to \eqref{eq_systemequation1} and \eqref{eq_systemequation2}, we define the following two bivariate functions: 
\begin{align}\label{eq_def_systemequation}
	F_{n,c}(x,y)=\frac{1}{n}\sum_{i=1}^p\frac{\sigma_i}{-y+\sigma_i\int\frac{t}{1+tx}dF_{\xi^2}(t)}-x,\quad F_n(x,y)=\frac{1}{n}\sum_{i=1}^p\frac{\sigma_i}{-y+\frac{\sigma_i}{n}\sum_{j=1}^n\frac{\xi^2_j}{1+x\xi^2_j}}-x.
\end{align}
These bivariate functions are standard characterization of the Stieltjes transforms in Theorem \ref{lem_solutionsystem1} and Theorem \ref{lem_solutionsystem2} in literature, see, for example, \citep{DingYang2021,YuZhaoZhou2025,boostsanalysis}.
Then the pairs $(m_{1n,c}(\mathsf{E}_{\mathrm{MB}}),\mathsf{E}_{\mathrm{MB}})$ and $(m_{1n}(\widehat{\mathsf{E}}_{\mathrm{MB}}),\widehat{\mathsf{E}}_{\mathrm{MB}})$ satisfy the following systems, respectively:
\begin{gather}\label{eq_systemequations_edge}
	F_{n,c}(m_{1n,c}(\mathsf{E}_{\mathrm{MB}}),\mathsf{E}_{\mathrm{MB}})=0,\quad \frac{\partial F_{n,c}}{\partial x}(m_{1n,c}(\mathsf{E}_{\mathrm{MB}}),\mathsf{E}_{\mathrm{MB}})=0,\\
	F_{n}(m_{1n}(\widehat{\mathsf{E}}_{\mathrm{MB}}),\widehat{\mathsf{E}}_{\mathrm{MB}})=0,\quad\frac{\partial F_n}{\partial x}(m_{1n}(\widehat{\mathsf{E}}_{\mathrm{MB}}),\widehat{\mathsf{E}}_{\mathrm{MB}})=0.
\end{gather}
Moreover, by \eqref{eq_systemequation1} and \eqref{eq_systemequations_edge}, we may rewrite
\begin{align*}
	m_{2n,c}(\mathsf{E}_{\mathrm{MB}})=\int\frac{t}{-\mathsf{E}_{\mathrm{MB}}+\frac{1}{n}\sum_{i=1}^p\frac{\sigma_i t}{1+\sigma_im_{2n,c}(\mathsf{E}_{\mathrm{MB}})}}\mathrm{d}F_{\xi^2}(t).
\end{align*}
Comparing this with \eqref{eq_systemequation_edge_mp}, we see under Assumption \ref{assum_technique} that
\begin{align}\label{eq_control_edgebias}
	|\mathsf{E}-\mathsf{E}_{\mathrm{MB}}|=\mathrm{O}(\operatorname{Var}\xi^2),\quad |m(\mathsf{E})-m_{2n,c}(\mathsf{E}_{\mathrm{MB}})|=\mathrm{O}(\operatorname{Var}\xi^2).
\end{align}

To simplify the notation, in the sequel of this paper, we will use the following convention to denote the eigenvalues of $Q^0_{\mathrm{Sam}}$, $Q_{\mathrm{Sam}}$, $Q^0_{\mathrm{MB}}$ and $Q_{\mathrm{MB}}$, respectively.

\begin{table}[!ht]
	\centering
	\captionsetup{justification=centering,singlelinecheck=true}
	\begin{tabular}{llll}
		\toprule
		Model           &   &Sample version & Bootstrapped version\\ \midrule
		\multirow{2}{*}{Non-spiked} & Matrix & $Q_{\mathrm{Sam}}^0:=\Sigma_0^{1/2}XX^\top \Sigma_0^{1/2}$ & $Q_{\mathrm{MB}}^0:=\Sigma_0^{1/2}X\Xi^2X^\top\Sigma_0^{1/2}$\\ \cmidrule(l){3-4} 
		& Eigenvalues & $\{\mu^0_i\}$ &  $\{\lambda^0_i\}$ \\ \midrule
		\multirow{2}{*}{Spiked} &Matrix & $Q_{\mathrm{Sam}}:=\Sigma^{1/2}XX^\top\Sigma^{1/2}$ & $Q_{\mathrm{MB}}:=\Sigma^{1/2}X\Xi^2X^\top\Sigma^{1/2}$  \\ \cmidrule(l){3-4}
		& Eigenvalues & $\{\mu_i\}$ & $\{\lambda_i\}$\\\bottomrule              \vspace{0.2pt}
	\end{tabular}
	\caption{Summary of commonly used notation that appears frequently in the proofs.}\label{table_notations}
\end{table}

\subsection{Some probability events}
We next introduce several high-probability events. First, we define an event measurable with respect to the $\sigma$-algebra generated by the multiplier matrix $\Xi$. Recalling Definition \ref{defn_feasiblemultiplier} and $\Xi^2=\{\xi^2_1,\dots,\xi^2_n\}$, we define
\begin{definition}\label{def_OmegaM}
	Let $C_{\Xi,1}, C_{\Xi,2}, C_{\Xi,3}$ and $C_{\Xi,4}$  be some positive constants and $0<c_{\Xi,1},c_{\Xi,2}, c_{\Xi,3}<1$ are some sufficiently small constants. Denote $\Omega_\Xi\equiv\Omega_{n,\Xi}$ be the event on $\{\xi^2_i\}$ so that the following conditions hold:
	\begin{align*}
		&\frac{1}{n}\sum_{i=1}^n\xi^2_i\leq C_{\Xi,1},\\
		&\left|\frac{1}{n}\sum_{i=1}^n\frac{\xi^2_i}{1+\xi^2_im_{1n,c}(\mathsf{E}_{\mathrm{MB}})}-\int\frac{t}{1+tm_{1n,c}(\mathsf{E}_{\mathrm{MB}})}\mathrm{d}F_{\xi^2}(t)\right|\leq \frac{C_{\Xi,2}n^{c_{\Xi,1}}}{\sqrt{n}},\\
		&\left|\sum_{i=1}^n(\xi^2_i-1)\right|\leq C_{\Xi,3}\sqrt{(n\log^{c_{\Xi,2}}n)\operatorname{Var}\xi^2},\\
		&\frac{1}{n}\sum_{i=1}^n(\xi^2_i-1)^2\leq C_{\Xi,4}\log^{c_{\Xi,3}}n\operatorname{Var}\xi^2.
	\end{align*}
\end{definition}

The following lemma shows that $\Omega_\Xi$ occurs with high probability under Definition \ref{defn_feasiblemultiplier}.

\begin{lemma}\label{lem_probabilitycontrol_M}
	Let $\Omega_\Xi$ be the event defined in Definition \ref{def_OmegaM}, suppose Definition \ref{defn_feasiblemultiplier} holds, we then have that when $n$ is sufficiently large  
	\begin{equation*}
		\mathbb{P}(\Omega_\Xi)=1-\mathrm{O}(\log^{-D_1}n),
	\end{equation*}
	for some constant $D_1>0$.
\end{lemma}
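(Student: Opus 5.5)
The plan is to bound the probability of the complement of each of the four defining conditions of $\Omega_\Xi$ separately, using only the two properties in Definition~\ref{defn_feasiblemultiplier}, namely $\mathbb{E}\xi^2=1$ and $\operatorname{Var}(\xi^2)\asymp n^{-1/3+\delta}$, together with the independence of $\xi_1^2,\ldots,\xi_n^2$. No higher moments of the multipliers are assumed, so I will use only Chebyshev's and Markov's inequalities and then take a union bound. The logarithmic factors $\log^{c_{\Xi,2}}n$ and $\log^{c_{\Xi,3}}n$ in Definition~\ref{def_OmegaM} are what make polylogarithmic decay achievable with second moments only. This is why the lemma claims a rate $\mathrm{O}(\log^{-D_1}n)$ rather than a polynomial one.

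\emph{Conditions one and three.} The average $\bar\xi:=n^{-1}\sum_i\xi_i^2$ has mean $1$ and variance $\operatorname{Var}(\xi^2)/n$. For the first condition, take $C_{\Xi,1}\ge 2$. Chebyshev's inequality then gives
\[
\mathbb{P}(\bar\xi>C_{\Xi,1})\le \mathbb{P}(|\bar\xi-1|>1)\le \operatorname{Var}(\xi^2)/n=\mathrm{O}(n^{-4/3+\delta}).
\]
For the third condition, note that $\operatorname{Var}\bigl(\sum_i(\xi_i^2-1)\bigr)=n\operatorname{Var}(\xi^2)$. Chebyshev's inequality then yields
\[
\mathbb{P}\Bigl(\Bigl|\sum_{i=1}^n(\xi_i^2-1)\Bigr|> C_{\Xi,3}\sqrt{n\log^{c_{\Xi,2}}n\,\operatorname{Var}\xi^2}\Bigr)\le C_{\Xi,3}^{-2}\log^{-c_{\Xi,2}}n .
\]

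\emph{Condition four.} The quantity $n^{-1}\sum_i(\xi_i^2-1)^2$ is nonnegative with expectation exactly $\operatorname{Var}(\xi^2)$. Markov's inequality therefore bounds the failure probability by $C_{\Xi,4}^{-1}\log^{-c_{\Xi,3}}n$.

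\emph{Condition two.} This is the only step that needs care. Set $m_*:=m_{1n,c}(\mathsf{E}_{\mathrm{MB}})$ and $g(t):=t/(1+tm_*)$; both are deterministic. By Assumption~\ref{assum_technique}, $|1+tm_*|\ge\tau_0$ for all $t\in\operatorname{supp}(F_{\xi^2})$, so $|g(\xi_i^2)|\le \xi_i^2/\tau_0$ almost surely. Consequently $\mathbb{E}g(\xi^2)=\int g\,\mathrm{d}F_{\xi^2}$ is finite, and
\[
\operatorname{Var}(g(\xi^2))\le \mathbb{E}\xi^4/\tau_0^2=(1+\operatorname{Var}\xi^2)/\tau_0^2=\mathrm{O}(1).
\]
The variables $g(\xi_i^2)$ are i.i.d., so Chebyshev's inequality gives
\[
\mathbb{P}\Bigl(\Bigl|\frac1n\sum_i g(\xi_i^2)-\int g\,\mathrm{d}F_{\xi^2}\Bigr|>\frac{C_{\Xi,2}n^{c_{\Xi,1}}}{\sqrt n}\Bigr)\le \frac{C}{C_{\Xi,2}^2\,n^{2c_{\Xi,1}}},
\]
which is polynomially small. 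The main point to check here is that $m_*$ is a well-defined deterministic real number, with the limit $\eta\downarrow0$ taken as in the convention used for $m(\mathsf E)$. This is guaranteed by Theorem~\ref{lem_solutionsystem1} and the edge system \eqref{eq_systemequations_edge}, and it is what justifies using the lower bound $\tau_0$ pointwise on the support.

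\emph{Conclusion.} A union bound over the four conditions gives
\[
\mathbb{P}(\Omega_\Xi^c)=\mathrm{O}\bigl(n^{-4/3+\delta}+n^{-2c_{\Xi,1}}+\log^{-c_{\Xi,2}}n+\log^{-c_{\Xi,3}}n\bigr)=\mathrm{O}(\log^{-D_1}n),
\]
with $D_1:=\min(c_{\Xi,2},c_{\Xi,3})>0$. This proves the lemma for any feasible multiplier, not only the chi-squared construction of Example~\ref{example_one}.
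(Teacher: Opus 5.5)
Your proposal is correct and follows essentially the same route as the paper: Chebyshev's inequality for the first and third conditions, Markov's inequality for the fourth, and an $\mathrm{O}(1)$ second-moment bound on the centered transform followed by Chebyshev/Markov for the second, combined with a union bound. The only minor difference is how you get the second-moment bound. You bound $\mathbb{E}|g(\xi^2)|^2$ directly through the pointwise denominator bound $|1+tm_*|\ge\tau_0$ together with $\mathbb{E}\xi^4=1+\operatorname{Var}(\xi^2)$. The paper instead obtains boundedness of $\int t^2|1+tm_{1n,c}(\mathsf{E}_{\mathrm{MB}})|^{-2}\,\mathrm{d}F_{\xi^2}(t)$ from the second edge equation combined with Assumption~\ref{assum_technique}.
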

The proof of Lemma \ref{lem_probabilitycontrol_M} will be postponed to Section \ref{sec_prf_Omega_M}. Next, when we study $Q_{\mathrm{MB}}^0$ conditional on $X$, we work on a high-probability event measurable with respect to the $\sigma$-algebra generated by $X$ under Assumption \ref{assum_X}. We denote this event by $\Omega_X$ and specify it below.
\begin{definition}\label{def_OmegaX}
	Denote $\Omega_X$ as the event on $X$ so that the following conditions hold for some small constants $\delta_*$ and $\tau_*$:
	\begin{enumerate}
		\setlength{\itemsep}{0.3em}
		\setlength{\parsep}{0pt}
		\setlength{\topsep}{0.3em}
		\setlength{\labelsep}{0.6em}
		\item [(1)] For any deterministic $p\times p$ matrix $A$, we have all $1\leq i,j\leq n$, the columns of $X$ satisfy
		\begin{align*}
			|\mathbf{x}_i^\top \mathbf{x}_j|=\mathrm{O}\left(n^{c_{X,1}}\sqrt{\frac{\|\mathbf{x}_i\|^2}{n}}\right), \, |\mathbf{x}_i^\top A\mathbf{x}_j|=\mathrm{O}\left(n^{c_{X,1}}\frac{\|A\|_F}{n}\right),\,|\mathbf{x}_i^\top A\mathbf{x}_i-\frac{1}{n}\operatorname{Tr}A|=\mathrm{O}\left(n^{c_{X,1}}\frac{\|A\|_F}{n}\right),
		\end{align*}
		for an arbitrarily small constant $c_{X,1}>0$.
		\item [(2)] $\|XX^\top\|\leq C_{X,1}$ for some constant $C_{X,1}>0$.
		\item [(3)] The eigenvalue rigidity and level repulsion estimates for $Q_{\mathrm{Sam}}^0$ take the form
		\begin{align*}
			&|\mu^0_i-\gamma_i|\leq C_{X,2}(i\wedge n+1-i)^{-1/3}n^{-2/3+c_{X,2}},\quad \text{where}\; n\int_{\gamma_i}^{\infty}\mathrm{d}\varrho=i-\frac{1}{2},\quad i=1,\dots,n;\\
			&|\mu^0_k-\mu^0_{k+1}|\geq C_{X,3}k^{-1/3}n^{-2/3-c_{X,3}}, \quad k=1,\dots, \lfloor \omega p \rfloor;\\
			&|\mu^0_1-\mathsf{E}|\geq C_{X,4} n^{-2/3-c_{X,3}},
		\end{align*}
		for some values $0<\omega<1$, $C_{X,2},C_{X,3},C_{X,4}>0$ and arbitrarily small constants $0<c_{X,2},c_{X,3}\le\delta_*$.
		\item [(4)] For any deterministic diagonal matrix $\Xi^2$, the rigidity of spiked eigenvalues of $Q_{\mathrm{MB}}$ holds as 
		\begin{align*}
			|\lambda_i-\vartheta^{\mathtt{MB}}_i|\le C_{X,5}n^{-1/2+c_{X,4}}\tau_{*}^{1/2},\quad i=1,\dots,r,
		\end{align*}
		where $1+\widetilde{\sigma}_im_{2n}(\vartheta^{\mathtt{MB}}_i)=0$ and $\tau_{*}=\widetilde{\sigma}_i+m_{2n}^{-1}(\widehat{\mathsf{E}}_{\mathrm{MB}})$,
		for some value $C_{X,5}>0$ and arbitrarily small constant $c_{X,4}>0$.
	\end{enumerate}
\end{definition}

The following lemma shows that, under appropriate regularity conditions, the event $\Omega_X$ also holds with high probability. 
\begin{lemma}\label{lem_probabilitycontrol_X}
	Let $\Omega_X$ be the event defined in Definition \ref{def_OmegaX}, under Assumptions \ref{assum_technique}, we have that when $n$ is sufficiently large,
	\begin{align*}
		\mathbb{P}(\Omega_X)=1-\mathrm{O}(n^{-D_2}),
	\end{align*}
	for some large constant $D_2>0$.
\end{lemma}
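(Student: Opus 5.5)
The plan is to control each of the four conditions in Definition \ref{def_OmegaX} separately. Each should fail with probability $\mathrm{O}(n^{-D})$ for arbitrarily large $D$, and the conclusion then follows from a union bound over the polynomially many indices involved. Throughout we use Assumption \ref{assum_X}: the entries $\sqrt n x_{ij}$ have all moments bounded. This gives stochastic-domination (``with overwhelming probability'') bounds of the type $n^{\epsilon}$ times the natural scale.

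Items (1) and (2) are routine.

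For (1), the bilinear and quadratic forms are handled by the standard large deviation estimates for quadratic forms of vectors with independent entries, as in Lemma 3.1 of \citep{bloemendal2016principal} or the Erd\H{o}s--Yau large deviation bounds. For $\mathbf x_i^\top A\mathbf x_j$ with $i\ne j$, and for $\mathbf x_i^\top A\mathbf x_i-n^{-1}\operatorname{Tr}A$, the fluctuation is $\prec \|A\|_F/n$. For $\mathbf x_i^\top\mathbf x_j$, we condition on $\mathbf x_i$ and apply the linear large deviation bound. Since each column has only $n$ entries, a union bound over the $n^2$ pairs costs only a factor $n^{2}$, which is absorbed.

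For (2), we use the known bound $\|XX^\top\|\le(1+\sqrt{p/n})^2+n^{-2/3+\epsilon}$ with overwhelming probability (e.g.\ \citep{knowles2017anisotropic}, or a moment/truncation argument). Under Assumption \ref{assu_modelassumption}(i) this gives a constant $C_{X,1}$.

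The main work is in (3) and (4).

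For the rigidity estimate in (3), we invoke the anisotropic local law and the resulting eigenvalue rigidity for $Q^0_{\mathrm{Sam}}$ from \citep{knowles2017anisotropic}. This is valid because Assumption \ref{assu_modelassumption}(ii) and the regularity of the edge (the square-root behavior ensured by Assumption \ref{assum_technique}, which at $\operatorname{Var}\xi^2=0$ reduces to the usual edge regularity $\min_i|1+\sigma_i m(\mathsf E)|\ge\tau_0$) hold.

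The level-repulsion statements in (3) are genuinely probabilistic rather than overwhelming-probability statements. These are $|\mu_k^0-\mu_{k+1}^0|\gtrsim k^{-1/3}n^{-2/3-c}$ and $|\mu_1^0-\mathsf E|\gtrsim n^{-2/3-c}$. For them the plan is as follows:
\begin{itemize}
\item Use edge universality (Green function comparison, \citep{Ding&Yang2018,lee2016tracy}) to compare the joint law of the top $\lfloor\omega p\rfloor$ rescaled eigenvalues with the GOE/Wishart case.
\item In the GOE/Wishart case, the probability that a gap is smaller than $n^{-c}$ times its typical size is $\mathrm{O}(n^{-c'})$.
\end{itemize}
This is the step I expect to be the main obstacle. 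Polynomially small failure probability of arbitrary order $D_2$ is not available from universality alone, and summing over $k\le\omega p$ is also delicate. I would first try to restrict the repulsion bound to a bounded number of top indices, or to a range of $k$ where the union is affordable. Failing that, I would replace $n^{-D_2}$ by an $\mathrm{o}(1)$ bound, which suffices for all later uses.

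For (4), fix a deterministic $\Xi^2$ lying in $\Omega_\Xi$. Then $Q_{\mathrm{MB}}$ is a separable-type sample covariance matrix $\Sigma^{1/2}X\Xi^2X^\top\Sigma^{1/2}$ with deterministic two-sided covariances. The outlier locations and rigidity follow from the spiked separable covariance results of \citep{ding2021spiked,DingYang2021}: each outlier satisfies $1+\widetilde\sigma_i m_{2n}(\vartheta_i^{\mathtt{MB}})=0$ with fluctuation $n^{-1/2+\epsilon}\tau_*^{1/2}$. Here Assumption \ref{assu_modelassumption}(iii) guarantees that the spikes are supercritical for the bootstrap edge $\widehat{\mathsf E}_{\mathrm{MB}}$, since the edge shift $\mathrm{O}(\operatorname{Var}\xi^2)$ from \eqref{eq_control_edgebias} is much smaller than $\mathfrak t$ because $\kappa>\delta/2$. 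The ``any deterministic $\Xi^2$'' quantifier is then handled by the uniformity of the constants in these local laws over the class of $\Xi^2$ satisfying Definition \ref{def_OmegaM}.
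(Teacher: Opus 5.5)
You are right that the repulsion bounds in (3) cannot be obtained with failure probability $\mathrm{O}(n^{-D_2})$. The cause is an error in the statement rather than a limitation of technique, and your last fallback does not repair it. Already for Gaussian $X$, $\mu^0_1$ fluctuates around $\mathsf E$ on the scale $n^{-2/3}$ with a $\mathrm{TW}_1$ limit, which has a bounded positive density. Hence $\mathbb P(|\mu^0_1-\mathsf E|<C_{X,4}n^{-2/3-c_{X,3}})$ is of order $n^{-c_{X,3}}$, and $c_{X,3}\le\delta_*$ is small. For the gap bound, the indices $k\asymp\omega p$ are bulk indices, where the requirement becomes $|\mu^0_k-\mu^0_{k+1}|\gtrsim n^{-1-c_{X,3}}$. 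With linear ($\beta=1$) repulsion, a gap falls below $s$ times its typical size with probability $\asymp s^2$. The expected number of violations among $k\le\omega p$ is therefore $\asymp n^{1-2c_{X,3}}\to\infty$ (the smallest bulk gap is of order $n^{-3/2}$), and the event fails with probability tending to one. So weakening $n^{-D_2}$ to $\mathrm{o}(1)$ is not enough on its own. You must also restrict to boundedly many top indices and accept a failure probability $\mathrm{O}(n^{-c})$ with $c$ tied to $c_{X,3}$. Alternatively, simply drop the repulsion bounds, which are never used later in the paper. Likewise, your rigidity argument only checks regularity of the rightmost edge. That gives rigidity for the top indices, which is all that is used downstream, but not the full range $i=1,\dots,n$ written in (3); that would also require regular lower and internal edges. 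The paper's own proof only points to the literature and does not address any of this.

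The step in your proposal that genuinely fails is the treatment of ``for any deterministic $\Xi^2$'' in (4). Uniform constants over $\Xi^2\in\Omega_\Xi$ give $\sup_{\Xi^2\in\Omega_\Xi}\mathbb P_X(\text{(4) fails for }\Xi^2)\le n^{-D}$. They do not give $\mathbb P_X(\exists\,\Xi^2\in\Omega_\Xi\text{ for which (4) fails})\le n^{-D}$, and a net over $n$-dimensional diagonal matrices has $e^{cn\log n}$ points, far more than polynomial bounds can absorb. The simultaneous version is in fact false. Let $Q^\pi_{\mathrm{MB}}$ denote the matrix built from $\xi^2_{\pi(j)}$ for a permutation $\pi$. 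Both $\Omega_\Xi$ and the location $\vartheta_i^{\mathtt{MB}}$ of (4), which is defined through $m_{2n}$, are permutation invariant. On the other hand, the Rayleigh quotient with the top eigenvector $\mathbf u_1$ of $Q_{\mathrm{MB}}$ gives $\lambda_1(Q^\pi_{\mathrm{MB}})\ge\lambda_1(Q_{\mathrm{MB}})+\sum_j(\xi^2_{\pi(j)}-\xi^2_j)(\mathbf u_1^\top\Sigma^{1/2}\mathbf x_j)^2$. Choosing $\pi$ by rearrangement makes this sum of order $\sqrt{\operatorname{Var}\xi^2}\asymp n^{-1/6+\delta/2}\gg n^{-1/2+c_{X,4}}\tau_*^{1/2}$, for instance for one fixed supercritical spike, because the weights $n(\mathbf u_1^\top\Sigma^{1/2}\mathbf x_j)^2$ have order-one spread.

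What your uniform estimate does yield, and what the later arguments (e.g.\ the bound on $\mathbb P(\mathcal B_{n,i}^c\mid X)$) actually need, is a Fubini version. By independence, $\mathbb E_X\,\mathbb P(\text{(4) fails},\,\Omega_\Xi\mid X)\le n^{-D}$. By Markov's inequality, with $X$-probability $1-\mathrm{O}(n^{-D/2})$ one then has $\mathbb P(\text{(4) fails},\,\Omega_\Xi\mid X)\le n^{-D/2}$, and $\Omega_X$ should be defined through this conditional probability. The same caveat applies to ``for any deterministic $A$'' in (1). Your large-deviation-plus-union-bound argument is valid for one fixed $A$, or polynomially many, but not for all $A$ at once; take $A=\mathbf x_i\mathbf x_j^\top/(\|\mathbf x_i\|\|\mathbf x_j\|)$. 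Also, the first two bounds there hold only for $i\neq j$.
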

\begin{proof}
	The results in Definition \ref{def_OmegaX} collects several established results in random matrix theory literature, see \cite{Bao2015,Alex2014,ding2024eigenvector,Ding&Yang2018,knowles2013isotropic,knowles2017anisotropic}. We omit further details here.
\end{proof}

\begin{remark}
	The following remark records several probabilistic conventions. In our setting there are two sources of randomness: the data matrix $X$ and the multiplier matrix $\Xi$.
	
	Since the multiplier matrix $\Xi$ is assumed to be independent of the data matrix $X$, we can realize them on a common product probability space. More specifically, let
	\begin{align*}
		(\widetilde{\Omega}_X,\mathcal{F}_X,\mathbb{P}_X)
		\qquad \text{and} \qquad
		(\widetilde{\Omega}_{\Xi},\mathcal{F}_{\Xi},\mathbb{P}_{\Xi}),
	\end{align*}
	be probability spaces on which $X$ and $\Xi$ are defined, respectively. We then consider the product probability space
	\[
	(\widetilde{\Omega},\mathcal{F},\mathbb{P})
	:=
	(\widetilde{\Omega}_X \times \widetilde{\Omega}_{\Xi},\ \mathcal{F}_X \otimes \mathcal{F}_{\Xi},\ \mathbb{P}_X \otimes \mathbb{P}_{\Xi}),
	\]
	where $\widetilde{\Omega}_X \times \widetilde{\Omega}_{\Xi}$ denotes the Cartesian product of the two sample spaces, $\mathcal{F}_X \otimes \mathcal{F}_{\Xi}$ denotes the product $\sigma$-algebra, and $\mathbb{P}_X \otimes \mathbb{P}_{\Xi}$ denotes the product probability measure. We regard $X$ and $\Xi$ as coordinate random elements on $\widetilde{\Omega}$, namely,
	\begin{align*}
		X(\omega_X,\omega_{\Xi}) := X(\omega_X),
		\qquad
		\Xi(\omega_X,\omega_{\Xi}) := \Xi(\omega_{\Xi}).
	\end{align*}
	In particular, $X$ and $\Xi$ are independent by construction.
	
	Building on the above product probability space, all random quantities considered below— including $Q_{\mathrm{MB}}^0=\Sigma_0^{1/2}X\Xi^2X^\top \Sigma_0^{1/2}$, $\lambda_1^0$, and $\xi_{(1)}^2$—are understood as random variables defined on $(\widetilde{\Omega},\mathcal{F},\mathbb{P})$. By convention, $\mathbb{P}$ denotes the joint probability measure on this product space.
	If an event depends only on $X$ (respectively, only on $\Xi$), then its probability under $\mathbb{P}$ coincides with the corresponding marginal probability under $\mathbb{P}_X$ (respectively, $\mathbb{P}_{\Xi}$). Moreover, $\mathbb{P}(\,\cdot\,\mid X)$ denotes the conditional probability with respect to $\sigma(X)$, as introduced above. 
	
	Based on the above discussion, let $\Omega_{\Xi} \in \sigma(\Xi)$ and $\Omega_X \in \sigma(X)$ denote the “good” events introduced in Definition~\ref{def_OmegaM} and Definition~\ref{def_OmegaX}, respectively. Since both are events in the common $\sigma$-algebra $\mathcal{F}$, their intersection $\Omega_X \cap \Omega_{\Xi}$ is well defined. Moreover, by independence,
	\begin{align*}
		\mathbb{P}(\Omega_X \cap \Omega_{\Xi})
		=
		\mathbb{P}(\Omega_X)\,\mathbb{P}(\Omega_{\Xi}).
	\end{align*}
\end{remark}

\section{Technical proofs}
\subsection{Proof of Theorem \ref{thm_boostrapmain}}\label{appendix_proofoftheorem32}

In this subsection, we prove Theorem \ref{thm_boostrapmain}. As discussed in Section \ref{sec_strate}, our basic idea is the decomposition \eqref{eq_decompositionproofstrategy}.  We start by considering the limiting distributions of the top eigenvalues of the non-spiked covariance matrix $Q_{\mathrm{MB}}^0$, and then transfer these distributions to the largest non-spiked eigenvalues of $Q_{\mathrm{MB}}$ using Lemma \ref{lem_finite_rank_edge_comparison}. To this end, we decompose the largest eigenvalue of $Q_{\mathrm{MB}}^0$ parallel to that of \eqref{eq_decompositionproofstrategy},
\begin{align}\label{eq_decomp_largesteigenvalue}
	\lambda_1^0-\mathsf{E}=\lambda_1^0-\widehat{\mathsf{E}}_{\mathrm{MB}}+\widehat{\mathsf{E}}_{\mathrm{MB}}-\mathsf{E}_{\mathrm{MB}}+\mathsf{E}_{\mathrm{MB}}-\mathsf{E}.
\end{align}
The key point is that, when the multipliers satisfy Definition \ref{defn_feasiblemultiplier}, we have $\widehat{\mathsf{E}}_{\mathrm{MB}}-\mathsf{E}_{\mathrm{MB}}$ contributes the leading Gaussian fluctuation, whereas $\lambda_1^0-\widehat{\mathsf{E}}_{\mathrm{MB}}$ is the intrinsic Tracy-Widom fluctuation and $\Delta_{\mathrm{edge}}=\mathsf{E}-\mathsf{E}_{\mathrm{MB}}$ acts as a deterministic shift, conditional on $X$. 

Now, we begin by showing that $\widehat{\mathsf{E}}_{\mathrm{MB}}-\mathsf{E}_{\mathrm{MB}}$ admits a Gaussian limit. Recall that $\widehat{\mathsf{E}}_{\mathrm{MB}}$ and $\mathsf{E}_{\mathrm{MB}}$ are the rightmost edges of $\widehat{\varrho}_{\mathrm{MB}}$ and $\varrho_{\mathrm{MB}}$, respectively. The pair $(\widehat{\mathsf{E}}_{\mathrm{MB}}, m_{1n}(\widehat{\mathsf{E}}_{\mathrm{MB}}))$ satisfies the system in \eqref{eq_systemequations_edge}:
\begin{gather}\label{eq_edgeequations1}
	m_{1n}(\widehat{\mathsf{E}}_{\mathrm{MB}})=\frac{1}{n}\sum_{i=1}^p \frac{\sigma_i}{-\widehat{\mathsf{E}}_{\mathrm{MB}}+\frac{\sigma_i}{n}\sum_j\frac{\xi^2_j}{1+\xi^2_jm_{1n}(\widehat{\mathsf{E}}_{\mathrm{MB}})}},\\ 1=\frac{1}{n}\sum_{i=1}^p \frac{\frac{\sigma_i^2}{n}\sum_j\frac{\xi^4_j}{|1+\xi^2_jm_{1n}(\widehat{\mathsf{E}}_{\mathrm{MB}})|^2}}{\left|\widehat{\mathsf{E}}_{\mathrm{MB}}-\frac{\sigma_i}{n}\sum_j\frac{\xi^2_j}{1+\xi^2_jm_{1n}(\widehat{\mathsf{E}}_{\mathrm{MB}})}\right|^2}. \nonumber
\end{gather}
Similarly, the pair $(\mathsf{E}_{\mathrm{MB}}, m_{1n,c}(\mathsf{E}_{\mathrm{MB}}))$ satisfies
\begin{gather}\label{eq_edgeequations2}
	m_{1n,c}(\mathsf{E}_{\mathrm{MB}})=\frac{1}{n}\sum_{i=1}^p\frac{\sigma_i}{-\mathsf{E}_{\mathrm{MB}}+\sigma_i\int\frac{t}{1+tm_{1n,c}(\mathsf{E}_{\mathrm{MB}})}\mathrm{d}F_{\xi^2}(t)},\\ 1=\frac{1}{n}\sum_{i=1}^p\frac{\sigma^2_i\int\frac{t^2}{|1+tm_{1n,c}(\mathsf{E}_{\mathrm{MB}})|^2}\mathrm{d}F_{\xi^2}(t)}{|\mathsf{E}_{\mathrm{MB}}-\sigma_i\int\frac{t}{1+tm_{1n,c}(\mathsf{E}_{\mathrm{MB}})}\mathrm{d}F_{\xi^2}(t)|^2}. \nonumber
\end{gather}
Define
\begin{align*}
	&\mathsf{C}_1:=\frac{1}{n} \sum_{i=1}^p \frac{\sigma_i}{ \left( \mathsf{E}_{\mathrm{MB}}- \sigma_i\int\frac{t}{1+tm_{1n,c}(\mathsf{E}_{\mathrm{MB}})}\mathrm{d}F_{\xi^2}(t)\right)^2},\quad \mathsf{C}_2:=\frac{1}{n} \sum_{i=1}^p \frac{\sigma_i^2}{ \left( \mathsf{E}_{\mathrm{MB}}- \sigma_i\int\frac{t}{1+tm_{1n,c}(\mathsf{E}_{\mathrm{MB}})}\mathrm{d}F_{\xi^2}(t)\right)^2},
\end{align*}
and
\begin{equation*}
	\mathcal{X}:=\frac{1}{n} \sum_{j=1}^n \Big( \frac{\xi_j^2}{1+\xi_j^2 m_{1n,c}(\mathsf{E}_{\mathrm{MB}})}-\int\frac{t}{1+tm_{1n,c}(\mathsf{E}_{\mathrm{MB}})}\mathrm{d}F_{\xi^2}(t)  \Big).
\end{equation*}
We have the following lemma that specifies the relationship between $\widehat{\mathsf{E}}_{\mathrm{MB}}$ and $\mathsf{E}_{\mathrm{MB}}$.
\begin{lemma}\label{lem_edgeconvergence1}
	Suppose that the assumptions of Theorem \ref{thm_boostrapmain} hold. When restricted to the event $\Omega_{\Xi}$, we have
	\begin{align}\label{eq_edgeconvergence_firststatement}
		\big|m_{1n,c}(\mathsf{E}_{\mathrm{MB}})-m_{1n}(\widehat{\mathsf{E}}_{\mathrm{MB}})\big|=\mathrm{O}(n^{-1/2+c_{\Xi,1}}),\quad \big|\mathsf{E}_{\mathrm{MB}}-\widehat{\mathsf{E}}_{\mathrm{MB}}\big|=\mathrm{O}(n^{-1/2+c_{\Xi,1}}).
	\end{align}
	As a consequence,
	\begin{align}\label{eq_edgeconvergence_secondstatement}
		\mathsf{C}_1(\widehat{\mathsf{E}}_{\mathrm{MB}}-\mathsf{E}_{\mathrm{MB}})=\mathsf{C}_2\mathcal{X}+\mathrm{O}(n^{-1+2c_{\Xi,1}}),
	\end{align}
	with $\mathsf{C}_1,\mathsf{C}_2\asymp1$.
\end{lemma}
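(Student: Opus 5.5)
The plan is a perturbation analysis of the two-equation edge systems \eqref{eq_edgeequations1}--\eqref{eq_edgeequations2}, viewed as $F_n(x,y)=0$, $\partial_xF_n(x,y)=0$ versus $F_{n,c}(x,y)=0$, $\partial_xF_{n,c}(x,y)=0$. Write $x_c:=m_{1n,c}(\mathsf E_{\mathrm{MB}})$, $y_c:=\mathsf E_{\mathrm{MB}}$ and $\hat x:=m_{1n}(\widehat{\mathsf E}_{\mathrm{MB}})$, $\hat y:=\widehat{\mathsf E}_{\mathrm{MB}}$. The only difference between $F_n$ and $F_{n,c}$ is that the integral $g_c(x):=\int t(1+tx)^{-1}\mathrm dF_{\xi^2}(t)$ is replaced by the empirical average $g_n(x):=n^{-1}\sum_j\xi_j^2(1+\xi_j^2x)^{-1}$. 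On $\Omega_\Xi$ we have $g_n(x_c)-g_c(x_c)=\mathcal X=\mathrm O(n^{-1/2+c_{\Xi,1}})$ by the second condition of Definition~\ref{def_OmegaM}. We also need analogous control of $g_n'-g_c'$ and $g_n''-g_c''$ in a neighbourhood of $x_c$; this follows from the first and fourth conditions together with Assumption~\ref{assum_technique}, since $|1+tx|\ge\tau_0/2$ nearby. These facts give $\sup_{|x-x_c|\le n^{-\epsilon}}|\partial^k_x(F_n-F_{n,c})|=\mathrm O(n^{-1/2+c_{\Xi,1}})$ for $k=0,1$ and the $y$-derivatives (possibly after enlarging $c_{\Xi,1}$ slightly, e.g.\ via a net argument).

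\textbf{Step 1 (a priori localization).} Use the square-root behaviour at the edge, which is guaranteed by Assumption~\ref{assum_technique}, to show $\partial_x^2F_{n,c}(x_c,y_c)\asymp1$ and $\partial_yF_{n,c}(x_c,y_c)\asymp1$. The second quantity equals $\frac1n\sum_i\sigma_i/(y_c-\sigma_i g_c)^2$, which is exactly $\mathsf C_1$ up to sign. The Jacobian of the map $(x,y)\mapsto(F,\partial_xF)$ at $(x_c,y_c)$ is
\[
\begin{pmatrix}0&\partial_yF\\ \partial_x^2F&\partial_{xy}F\end{pmatrix},
\]
since $\partial_xF=0$ there. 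Its determinant is $-\partial_yF\,\partial_x^2F$, which has size $\asymp1$. The implicit function theorem, or a quantitative Newton/contraction argument, applied to the perturbed map then gives a unique solution $(\hat x,\hat y)$ within $\mathrm O(n^{-1/2+c_{\Xi,1}})$ of $(x_c,y_c)$. It remains to identify this solution with the rightmost edge of $\widehat\varrho_{\mathrm{MB}}$. For this I would use continuity of the largest critical point in the perturbation, noting that no other critical point of $F_n(\cdot,y)$ can emerge to the right, since $g_n$ is close to $g_c$ uniformly on the relevant region. This proves \eqref{eq_edgeconvergence_firststatement}.

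\textbf{Step 2 (first-order expansion).} Expand $0=F_n(\hat x,\hat y)$ around $(x_c,y_c)$, using $F_{n,c}(x_c,y_c)=0$ and $\partial_xF_{n,c}(x_c,y_c)=0$. This yields
\[
0=(F_n-F_{n,c})(x_c,y_c)+\partial_yF_{n,c}\,(\hat y-y_c)+\mathrm O(|\hat x-x_c|^2+|\hat y-y_c|^2+n^{-1/2+c_{\Xi,1}}(|\hat x-x_c|+|\hat y-y_c|)).
\]
The key point is that the vanishing of $\partial_xF_{n,c}$ kills the linear term in $\hat x-x_c$, so the error is $\mathrm O(n^{-1+2c_{\Xi,1}})$ by Step 1. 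A direct computation gives $(F_n-F_{n,c})(x_c,y_c)=-\frac1n\sum_i\sigma_i^2\mathcal X/(y_c-\sigma_ig_c)^2+\mathrm O(\mathcal X^2)=-\mathsf C_2\mathcal X+\mathrm O(n^{-1+2c_{\Xi,1}})$. Combined with $\partial_yF_{n,c}=\mathsf C_1$, this gives \eqref{eq_edgeconvergence_secondstatement}. Finally, $\mathsf C_1,\mathsf C_2\asymp1$ follows from Assumption~\ref{assu_modelassumption}(ii), the bound $y_c-\sigma_ig_c\asymp1$ (Assumption~\ref{assum_technique} together with the first equation of \eqref{eq_systemequation1}), and $\sigma_i\asymp1$.

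\textbf{Main obstacle.} I expect the hardest part to be Step 1. Beyond the local inversion, it requires the nondegeneracy $\partial_x^2F_{n,c}(x_c,y_c)\gtrsim1$ uniformly in $n$, even though the multiplier law varies with $n$. I would obtain this by comparing with the $\operatorname{Var}\xi^2=0$ case via \eqref{eq_control_edgebias}: there the quantity equals the known nonzero second derivative $f''(\mathsf b)$ of the Marchenko--Pastur map, and the difference is $\mathrm O(n^{-1/3+\delta})$. The step also requires excluding spurious rightmost critical points of the random $F_n$.
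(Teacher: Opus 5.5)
Your plan is essentially the paper's proof. Step 1 is Lemma \ref{lem_stabilityargument}: the system $(F,\partial_xF)$ with Jacobian determinant $-\partial_yF\,\partial_{xx}F$, solved by a contraction argument. Step 2 is the same first-order cancellation the paper gets from $\mathsf T_1+\mathsf T_2+\mathsf T_3$, except that the paper uses $\partial_xF_n=0$ at the random point where you use $\partial_xF_{n,c}=0$ at the deterministic one. The paper simply assumes the nondegeneracy you want to derive (Definition \ref{def_regularedge}), and it does not address identifying the local root with the rightmost edge.

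One correction: conditions 1 and 4 of Definition \ref{def_OmegaM} only control $g_n'-g_c'$ up to order $\log^{c_{\Xi,3}}n\,\operatorname{Var}(\xi^2)\gg n^{-1/2}$. So the $\mathrm O(n^{-1/2+c_{\Xi,1}})$ bound for the $x$-derivative needs a separate Hoeffding/Bernstein-plus-net concentration estimate, as in Lemma \ref{lem_uniform_multiplier_concentration}; a slight enlargement of $c_{\Xi,1}$ does not suffice.
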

The proof of Lemma \ref{lem_edgeconvergence1} will be postponed to Section \ref{sec_prf_lem_edgeconvergence1}.

Since the multipliers in Definition \ref{defn_feasiblemultiplier} are i.i.d. with vanishing variance, $\mathcal{X}$ admits a CLT mechanism with variance of order $n^{-1}\operatorname{Var}\xi^2$.
Define $\tilde{\mathsf{v}}$ analogously to $\mathsf{v}$ in \eqref{eq_def_v} by
\begin{align*}
	\tilde{\mathsf{v}}=\Big(\frac{\mathsf{C}_1}{\mathsf{C}_2}\Big)^2\operatorname{Var} \left(\frac{\xi^2}{1+\xi^2m_{1n,c}(\mathsf{E}_{\mathrm{MB}})}\right),\quad \mathsf{C}_k=\frac{1}{n}\sum_{i=1}^p\frac{\sigma_i^k}{(\mathsf{E}_{\mathrm{MB}}-\sigma_i\int\frac{t}{1+tm_{1n,c}(\mathsf{E}_{\mathrm{MB}})}\mathsf{d}F_{\xi^2}(t))^2}.
\end{align*}
The above discussion implies the following theorem.
\begin{theorem}\label{thm_conditionalclt_nonspike}
	Suppose Assumptions \ref{assu_modelassumption} and \ref{assum_technique} hold, and assume Definition \ref{def_regularedge}. Moreover, we assume that $\{\xi_i^2\}$ are feasible multipliers constructed as in 
	Example \ref{example_one} and satisfying Definition \ref{defn_feasiblemultiplier}.
	Then, for $n$ sufficiently large, there exists some constant $\mathsf{c}>0$ such that with probability at least $1-\mathrm{O}(n^{-\mathsf{c}})$,
	\begin{align}\label{eq_conditionalclt_nonspike}
		\sup_{x\in\mathbb R}\left|\mathbb{P} \left( \sqrt{n \tilde{\mathsf{v}}^{-1}} (\lambda^0_{1}-\mathsf{E}+\Delta_{\mathrm{edge}}) \leq x \big|X \right)-\Phi(x)\right|=\mathrm{o}(1).
	\end{align}
\end{theorem}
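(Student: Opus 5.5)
We work on $\Omega_X$, fix a realization of $X$, and treat $\Xi$ as the only source of randomness. The starting point is the decomposition \eqref{eq_decomp_largesteigenvalue}. Since $\Delta_{\mathrm{edge}}=\mathsf E-\mathsf E_{\mathrm{MB}}$,
\[
\lambda_1^0-\mathsf E+\Delta_{\mathrm{edge}}=(\lambda_1^0-\widehat{\mathsf E}_{\mathrm{MB}})+(\widehat{\mathsf E}_{\mathrm{MB}}-\mathsf E_{\mathrm{MB}}).
\]
The plan has three steps. First, show that the second term is asymptotically $\mathcal N(0,\tilde{\mathsf v}/n)$ under $\mathbb P(\cdot\mid X)$. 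Second, show that the first term is $\mathrm{o}_{\mathbb P}(\sqrt{\tilde{\mathsf v}/n})$ conditionally on $X$, with $X$ ranging over a set of probability $1-\mathrm O(n^{-\mathsf c})$. Third, combine the two with a Slutsky-type argument for Kolmogorov distance, using that $\Phi$ is Lipschitz: if $|R|\le\epsilon_n$ with conditional probability $1-\mathrm o(1)$, then the Kolmogorov distance changes by at most $\mathrm{o}(1)+\epsilon_n\sqrt{n/\tilde{\mathsf v}}$.

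For the Gaussian part, we use Lemma \ref{lem_edgeconvergence1}. On $\Omega_\Xi$ it gives
\[
\widehat{\mathsf E}_{\mathrm{MB}}-\mathsf E_{\mathrm{MB}}=(\mathsf C_2/\mathsf C_1)\,\mathcal X+\mathrm O(n^{-1+2c_{\Xi,1}}),
\]
where $\mathcal X$ is an average of i.i.d. centered variables
\[
Y_j=g(\xi_j^2)-\mathbb E g(\xi^2),\qquad g(t)=\frac{t}{1+tm_{1n,c}(\mathsf E_{\mathrm{MB}})}.
\]
Assumption \ref{assum_technique} makes $g$ smooth with bounded derivative on the support. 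Since $\xi^2=\chi^2_{\sN}/\sN$ concentrates at $1$, the variance satisfies $\operatorname{Var}g(\xi^2)=g'(1)^2\operatorname{Var}\xi^2(1+\mathrm o(1))\asymp n^{-1/3+\delta}$. For the third absolute moment, $\mathbb E|Y|^3\lesssim \mathbb E|\xi^2-1|^3\asymp \sN^{-3/2}$. The Berry--Esseen ratio is therefore $\mathbb E|Y|^3/(\operatorname{Var}^{3/2}\sqrt n)\asymp n^{-1/2}$. This gives a Gaussian limit for $\sqrt{n/\tilde{\mathsf v}}\,(\mathsf C_2/\mathsf C_1)\mathcal X$; the normalization is exactly the definition of $\tilde{\mathsf v}$. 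The error term $n^{-1+2c_{\Xi,1}}$ is negligible against $\sqrt{\tilde{\mathsf v}/n}\asymp n^{-2/3+\delta/2}$ once $c_{\Xi,1}$ is small. The event $\Omega_\Xi$ has probability $1-\mathrm o(1)$ by Lemma \ref{lem_probabilitycontrol_M}, and it is independent of $X$.

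The main obstacle is the second step, namely conditional edge rigidity for $Q^0_{\mathrm{MB}}$ given $X$: we need $|\lambda_1^0-\widehat{\mathsf E}_{\mathrm{MB}}|\le n^{-2/3+\delta_*}$ up to factors negligible against $n^{-2/3+\delta/2}$, which is where $\delta>2\delta_*$ enters. Edge rigidity with respect to the randomness of $X$ is standard for a fixed diagonal weight. Here $\Xi^2$ is random, but it is independent of $X$, so we would first condition on $\Xi\in\Omega_\Xi$. Then $Q^0_{\mathrm{MB}}$ is a separable covariance matrix $\Sigma_0^{1/2}X\Xi^2X^\top\Sigma_0^{1/2}$ with deterministic $\Xi$ satisfying a regular-edge condition. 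We would verify this condition via Assumption \ref{assum_technique} and Lemma \ref{lem_edgeconvergence1}, which is presumably the content of Definition \ref{def_regularedge}. The local law and rigidity for separable covariance matrices (e.g. the Tracy--Widom/rigidity results of \citep{Ding&Yang2018,Fan2019separable}) then give the bound with $X$-probability $1-\mathrm O(n^{-D})$, uniformly over $\Xi\in\Omega_\Xi$; one concern is the unbounded support of $\chi^2$, which we would handle by truncating at $\xi_{(1)}^2\le 1+\mathrm O(\sN^{-1/2}\log n)$ on $\Omega_\Xi$. To swap quantifiers we apply Fubini on the product space: the joint bad set has probability $\mathrm O(n^{-D})$. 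By Markov's inequality applied to $\mathbb P(\text{bad}\mid X)$, the set of $X$ for which this conditional probability exceeds $n^{-D/2}$ has probability at most $n^{-D/2}$. This yields the "with probability at least $1-\mathrm O(n^{-\mathsf c})$" in $X$ and finishes the combination in the third step.
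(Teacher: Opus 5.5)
Your proposal is correct and follows essentially the same route as the paper: the decomposition \eqref{eq_decomp_largesteigenvalue} into an edge-rigidity term of size $n^{-2/3+\epsilon}$ and the Gaussian term $(\mathsf C_2/\mathsf C_1)\mathcal X$ from Lemma~\ref{lem_edgeconvergence1}, then Berry--Esseen for $\mathcal X$, then a Kolmogorov-distance smoothing step that uses the independence of $X$ and $\Xi$. Your rigidity step (condition on $\Xi\in\Omega_\Xi$, apply separable-covariance rigidity uniformly, then use Fubini and Markov to get a good $X$-set) is a more careful version of the paper's step, which simply asserts the bound on $\Omega_X$; one small correction: the variance of $(\mathsf C_2/\mathsf C_1)\mathcal X$ equals $\tilde{\mathsf v}/n$ only if $\tilde{\mathsf v}$ carries the factor $(\mathsf C_2/\mathsf C_1)^2$, which is what the paper's proof implicitly uses, not the displayed $(\mathsf C_1/\mathsf C_2)^2$, so your claim that the normalization is ``exactly the definition'' does not hold for the displayed formula.
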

\begin{proof}
	We define
	\begin{align*}
		\mathtt{Y}_n:=\sqrt{n/\tilde{\mathsf{v}}}(\lambda^0_1-\mathsf{E}_{\mathrm{MB}}),\quad \mathtt{Z}_n:=\sqrt{n/\tilde{\mathsf{v}}} \frac{\mathsf{C}_2}{\mathsf{C}_1} \mathcal{X},
	\end{align*}
	where $\operatorname{Var}(\mathcal{X})=(\mathsf{C}_1^2/\mathsf{C}_2^2)\tilde{\mathsf{v}}n^{-1}$ and note $\mathsf{E}-\Delta_{\mathrm{edge}}=\mathsf{E}_{\mathrm{MB}}$. According to Theorem 3.7 of \cite{ding2021spiked}, on the event $\Omega_X$, for some small $0<\epsilon<c_{X,2},$ we have that 
	\begin{align*}
		\lambda_1^0-\widehat{\mathsf{E}}_{\mathrm{MB}}=\mathrm{O}(n^{-2/3+\epsilon}).
	\end{align*}
	In addition, according to Lemma \ref{lem_edgeconvergence1}, on the event $\Omega_{\Xi}$, 
	\begin{align*}
		\widehat{\mathsf{E}}_{\mathrm{MB}}-\mathsf{E}_{\mathrm{MB}}=\frac{\mathsf{C}_2}{\mathsf{C}_1}\mathcal{X}+\mathrm{O}(n^{-1+2c_{\Xi,1}}).
	\end{align*}
	Therefore, we obtain on $\Omega_X\cap\Omega_{\Xi}$,
	\begin{align*}
		|\mathtt{Y}_n-\mathtt{Z}_n|
		\le
		C\delta_n,
	\end{align*}
	where
	\begin{align*}
		\delta_n:=\frac{n^{-1/6+\epsilon}}{\sqrt{\tilde{\mathsf{v}}}}+\frac{1}{\sqrt{n\tilde{\mathsf{v}}}}=\mathrm{o}(1),
	\end{align*}
	because $\tilde{\mathsf v}\gg n^{-1/3+2\epsilon}$.
	Moreover, $\mathcal{X}$ depends only on the multipliers, and hence $\mathtt{Z}_n$ is $\sigma(\Xi)$-measurable. Since the multipliers are bounded in the theoretical construction and $\tilde{\mathsf v}\asymp \operatorname{Var}(\xi^2)$, the Berry--Esseen theorem yields
	\begin{align}\label{eq_asymGaussian_Z}
		\sup_{x\in\mathbb R}|\mathbb{P}(\mathtt{Z}_n\leq x)-\Phi(x)|=\mathrm{o}(1).
	\end{align}
	Let $\delta^{\prime}>0$ be arbitrary. A standard smoothing inequality for Kolmogorov distance gives
	\begin{align*}
		\sup_{x\in\mathbb R}\left|\mathbb P(\mathtt Y_n\le x\mid X)-\Phi(x)\right|
		\le
		\sup_{x\in\mathbb R}\left|\mathbb P(\mathtt Z_n\le x)-\Phi(x)\right|
		+
		\mathbb P(|\mathtt Y_n-\mathtt Z_n|>\delta^{\prime}\mid X)
		+C\delta^{\prime} .
	\end{align*}
	We now choose $\delta^{\prime}=2C\delta_n$. On the event $\Omega_X$, the bound on $\Omega_X\cap\Omega_{\Xi}$ implies
	\begin{align*}
		\mathbb P(|\mathtt Y_n-\mathtt Z_n|>\delta^{\prime}\mid X)
		\le\mathbb P(\Omega_{\Xi}^c\mid X)=
		\mathbb P(\Omega_{\Xi}^c)=\mathrm{o}(1),
	\end{align*}
	where we used the independence of $X$ and $\Xi$. Combining this estimate with \eqref{eq_asymGaussian_Z}, we obtain on $\Omega_X$ that
	\begin{align*}
		\sup_{x\in\mathbb R}\left|\mathbb P(\mathtt Y_n\le x\mid X)-\Phi(x)\right|=\mathrm{o}(1).
	\end{align*}
	Finally, Lemma \ref{lem_probabilitycontrol_X} gives $\mathbb P(\Omega_X^c)=\mathrm{o}(1)$, so the same bound holds with probability at least $1-\mathrm{O}(n^{-\mathsf c})$. This completes the proof.
\end{proof}

The proof of Theorem \ref{thm_conditionalclt_nonspike} extends directly to the top eigenvalues of $Q_{\mathrm{MB}}^0$. In particular, given a data set $\Sigma_0^{1/2}X$ and properly chosen multipliers, the bootstrapped sample covariance matrix $Q_{\mathrm{MB}}^0$ exhibits Gaussian limits for its leading edge eigenvalues.

We next transfer these limiting distributions from $Q_{\mathrm{MB}}^0$ to the largest non-spiked eigenvalues of $Q_{\mathrm{MB}}$. Recall
\begin{align*}
	Q_{\mathrm{MB}}=\Sigma^{1/2}X\Xi^2X^\top \Sigma^{1/2},
	\qquad
	Q_{\mathrm{MB}}^0=\Sigma_0^{1/2}X\Xi^2X^\top \Sigma_0^{1/2}.
\end{align*}

\begin{lemma}[Finite-rank comparison for non-outlier edge eigenvalues]
	\label{lem_finite_rank_edge_comparison}
	Suppose Assumption \ref{assu_modelassumption} holds. Then there exists a fixed integer $\mathtt{C}=\mathtt{C}(r)\le r$ such that
	for any fixed integer $\mathrm{K}\ge1$,
	\begin{align*}
		\lambda_{r+i+\mathtt{C}}(Q_{\mathrm{MB}}^0)\le\lambda_{r+i}(Q_{\mathrm{MB}})\le\lambda_{r+i-\mathtt{C}}(Q_{\mathrm{MB}}^0),\qquad 1\le i\le \mathrm{K},
	\end{align*}
	with the convention that invalid indices are omitted. Consequently, on the event $\Omega_X\cap\Omega_{\Xi}$, using the edge decomposition for $Q_{\mathrm{MB}}^0$, we have
	\begin{align*}
		\max_{1\le i\le \mathrm{K}}\left|\lambda_{r+i}-\mathsf{E}_{\mathrm{MB}}\right|=
		\mathrm{o}(\sqrt{\frac{\tilde{\mathsf{v}}}{n}}).
	\end{align*}
	Then, for $n$ sufficiently large, there exists some constant $\mathsf{c}_1>0$ such that with probability at least $1-\mathrm{O}(n^{-\mathsf{c}_1})$
	\begin{align}\label{eq_conditionalclt_nonspike_QMB}
		\sup_{x\in\mathbb R}\big|\mathbb{P}(\sqrt{n\tilde{\mathsf{v}}^{-1}}(\lambda_{r+i}-\mathsf{E}+\Delta_{\mathrm{edge}})\leq x|X)-\Phi(x)\big|=\mathrm{o}(1),
	\end{align}
	for any $1\leq i\leq \mathrm{K}$.
\end{lemma}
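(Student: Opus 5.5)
The proof has three steps: a deterministic interlacing for finite-rank perturbations, a transfer of the edge estimates to the non-spiked eigenvalues of $Q_{\mathrm{MB}}$, and a Kolmogorov-distance comparison as in the proof of Theorem \ref{thm_conditionalclt_nonspike}.

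\textbf{Step 1: interlacing.} Write $\Sigma^{1/2}=\Sigma_0^{1/2}(I+\Pi)$, where
\[
\Pi=\sum_{i\le r}\bigl((\widetilde\sigma_i/\sigma_i)^{1/2}-1\bigr)\bv_i\bv_i^\top
\]
has rank at most $r$. Then $Q_{\mathrm{MB}}$ and $Q^0_{\mathrm{MB}}$ are Gram-type matrices $AA^\top$ built from $A=\Sigma^{1/2}X\Xi$ and $A_0=\Sigma_0^{1/2}X\Xi$. These differ by a matrix of rank at most $r$, namely $\Sigma_0^{1/2}\Pi X\Xi$. Equivalently, the companion matrices $A^\top A$ and $A_0^\top A_0$ satisfy
\[
A^\top A=A_0^\top A_0+(\text{rank}\le 2r\ \text{perturbation}).
\]
Weyl-type interlacing for finite-rank perturbations (Cauchy interlacing via the min--max principle) then gives the two-sided bound with some shift $\mathtt C$. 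To obtain $\mathtt C\le r$ in the form stated, I would use that the perturbation $\Sigma-\Sigma_0$ is positive semidefinite on the spike directions, since all $\widetilde\sigma_i>-\mathsf b^{-1}>\sigma_1$. Then $Q_{\mathrm{MB}}\succeq Q_{\mathrm{MB}}^0$ up to a rank-$r$ compression, which gives $\lambda_{r+i}(Q_{\mathrm{MB}})\ge\lambda_{r+i}(Q^0_{\mathrm{MB}})$ directly. For the upper bound, restrict to the orthogonal complement of $\Sigma^{1/2}$-images of the spike directions, where the two quadratic forms agree; this gives $\lambda_{r+i}(Q_{\mathrm{MB}})\le\lambda_{i}(Q^0_{\mathrm{MB}})$. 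This step is purely deterministic and involves no randomness.

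\textbf{Step 2: edge localization.} For every fixed $k\le r+\mathrm K$, I claim that on $\Omega_X\cap\Omega_\Xi$
\[
\lambda_k^0-\mathsf E_{\mathrm{MB}}=\frac{\mathsf C_2}{\mathsf C_1}\mathcal X+\mathrm O(n^{-2/3+\epsilon}).
\]
This is the decomposition \eqref{eq_decomp_largesteigenvalue} applied with $\lambda^0_k$ in place of $\lambda^0_1$: the rigidity from \cite{ding2021spiked} gives $|\lambda^0_k-\widehat{\mathsf E}_{\mathrm{MB}}|\le n^{-2/3+\epsilon}$ for fixed $k$, and Lemma \ref{lem_edgeconvergence1} handles $\widehat{\mathsf E}_{\mathrm{MB}}-\mathsf E_{\mathrm{MB}}$. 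Sandwiching $\lambda_{r+i}$ between $\lambda^0_{r+i+\mathtt C}$ and $\lambda^0_{r+i-\mathtt C}$ (or $\lambda^0_i$), all of which share the same leading term $(\mathsf C_2/\mathsf C_1)\mathcal X$, yields
\[
\lambda_{r+i}-\mathsf E_{\mathrm{MB}}=\frac{\mathsf C_2}{\mathsf C_1}\mathcal X+\mathrm O(n^{-2/3+\epsilon}+n^{-1+2c_{\Xi,1}}).
\]
The error is $\mathrm o(\sqrt{\tilde{\mathsf v}/n})$ because $\tilde{\mathsf v}\asymp n^{-1/3+\delta}$ and $\delta>2\delta_*\ge 2\epsilon$.

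The displayed claim in the lemma, that $|\lambda_{r+i}-\mathsf E_{\mathrm{MB}}|=\mathrm o(\sqrt{\tilde{\mathsf v}/n})$, is literally false: the Gaussian term is of exactly that order. I would therefore read it as the statement that $\lambda_{r+i}$ differs from the common Gaussian proxy $\mathsf E_{\mathrm{MB}}+(\mathsf C_2/\mathsf C_1)\mathcal X$ by $\mathrm o(\sqrt{\tilde{\mathsf v}/n})$, and prove it in that form.

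\textbf{Step 3: transfer of the CLT.} Set
\[
\mathtt Y_n=\sqrt{n/\tilde{\mathsf v}}\,(\lambda_{r+i}-\mathsf E_{\mathrm{MB}}),
\]
and let $\mathtt Z_n$ be as in the proof of Theorem \ref{thm_conditionalclt_nonspike}, so that $|\mathtt Y_n-\mathtt Z_n|\le C\delta_n$ on $\Omega_X\cap\Omega_\Xi$. Then repeat that argument verbatim: the Berry--Esseen bound for $\mathtt Z_n$, the smoothing inequality, and the independence of $X$ and $\Xi$ to bound $\mathbb P(\Omega_\Xi^c\mid X)$. This gives \eqref{eq_conditionalclt_nonspike_QMB} on $\Omega_X$, and Lemma \ref{lem_probabilitycontrol_X} supplies the probability $1-\mathrm O(n^{-\mathsf c_1})$.

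\textbf{Main obstacle.} The delicate point is Step 2 for the non-spiked top eigenvalues of $Q^0_{\mathrm{MB}}$ with index $k>1$. The $n^{-2/3+\epsilon}$ rigidity of $\lambda^0_k$ around $\widehat{\mathsf E}_{\mathrm{MB}}$ must hold conditionally on the realized multipliers, on $\Omega_X\cap\Omega_\Xi$. I would justify it through the local law for $\Sigma_0^{1/2}X\Xi^2X^\top\Sigma_0^{1/2}$ with $\Xi$ treated as a deterministic regular diagonal matrix. This requires the regularity condition of Assumption \ref{assum_technique} at $\widehat{\mathsf E}_{\mathrm{MB}}$, which follows from that condition at $\mathsf E_{\mathrm{MB}}$ together with \eqref{eq_edgeconvergence_firststatement}.
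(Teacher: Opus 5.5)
Your proposal is correct and follows essentially the same route as the paper's proof: a finite-rank interlacing bound, then the fact that all fixed-index top eigenvalues of $Q^0_{\mathrm{MB}}$ equal $\mathsf E_{\mathrm{MB}}+(\mathsf C_2/\mathsf C_1)\mathcal X$ up to $\mathrm O(n^{-2/3+\epsilon})$ (by rigidity and Lemma~\ref{lem_edgeconvergence1}), then a rerun of the Kolmogorov-distance argument of Theorem~\ref{thm_conditionalclt_nonspike}. Your restriction to the orthogonal complement of the spike directions gives the stated $\mathtt C\le r$, which the paper's rank-$2r$ argument does not reach, and your reading of the middle display as closeness to this Gaussian proxy rather than to $\mathsf E_{\mathrm{MB}}$ itself is exactly what the paper's proof actually establishes.
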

\begin{proof}
	For notational simplicity, write $\lambda_i^0=\lambda_i(Q_{\mathrm{MB}}^0)$ and $\lambda_i=\lambda_i(Q_{\mathrm{MB}})$ according to the Table \ref{table_notations} for the remainder of the proof. For any fixed $i$, we have the decomposition
	\begin{align*}
		\lambda_i^0-\mathsf{E}+\Delta_{\mathrm{edge}}=\lambda_i^0-\widehat{\mathsf{E}}_{\mathrm{MB}}+\widehat{\mathsf{E}}_{\mathrm{MB}}-\mathsf{E}_{\mathrm{MB}}.
	\end{align*}
	On the other hand, by Lemma \ref{lem_edgeconvergence1}, we have, on the event $\Omega_X\cap\Omega_{\Xi}$,
	\begin{align*}
		\widehat{\mathsf{E}}_{\mathrm{MB}}-\mathsf{E}_{\mathrm{MB}}=\frac{\mathsf{C}_1}{\mathsf{C}_2}\mathcal{X}+\mathrm{O}(n^{-1+2c_{\Xi,1}}),
	\end{align*}
	where $\mathsf{C}_1$ and $\mathsf{C}_2$ are deterministic quantities of order one. It follows that for any $1\leq i\leq \mathrm{K}$,
	\begin{align}
		\lambda_i^0&=\mathsf{E}_{\mathrm{MB}}+\lambda_i^0-\widehat{\mathsf{E}}_{\mathrm{MB}}+\frac{\mathsf{C}_1}{\mathsf{C}_2}\mathcal{X}+\mathrm{O}(n^{-1+2c_{\Xi,1}})\\
		&=\mathsf{E}_{\mathrm{MB}}+\frac{\mathsf{C}_1}{\mathsf{C}_2}\mathcal{X}+\mathrm{O}(n^{-2/3+c_{X,2}}),
	\end{align}
	where we used the estimate $|\lambda_i^0-\widehat{\mathsf{E}}_{\mathrm{MB}}|\lesssim n^{-2/3+c_{X,2}}$ from Theorem 3.7 of \citep{ding2021spiked} on the event $\Omega_X\cap\Omega_{\Xi}$. Hence, on $\Omega_X\cap\Omega_{\Xi}$, the eigenvalues $\lambda_i^0$, $1\leq i\le \mathrm{K}$, satisfy the rigidity estimate
	\begin{align}\label{eq_rigidity_lambda0}
		|\lambda_i^0-\lambda_{i+1}^0|\leq Cn^{-2/3+c_{X,2}},
	\end{align}
	for some positive constant $C>0$.
	
	It remains to compare $\lambda_{r+i}$ with $\lambda_i^0$ for $1\leq i\leq \mathrm{K}$. Since $\Sigma$ and $\Sigma_0$ differ only in the fixed $r$ spike directions under condition \eqref{eq_spikedmatrix}, we have
	\begin{align*}
		\operatorname{rank}(Q_{\mathrm{MB}}-Q_{\mathrm{MB}}^0)\le 2r.
	\end{align*}
	Hence the rank interlacing inequality implies that there exists a constant $\mathtt C\le 2r$ such that
	\begin{align*}
		\lambda_{i+\mathtt C}^0\leq\lambda_{r+i}\leq\lambda_{(i-\mathtt C)\vee 1}^0,\quad 1\leq i\leq \mathrm{K}.
	\end{align*}
	By \eqref{eq_rigidity_lambda0}, we have on the event $\Omega_X\cap\Omega_{\Xi}$ that for $1\leq i\leq \mathrm{K}$,
	\begin{align*}
		\lambda_{r+i}&=\lambda_i^0+\mathrm{O}(n^{-2/3+c_{X,2}})\\
		&=\mathsf{E}_{\mathrm{MB}}+\frac{\mathsf{C}_1}{\mathsf{C}_2}\mathcal{X}+\mathrm{O}(n^{-2/3+c_{X,2}}).
	\end{align*}
	According to Definition \ref{defn_feasiblemultiplier}, the feasible multipliers satisfy $n^{-1/3+2\delta_{*}}\ll \tilde{\mathsf{v}}\ll 1$ for some $\delta_{*}\ge c_{X,2}$. Therefore, $(\tilde{\mathsf{v}}n^{-1})^{1/2}\gg n^{-2/3+c_{X,2}}$. Then, on the event $\Omega_X\cap\Omega_{\Xi}$,
	\begin{align*}
		\sqrt{n/\tilde{\mathsf{v}}}(\lambda_{r+i}-\mathsf{E}_{\mathrm{MB}})=\sqrt{n/\tilde{\mathsf{v}}}\frac{\mathsf{C}_1}{\mathsf{C}_2}\mathcal{X}+\mathrm{o}(1),
	\end{align*}
	for $1\leq i\leq \mathrm{K}$. The proof is then completed by repeating the argument used for Theorem \ref{thm_conditionalclt_nonspike}.
\end{proof}
We have therefore established the results \eqref{eq_conditionalclt_nonspike_QMB}.

Now, recall that in \eqref{eq_control_edgebias}, we have concluded that $\Delta_{\mathrm{edge}}=\mathrm{o}(1)$ and $|m(\mathsf{E})-m_{2n,c}(\mathsf{E}_{\mathrm{MB}})|=\mathrm{o}(1)$ for $n^{-1/3}\ll\operatorname{Var}(\xi^2)\ll 1$. Then, by \eqref{eq_systemequation1} and \eqref{eq_systemequation2}, we have
\begin{align*}
	-\mathsf{E}_{\mathrm{MB}}\times m_{2n,c}(\mathsf{E}_{\mathrm{MB}})=\int\frac{t}{1+tm_{1n,c}(\mathsf{E}_{\mathrm{MB}})}\mathrm{d}F_{\xi^2}(t)=-\mathsf{E}\times m(\mathsf{E})+\mathrm{o}(1).
\end{align*}
Thus, by Assumption \ref{assum_technique}, we have
\begin{align*}
	|\mathfrak{C}_1-\mathsf{C}_1|=\mathrm{o}(1),\quad |\mathfrak{C}_2-\mathsf{C}_2|=\mathrm{o}(1).
\end{align*}
On the other hand, by Assumption \ref{assum_technique}, using Taylor expansion, \eqref{eq_systemequation1},  \eqref{eq_systemequation_edge_mp} and the approximation \eqref{eq_control_edgebias}, we have
\begin{align*}
	&\operatorname{Var}\Big(\frac{\xi^2}{1+\xi^2m_{1n,c}(\mathsf{E}_{\mathrm{MB}})}\Big)=\Big(\frac{1}{1+m_{1n,c}(\mathsf{E}_{\mathrm{MB}})}\Big)^2\times\operatorname{Var}\xi^2\times(1+\mathrm{o}(1))\\
	&=\Big(\frac{\mathsf{E}_{\mathrm{MB}}}{\mathsf{E}_{\mathrm{MB}}-\frac{1}{n}\sum_{i=1}^p\frac{\sigma_i}{1+\sigma_im_{2n,c}(\mathsf{E}_{\mathrm{MB}})}}\Big)^2\times\operatorname{Var}\xi^2\times(1+\mathrm{o}(1))\\
	&=\Big(\frac{\mathsf{E}}{\mathsf{E}-\frac{1}{n}\sum_{i=1}^p\frac{\sigma_i}{1+\sigma_im(\mathsf{E})}}\Big)^2\times\operatorname{Var}\xi^2\times(1+\mathrm{o}(1))=\mathsf{E}^2m^2(\mathsf{E})\operatorname{Var}(\xi^2)\times(1+\mathrm{o}(1)).
\end{align*}
It follows that 
\begin{align*}
	\tilde{\mathsf{v}}=\mathsf{v}(1+\mathrm{o}(1)).
\end{align*}
Therefore, Theorem \ref{thm_boostrapmain} can be directly concluded from \eqref{eq_conditionalclt_nonspike_QMB}. We complete the proof.

\subsection{Proof of Theorem \ref{thm_power}}\label{appendxi_secthreoem32}

In this subsection, we prove Theorem \ref{thm_power}. The proof consists of two parts. In the first part, we establish \eqref{eq_thmpower_one} by showing that the plug-in estimators $\Delta_{r_0}$ and $\mathfrak{s}$ introduce only negligible errors according to Theorem \ref{thm_boostrapmain}. In the second part, we establish \eqref{eq_thmpower_two} under the alternative. It is necessary to show that the confidence interval has a clear separation from $\mathsf{E}$ based on the decomposition \eqref{eq_spikepicture}.

Now, we consider the first statement \eqref{eq_thmpower_one} of Theorem \ref{thm_power}.
By the Gaussian approximation established in Theorem \ref{thm_boostrapmain}, we have
\begin{align}\label{eq_power_null_clt}
	\sup_{x\in\mathbb R}|\mathbb{P}\big(\sqrt{n\mathsf{v}^{-1}}(\lambda_{r_0}-\mathsf{E}+\Delta_{\mathrm{edge}})\le x|X)-\Phi(x)\big|=\mathrm{o}(1).
\end{align}

Recall that, conditional on $X$, the bootstrap eigenvalues
\[
\lambda_{r_0,1},\dots,\lambda_{r_0,\mathrm B},\lambda_{r_0,\mathrm{B}+1}
\]
are i.i.d., and the fresh draw $\lambda_{r_0,\mathrm{B}+1}$ is independent of the first $\mathrm B$ draws used to construct $\Delta_{r_0}$ and $\mathfrak s$.

Moreover, by the edge decomposition \eqref{eq_decomp_largesteigenvalue}, each $\lambda_{r_0,k}$ satisfies
\begin{align*}
	\lambda_{r_0,k}-\mathsf{E}_{\mathrm{MB}}=\lambda_{r_0,k}-\widehat{\mathsf{E}}_{\mathrm{MB}}+\widehat{\mathsf{E}}_{\mathrm{MB}}-\mathsf{E}_{\mathrm{MB}}\quad 1\le k\le \mathrm{B}+1.
\end{align*}
Therefore,
\begin{align*}
	\bar{\lambda}_{r_0}
	=
	\mathsf{E}_{\mathrm{MB}}+\mathrm{O}_{\mathbb P(\cdot|X)}\!\left(\sqrt{\frac{\mathsf{v}}{n\mathrm{B}}}+n^{-2/3+c_{X,2}}\right).
\end{align*}
Hence,
\begin{align}\label{eq_biasestimation}
	|\Delta_{r_0}-\Delta_{\mathrm{edge}}|
	=
	\mathrm{O}_{\mathbb P(\cdot|X)}\!\left( \sqrt{\frac{\mathsf{v}}{n\mathrm{B}}}+n^{-2/3+c_{X,2}}\right)
	=
	\mathrm{o}_{\mathbb P(\cdot|X)}\!\left(\sqrt{\frac{\mathsf v}{n}}\right),
\end{align}
where we used $\mathrm B\gtrsim n^{\mathsf c}$ and $\mathsf{v}\gg n^{-1/3+2c_{X,2}}$. 

Next we control the bootstrap sample variance. By \eqref{eq_power_null_clt}, the normalized variable
\begin{align*}
	\sqrt{n\mathsf v^{-1}}
	(\lambda_{r_0,1}-\mathsf E+\Delta_{\mathrm{edge}}) 
\end{align*}
converges conditionally to a standard Gaussian law on $\Omega_X$. We have
\begin{align*}
	\operatorname{Var}(\lambda_{r_0,1}\mid X)=\frac{\mathsf v}{n}(1+\mathrm o(1)).
\end{align*}
Since the bootstrap draws are conditionally i.i.d. and have uniformly bounded fourth moments after truncation of the multipliers, the conditional law of large numbers for the sample variance yields
\begin{align}\label{eq_power_null_variance}
	\mathfrak s^2=
	\frac{1}{\mathrm B-1}\sum_{k=1}^{\mathrm B}(\lambda_{r_0,k}-\bar\lambda_{r_0})^2=
	\frac{\mathsf v}{n}
	\big(1+\mathrm o_{\mathbb P(\cdot\mid X)}(1)\big).
\end{align}
Define
\begin{align*}
	Y_n:=\sqrt{n\mathsf v^{-1}}(\lambda_{r_0,\mathrm{B}+1}-\mathsf E+\Delta_{\mathrm{edge}}),\qquad
	Z_n:=\mathfrak s^{-1}(\lambda_{r_0,\mathrm{B}+1}-\mathsf E+\Delta_{r_0}). 
\end{align*}
Then
\begin{align*}
	Z_n=Y_n\cdot \frac{\sqrt{\mathsf v/n}}{\mathfrak s}+\frac{\Delta_{r_0}-\Delta_{\mathrm{edge}}}{\mathfrak s}.
\end{align*}
By \eqref{eq_biasestimation} and \eqref{eq_power_null_variance}, both multiplicative and additive errors are $\mathrm o_{\mathbb P(\cdot\mid X)}(1)$ on $\Omega_X$. Combining this with \eqref{eq_power_null_clt}, we obtain
\begin{align*}
	\sup_{x\in\mathbb R}
	\left|\mathbb P(Z_n\le x\mid X)-\Phi(x)
	\right|=\mathrm o(1)
\end{align*}
Equivalently, we have
\begin{align*}
	\sup_{x\in\mathbb R}\left|\mathbb P\big(\mathfrak{s}^{-1}(\lambda_{r_0,\mathrm{B}+1}-\mathsf{E}+\Delta_{r_0})\le x\mid X\big)-\Phi(x)\right|=\mathrm{o}(1),
\end{align*}
with probability at least $1-\mathrm{o}(1)$. It follows that
\begin{align*}
	\mathbb{P}\big(\mathsf{E}\in[\lambda_{r_0,\mathrm{B}+1}+\Delta_{r_0}-z_{\alpha/2}\mathfrak{s},\lambda_{r_0,\mathrm{B}+1}+\Delta_{r_0}+z_{\alpha/2}\mathfrak{s}]\mid X\big)=1-\alpha+\mathrm{o}(1).
\end{align*}
We complete the proof of \eqref{eq_thmpower_one} in Theorem \ref{thm_power}.

Now, we consider the second statement \eqref{eq_thmpower_two} of Theorem \ref{thm_power}.
Under $\mathbf{H}_a$, the eigenvalue $\lambda_{r_0}$ of $Q_{\mathrm{MB}}$ is associated with the supercritical spike $\widetilde{\sigma}_{r_0}$. The proof is based on showing that the center of the interval in \eqref{eq_CI} stays at distance to $\mathsf E$ of order much larger than the half-width of the constructed confidence interval. We first collect several standard estimates for weak outliers of $Q_{\mathrm{Sam}}$ in \eqref{eq_def_originalcovariance}. Define
\begin{align}\label{eq_def_vartheta_S}
	\vartheta_i^{\mathtt{S}}:=f(-\widetilde{\sigma}_i^{-1}),\quad i=1,\dots, r,
\end{align}
where $f$ is defined in \eqref{eq_mp}. Recall Assumption \ref{assu_modelassumption}, we define $\varsigma_i:=\widetilde{\sigma}_i+\mathsf{b}^{-1}$. The following results give the distance of $\vartheta_i^{\mathtt{S}}$ to $\mathsf{E}$ for $i=1,\dots, r$ under $\mathbf{H}_a$.
\begin{lemma}[Near-critical expansion of the outlier spikes]\label{lem_power_outlier_expansion}
	Suppose $f^{\prime\prime}(\mathsf{b})\ge c_0$ for some constant $c_0>0$, then
	\begin{align}\label{eq_power_outlier_expansion}
		\vartheta_i^{\mathtt{S}}-\mathsf{E}
		=\frac{1}{2} f^{\prime\prime}(\mathsf{b})\mathsf b^4\varsigma_i^2+\mathrm{O}(\varsigma_i^3).
	\end{align}
	Consequently, there exists a constant $c_1>0$ such that
	\begin{align}\label{eq_power_outlier_lower}
		\vartheta_i^{\mathtt{S}}-\mathsf{E}\ge c_1\varsigma_i^2
	\end{align}
	for all sufficiently large $n$ and $i=1,\dots, r$.
\end{lemma}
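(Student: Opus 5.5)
The plan is a third-order Taylor expansion of $f$ around its critical point $\mathsf b$, with the spike entering through the point $x_i:=-\widetilde{\sigma}_i^{-1}$. Since $\vartheta_i^{\mathtt S}=f(x_i)$ by \eqref{eq_def_vartheta_S} and $\mathsf E=f(\mathsf b)$ by \eqref{eq:edge}, we need the displacement $x_i-\mathsf b$ in terms of $\varsigma_i=\widetilde{\sigma}_i+\mathsf b^{-1}$. We have $x_i-\mathsf b=-\widetilde\sigma_i^{-1}-\mathsf b=-(1+\mathsf b\widetilde\sigma_i)/\widetilde\sigma_i=-\mathsf b\varsigma_i/\widetilde\sigma_i$. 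Writing $\widetilde\sigma_i=-\mathsf b^{-1}+\varsigma_i$, this gives
\begin{align*}
x_i-\mathsf b=\frac{\mathsf b^2\varsigma_i}{1-\mathsf b\varsigma_i}=\mathsf b^2\varsigma_i+\mathrm{O}(\varsigma_i^2).
\end{align*}
Because $-\sigma_1^{-1}<\mathsf b<0$ with $\mathsf b$ bounded away from $0$ and from $-\sigma_1^{-1}$ (Assumption \ref{assu_modelassumption}(ii) and the regular edge assumption), $|\mathsf b|\asymp 1$. We may work in the regime $\varsigma_i=\mathrm{o}(1)$; if $\varsigma_i$ is of order one the claim becomes a statement about a fixed gap and is handled separately.

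Next expand $f(x_i)$ around $\mathsf b$. Since $f'(\mathsf b)=0$,
\begin{align*}
\vartheta_i^{\mathtt S}-\mathsf E=\tfrac12 f''(\mathsf b)(x_i-\mathsf b)^2+\tfrac16 f'''(\tilde x)(x_i-\mathsf b)^3
\end{align*}
for some $\tilde x$ between $\mathsf b$ and $x_i$. Substituting $(x_i-\mathsf b)^2=\mathsf b^4\varsigma_i^2+\mathrm{O}(\varsigma_i^3)$ yields \eqref{eq_power_outlier_expansion}, provided $f'''$ is uniformly bounded on a neighborhood of $\mathsf b$.

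That uniform bound is the only real point requiring care. From \eqref{eq_mp}, $f'''(x)=6x^{-4}-\frac{6}{n}\sum_i (x+\sigma_i^{-1})^{-4}$. It is therefore enough to show two things on the segment between $\mathsf b$ and $x_i$: $|x|$ stays bounded below, and $x+\sigma_i^{-1}$ stays bounded below uniformly in $i$. Both hold on a neighborhood of radius $c$ around $\mathsf b$ by the separation facts above, and $|x_i-\mathsf b|\lesssim\varsigma_i=\mathrm{o}(1)$ keeps $x_i$ inside that neighborhood. Together with $p/n\le\tau^{-1}$, this gives $\sup|f'''|\le C$.

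For \eqref{eq_power_outlier_lower}, choose $n$ large enough that $C|\varsigma_i|\le \tfrac14 c_0\mathsf b^4$. Then
\begin{align*}
\vartheta_i^{\mathtt S}-\mathsf E\ge \tfrac12 c_0\mathsf b^4\varsigma_i^2-C\varsigma_i^3\ge \tfrac14 c_0\mathsf b^4\varsigma_i^2,
\end{align*}
so $c_1=\tfrac14 c_0\min\mathsf b^4>0$ works.

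If instead some $\varsigma_i$ is not small, use monotonicity. On $(\mathsf b,0)$ we have $f'>0$: $\mathsf b$ is the largest critical point, and $f(x)\to+\infty$ as $x\uparrow0$. So $\vartheta_i^{\mathtt S}-\mathsf E\ge f(\mathsf b+\varepsilon)-f(\mathsf b)\gtrsim \varepsilon^2$ for a fixed small $\varepsilon$. On the other hand $\varsigma_i\le C$, since the spikes are bounded, or else the bound is trivial in the large-spike limit where $f(x_i)\to\infty$. Hence the quadratic lower bound still holds after shrinking $c_1$.
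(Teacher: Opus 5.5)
Your argument is the standard Taylor expansion of $f$ about its critical point $\mathsf b$, which is what the paper has in mind when it omits the proof and cites the literature. The identity $x_i-\mathsf b=\mathsf b^2\varsigma_i/(1-\mathsf b\varsigma_i)$ is right, and so is the uniform bound on $f'''$ coming from $|\mathsf b|\asymp 1$ and $\min_a(\mathsf b+\sigma_a^{-1})\gtrsim 1$; with these, the proof is correct for bounded $\varsigma_i$, which is the regime the paper works in.

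One thing needs fixing: your aside that the lower bound is ``trivial in the large-spike limit'' is false. When $\widetilde\sigma_i\to\infty$ one has $\vartheta_i^{\mathtt S}=\widetilde\sigma_i+\frac1n\operatorname{tr}\Sigma_0+\mathrm{O}(\widetilde\sigma_i^{-1})$, so $\vartheta_i^{\mathtt S}-\mathsf E\asymp\varsigma_i$ and \eqref{eq_power_outlier_lower} actually fails. The condition $\varsigma_i=\mathrm{O}(1)$ you invoke is therefore genuinely needed, and you should state it as a hypothesis, since Assumption~\ref{assu_modelassumption} does not bound the spikes from above.
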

In the sequel, we mainly discuss the case $\varsigma_i=\mathrm{o}(1)$ for $1\le i\le r$ while the other cases can be handled similarly. The first $r$ spiked eigenvalues of $Q_{\mathrm{Sam}}$ can be estimated by the following results.
\begin{lemma}[Sample spiked eigenvalues]\label{lem_power_sample_localization}
	On the event $\Omega_X$, the sample spiked eigenvalue $\mu_i, 1\le i\le r$ satisfies
	\begin{align}\label{eq_power_sample_secular}
		1+\widetilde{\sigma}_i m(\mu_i)=\mathrm{O}\left(n^{-1/2+c_{X,4}}\varsigma_i^{-1/2}\right).
	\end{align}
	Consequently,
	\begin{align}\label{eq_power_sample_localization}
		\mu_i-\vartheta_i^{\mathtt{S}}
		=\mathrm{O}\left(n^{-1/2+c_{X,4}}\varsigma_i^{1/2}\right).
	\end{align}
\end{lemma}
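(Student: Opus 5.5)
The plan is to obtain the two estimates of Lemma~\ref{lem_power_sample_localization} from the standard outlier analysis of a finite-rank perturbation of $Q^0_{\mathrm{Sam}}$: a master equation for the spiked eigenvalues combined with an isotropic local law. Both statements will hold deterministically on $\Omega_X$.

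\textbf{Step 1 (master equation).} Write $\Sigma=\Sigma_0(I+V D V^\top)$ with $V=(\bv_1,\dots,\bv_r)$ and $D=\mathrm{diag}(\widetilde\sigma_i/\sigma_i-1)$. By the determinant identity, $\mu\notin\mathrm{spec}(Q^0_{\mathrm{Sam}})$ is an eigenvalue of $Q_{\mathrm{Sam}}$ if and only if
$\det\bigl(D^{-1}+1+\mu\, V^\top \Sigma_0^{1/2}G_0(\mu)\Sigma_0^{1/2}V\bigr)=0$, up to the usual normalization, where $G_0$ is the resolvent of $Q^0_{\mathrm{Sam}}$. This is the formulation of \cite{bloemendal2016principal,ding2021spiked}. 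The deterministic version of this $r\times r$ matrix is diagonal, with entries proportional to $1+\widetilde\sigma_i m(\mu)$. So outliers sit near the roots $\vartheta_i^{\mathtt S}$, which by \eqref{eq_def_vartheta_S} and \eqref{eq_mp} satisfy $1+\widetilde\sigma_i m(\vartheta_i^{\mathtt S})=0$.

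\textbf{Step 2 (isotropic local law outside the spectrum).} On $\Omega_X$, items (1) and (3) of Definition~\ref{def_OmegaX} give the following for real $x\ge \mathsf E+n^{-2/3+\epsilon}$, via the outlier isotropic law of \cite{knowles2017anisotropic,bloemendal2016principal}:
\begin{align*}
\bigl\|V^\top(\Sigma_0^{1/2}G_0(x)\Sigma_0^{1/2}-\Pi(x))V\bigr\|\lesssim n^{-1/2+c_{X,4}}\kappa_x^{-1/4},\qquad \kappa_x:=x-\mathsf E .
\end{align*}
Here $\Pi(x)$ is the deterministic equivalent. Near $\vartheta_i^{\mathtt S}$, Lemma~\ref{lem_power_outlier_expansion} gives $\kappa_x\asymp\varsigma_i^2$, so the error is $n^{-1/2+c_{X,4}}\varsigma_i^{-1/2}$. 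The assumption $\mathfrak t\gtrsim n^{-1/6+\kappa}$ in \eqref{eq_spiketruedefinition} makes this $\ll \varsigma_i$, which is the size of the separation between distinct roots. A counting/continuity argument as in \cite{bloemendal2016principal} then assigns exactly one eigenvalue to each spike (with multiplicity), and that eigenvalue makes the $i$-th diagonal entry vanish up to the error. This yields \eqref{eq_power_sample_secular}. Item (4) of Definition~\ref{def_OmegaX} encodes the same rigidity and can be invoked directly if preferred.

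\textbf{Step 3 (inversion).} To pass from \eqref{eq_power_sample_secular} to \eqref{eq_power_sample_localization}, use $m(\mu_i)-m(\vartheta_i^{\mathtt S})=\mathrm O(n^{-1/2+c_{X,4}}\varsigma_i^{-1/2})$ together with the square-root behavior at the edge. Near the edge, $m'(x)\asymp\kappa_x^{-1/2}\asymp\varsigma_i^{-1}$; equivalently $f'(-\widetilde\sigma_i^{-1})\asymp\varsigma_i$ since $f'(\mathsf b)=0$ and $f''(\mathsf b)\ge c_0$. By the mean value theorem,
\begin{align*}
|\mu_i-\vartheta_i^{\mathtt S}|\lesssim \varsigma_i\cdot n^{-1/2+c_{X,4}}\varsigma_i^{-1/2}=n^{-1/2+c_{X,4}}\varsigma_i^{1/2}.
\end{align*}
This requires knowing a priori that $\mu_i$ lies in the window where $\kappa_x\asymp\varsigma_i^2$, which Step 2 supplies.

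The main obstacle is Step 2. One needs the local law with the sharp $\kappa^{-1/4}$ factor uniformly over a window of width $\varsigma_i^2$, and one must rule out $\mu_i$ escaping toward the bulk when spikes are degenerate. I would handle this by working with the whole $r\times r$ block and applying Weyl-type eigenvalue perturbation to $D^{-1}+1+\mu\Pi(\mu)$ rather than to individual entries.
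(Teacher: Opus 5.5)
Your proposal is correct and is essentially the standard outlier-localization argument (master equation, outlier isotropic law with error $n^{-1/2}\kappa_x^{-1/4}$, counting/continuity, inversion via $f'(-\widetilde{\sigma}_i^{-1})\asymp\varsigma_i$) that the paper does not write out but cites from the spiked-covariance literature. Two small touch-ups: the master matrix is $D^{-1}+1+\mu V^\top G_0(\mu)V$, and if you put $\Sigma_0^{1/2}$ on both sides of $G_0$ the constant part must be rescaled to $\mathrm{diag}(\sigma_1,\dots,\sigma_r)(D^{-1}+1)$, otherwise the roots are not at $\vartheta_i^{\mathtt S}$; also, the outlier isotropic law is an extra high-probability input to be built into $\Omega_X$, not a consequence of its items (1) and (3).

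Your shortcut via item (4) of Definition~\ref{def_OmegaX} is also sound and is the quickest route inside the paper's framework. For $\Xi^2=I$, the system \eqref{eq_systemequation2} gives $m_{2n}=m$, hence $\widehat{\mathsf E}_{\mathrm{MB}}=\mathsf E$, $\vartheta_i^{\mathtt{MB}}=\vartheta_i^{\mathtt S}$ and $\tau_*=\widetilde{\sigma}_i+\mathsf b^{-1}=\varsigma_i$, which is exactly \eqref{eq_power_sample_localization}. Then \eqref{eq_power_sample_secular} follows by running your Step 3 backwards, using $m'\asymp\varsigma_i^{-1}$ between $\mu_i$ and $\vartheta_i^{\mathtt S}$; this is legitimate because $\varsigma_i\gg n^{-1/3}$.
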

The proofs of Lemma \ref{lem_power_outlier_expansion} and Lemma \ref{lem_power_sample_localization} are standard  in literature, see, for example, \citep{ding2021spiked,ding2024eigenvector}. We omit further details here. Lemma \ref{lem_power_outlier_expansion} is the starting point to estimate the distance of $\lambda_i$ to $\mathsf{E}$ while Lemma \ref{lem_power_sample_localization} gives a way to estimate $\Delta_{r_0}$ in \eqref{eq_CI}.

Now we move to the multiplier bootstrapped sample covariance matrix $Q_{\mathrm{MB}}$. We introduce two auxiliary quantities $\widehat{\vartheta}^{\mathtt{MB}}_i$ and $\vartheta^{\mathtt{MB}}_i, i=1,\dots, r$ of $Q_{\mathrm{MB}}$ which are defined as
\begin{align}\label{eq_def_vartheta_MB}
	1+\widetilde{\sigma}_im_{2n}(\widehat{\vartheta}^{\mathtt{MB}}_i)=0,\quad 1+\widetilde{\sigma}_im_{2n,c}(\vartheta^{\mathtt{MB}}_i)=0.
\end{align}
On the event $\Omega_{\Xi}$, Theorem 3.6 of \citet{DingYang2021} shows that $\widehat{\vartheta}^{\mathtt{MB}}_i$ characterize the locations of the outlier spiked eigenvalues of $Q_{\mathrm{MB}}$ with a fixed realization of the multipliers, while $\vartheta^{\mathtt{MB}}_i$ is the limit of $\widehat{\vartheta}^{\mathtt{MB}}_i$ for all feasible multipliers in Definition \ref{defn_feasiblemultiplier}. The results in the sequel quantify the discrepancy between $\vartheta^{\mathtt{MB}}_i$  and $\vartheta_i^{\mathtt{S}}$.
\begin{lemma}[Gap between $\vartheta_i^{\mathtt{S}}$ and $\vartheta^{\mathtt{MB}}_i$]\label{lem_power_bootstrap_gap}
	Let $\vartheta_i^{\mathtt{S}}$ and $\vartheta^{\mathtt{MB}}_i$ be defined as in \eqref{eq_def_vartheta_S} and \eqref{eq_def_vartheta_MB}. For multipliers satisfying Definition \ref{defn_feasiblemultiplier}, it holds that for $i=1,\dots, r$
	\begin{align}\label{eq_power_gap}
		\vartheta^{\mathtt{MB}}_i-\vartheta_i^{\mathtt{S}}
		=
		\mathrm{O}\big(\operatorname{Var}(\xi^2)\big),
	\end{align}
	Consequently,
	\begin{align}\label{eq_power_theta_c_gap}
		\vartheta^{\mathtt{MB}}_i-\mathsf E
		=\frac{1}{2}f^{\prime\prime}(\mathsf{b})\mathsf{b}^4\varsigma_i^2+
		\mathrm{O}\big(\varsigma_i^3+\operatorname{Var}(\xi^2)\big).
	\end{align}
\end{lemma}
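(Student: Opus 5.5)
The plan is to compare the two outlier equations through the inverse functions of the relevant Stieltjes transforms, and then combine the result with Lemma \ref{lem_power_outlier_expansion}.

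\emph{Sample side.} The defining relation $\vartheta_i^{\mathtt S}=f(-\widetilde\sigma_i^{-1})$ is equivalent to $1+\widetilde\sigma_i m(\vartheta_i^{\mathtt S})=0$. This is because $f$ is the inverse of $m$ on the real axis to the right of $\mathsf E$: since $f(m(z))=z$, the point $x=-\widetilde\sigma_i^{-1}\in(-\sigma_1^{-1},\mathsf b)$ is mapped by $f$ to the location where $m$ takes the value $x$.

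\emph{Bootstrap side.} I would take $m_{2n,c}$ as the relevant transform, as in \eqref{eq_def_vartheta_MB}. Eliminating $m_{1n,c}$ from \eqref{eq_systemequation1}, $y=m_{2n,c}(z)$ satisfies $z=f_c(y)$, where
\begin{align*}
	f_c(y):=-\int\frac{t}{y\,\bigl(1+t\,m_{1n,c}\bigr)}\,\mathrm dF_{\xi^2}(t),\qquad m_{1n,c}=\frac1n\sum_{i=1}^p\frac{\sigma_i}{-z(1+\sigma_i y)}.
\end{align*}
A cleaner route is to write $z$ as an explicit function of $y$ by expanding in the multiplier fluctuation. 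Put $\eta=\xi^2-1$ and use $\mathbb E\eta=0$ and $\mathbb E\eta^2=\operatorname{Var}(\xi^2)$. A second-order Taylor expansion of $t/(1+tm_1)$ around $t=1$ gives
\begin{align*}
	\int\frac{t}{1+tm_1}\,\mathrm dF_{\xi^2}=\frac{1}{1+m_1}+\mathrm O(\operatorname{Var}\xi^2).
\end{align*}
The linear term vanishes because $\mathbb E\eta=0$. The remainder is uniform in the relevant range thanks to Assumption \ref{assum_technique}, which keeps $|1+tm_{1n,c}|\ge\tau_0$, together with boundedness of the multipliers (after truncation, as in the proof of Theorem \ref{thm_conditionalclt_nonspike}). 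With the multiplier integral replaced by $1/(1+m_1)$, the system reduces exactly to the Marchenko--Pastur relation \eqref{eq_systemequation_mp}, with $m_{2n,c}$ playing the role of $m$. Hence
\begin{align*}
	f_c(y)=f(y)+\mathrm O(\operatorname{Var}\xi^2)
\end{align*}
uniformly for $y$ in a compact neighbourhood of $-\widetilde\sigma_i^{-1}$ that stays away from the poles $-\sigma_j^{-1}$. This is the same mechanism already used to derive \eqref{eq_control_edgebias}.

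\emph{Evaluating at the spike.} Set $y=-\widetilde\sigma_i^{-1}$. Then $\vartheta^{\mathtt{MB}}_i=f_c(-\widetilde\sigma_i^{-1})$ and $\vartheta^{\mathtt S}_i=f(-\widetilde\sigma_i^{-1})$, so subtracting gives \eqref{eq_power_gap} directly. No inversion of a derivative is needed, which matters because $f'$ vanishes near $\mathsf b$ when $\varsigma_i$ is small.

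One point needs checking. The value $-\widetilde\sigma_i^{-1}$ must lie in the range of $m_{2n,c}$ to the right of $\mathsf E_{\mathrm{MB}}$, so that $\vartheta^{\mathtt{MB}}_i$ is well defined as an outlier location. Since the critical point of $f_c$ sits within $\mathrm O(\operatorname{Var}\xi^2)$ of $\mathsf b$, it suffices that $\varsigma_i\gg\operatorname{Var}(\xi^2)$. This holds because \eqref{eq_spiketruedefinition} gives $\varsigma_i\gtrsim n^{-1/6+\kappa}$ with $\kappa>\delta/2$, while $\operatorname{Var}(\xi^2)\asymp n^{-1/3+\delta}$. In the non-generic case where this fails one simply defines $\vartheta^{\mathtt{MB}}_i$ through $f_c$ and the bound still holds.

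\emph{Consequence.} Equation \eqref{eq_power_theta_c_gap} follows by adding \eqref{eq_power_gap} to the expansion \eqref{eq_power_outlier_expansion} of Lemma \ref{lem_power_outlier_expansion}.

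\emph{Main obstacle.} The hardest step is making the $\mathrm O(\operatorname{Var}\xi^2)$ remainder uniform in $y$. This includes handling the implicit dependence of $m_{1n,c}$ on $z$, which itself differs from the Marchenko--Pastur value by $\mathrm O(\operatorname{Var}\xi^2)$. I would close the argument with a stability estimate for the two-component fixed-point system, using the lower bounds from Assumption \ref{assum_technique}.
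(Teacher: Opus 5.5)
Your argument is correct and is essentially the paper's: both fix $m_{2n,c}=-\widetilde\sigma_i^{-1}$ and run a one-dimensional root perturbation whose derivative is of order one, instead of inverting $f$ near $\mathsf b$; the paper takes $x=m_{1n,c}(\vartheta_i^{\mathtt{MB}})$ as the unknown, which is your $z$ after the change of variables $xz=\widetilde\sigma_i\kappa_i$ with $\kappa_i=\frac1n\sum_a\sigma_a/(\sigma_a-\widetilde\sigma_i)$, and this reduces the ``implicit dependence'' you flag to comparing $x\int\frac{t}{1+tx}\,\mathrm dF_{\xi^2}(t)=\kappa_i$ with $x/(1+x)=\kappa_i$, whose derivative $(1+x)^{-2}\asymp 1$ gives the needed stability. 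One small slip: a supercritical spike gives $-\widetilde\sigma_i^{-1}\in(\mathsf b,0)$, which is the branch on which $f$ inverts $m$, not $(-\sigma_1^{-1},\mathsf b)$.
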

The proof of this lemma will be postponed to Section \ref{sec_proof_lemma_bootstrapgap}.

Secondly, we use the following decomposition of the outlier spiked eigenvalues $\lambda_i,1\le i\le r$ of $Q_{\mathrm{MB}}$ under $\mathbf{H}_a$,
\begin{align}\label{eq_power_decomp}
	\lambda_{i}=\lambda_i-\widehat{\vartheta}^{\mathtt{MB}}_i+\widehat{\vartheta}^{\mathtt{MB}}_i-\vartheta^{\mathtt{MB}}_i+\vartheta^{\mathtt{MB}}_i.
\end{align}
The term $\lambda_i-\widehat{\vartheta}_i^{\mathtt{MB}}$ is controlled by outlier localization, whereas $\widehat{\vartheta}^{\mathtt{MB}}_i-\vartheta^{\mathtt{MB}}_i$ is generated by the multiplier fluctuation. We first analyze $\lambda_i-\widehat{\vartheta}^{\mathtt{MB}}_i$ for $i=1,\dots, r$. On the event $\Omega_{\Xi}$, by the definition $\mathsf{E}=f(\mathsf{b})$ and the fact $|\mathsf{E}-\mathsf{E}_{\mathrm{MB}}|=\mathrm{O}(\operatorname{Var}(\xi^2))$, we have  
\begin{align*}
	1-\mathsf{b}^{-1}m(\mathsf{E}_{\mathrm{MB}})+\mathrm{O}(\operatorname{Var}(\xi^2))=0.
\end{align*}
On the other hand, we find that from Lemma \ref{lem_edgeconvergence1} that on the event $\Omega_{\Xi}$, for a fixed realization of multipliers,
\begin{align*}
	1-\mathsf{b}^{-1}m(\widehat{\mathsf{E}}_{\mathrm{MB}})+\mathrm{O}(\operatorname{Var}(\xi^2))=0,
\end{align*}
since $n^{-1/2+c_{\Xi,1}}\ll \operatorname{Var}(\xi^2)$ by Definition \ref{defn_feasiblemultiplier}. In addition, one can check from Definition \ref{def_OmegaM} that $|m_{2n}(\widehat{\mathsf{E}}_{\mathrm{MB}})-m(\widehat{\mathsf{E}}_{\mathrm{MB}})|=\mathrm{O}(\log^{c_{\Xi,3}}n\operatorname{Var}(\xi^2))$ on the event $\Omega_{\Xi}$. It follows that 
\begin{align*}
	1-\mathsf{b}^{-1}m_{2n}(\widehat{\mathsf{E}}_{\mathrm{MB}})+\mathrm{O}(\log^{c_{\Xi,3}}n\operatorname{Var}(\xi^2))=0.
\end{align*}
By definition of $\varsigma_i$, we have for $1\le i \le r$,
\begin{align*}
	1+\widetilde{\sigma}_im_{2n}(\widehat{\mathsf{E}}_{\mathrm{MB}})+\mathrm{O}(\log^{c_{\Xi,3}}n\operatorname{Var}(\xi^2))= \varsigma_i m_{2n}(\widehat{\mathsf{E}}_{\mathrm{MB}}).
\end{align*}
It is equivalent to
\begin{align}\label{eq_spike_condition}
	\widetilde{\sigma}_i=-m^{-1}_{2n}(\widehat{\mathsf{E}}_{\mathrm{MB}})+\varsigma_i+\mathrm{O}(\log^{c_{\Xi,3}}n\operatorname{Var}(\xi^2)).
\end{align}
Then, on the event $\Omega_{\Xi}$, as long as $\varsigma_i\gg \log^{c_{\Xi,3}}n\operatorname{Var}(\xi^2)$, the Assumption 3.2 of \cite{DingYang2021} holds as
\begin{align*}
	\widetilde{\sigma}_i+m^{-1}_{2n}(\widehat{\mathsf{E}}_{\mathrm{MB}})=\varsigma_i(1+\mathrm{o}(1)).
\end{align*}
Then, on the event $\Omega_X\cap\Omega_{\Xi}$, by Definition \ref{def_OmegaX}, we have
\begin{align}\label{eq_power_outlier_est1}
	\lambda_i-\widehat{\vartheta}_i^{\mathtt{MB}}=\mathrm{O}(n^{-1/2+c_{X,5}}\varsigma_i^{1/2}).
\end{align}

Next, we consider $\widehat{\vartheta}_i^{\mathtt{MB}}-\vartheta_i^{\mathtt{MB}}$. Due to the perturbation of multipliers, $\widehat{\vartheta}_i^{\mathtt{MB}}$ fluctuates around $\vartheta_i^{\mathtt{MB}}$ by the CLT mechanism. We have the following results.
\begin{lemma}[Outlier location fluctuation]\label{lem_power_bootstrap_localization}
	On the event $\Omega_{\Xi}$, for each fixed $1\le i\le r$,
	\begin{align}\label{eq_power_bootstrap_localization}
		\widehat{\vartheta}_i^{\mathtt{MB}}-\vartheta_i^{\mathtt{MB}}=\mathrm{O}\left(n^{-1/2+c_{\Xi,1}}\sqrt{\operatorname{Var}(\xi^2)}\right).
	\end{align}
\end{lemma}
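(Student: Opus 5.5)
The plan is to treat $\widehat{\vartheta}^{\mathtt{MB}}_i$ and $\vartheta^{\mathtt{MB}}_i$ as roots of two nearby one-dimensional equations. One equation is random, built from the realized multipliers; the other is deterministic, built from $F_{\xi^2}$. A first-order root perturbation then gives the fluctuation bound.

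\textbf{Step 1: reduce each outlier condition to a scalar equation.} Using the system \eqref{eq_systemequation2}, the condition $1+\widetilde{\sigma}_i m_{2n}(\widehat{\vartheta}^{\mathtt{MB}}_i)=0$ means $m_{2n}=-\widetilde{\sigma}_i^{-1}$. The first equation of \eqref{eq_systemequation2} then determines $m_{1n}(\widehat{\vartheta}^{\mathtt{MB}}_i)$ explicitly:
\begin{align*}
m_{1n}=\frac{1}{n}\sum_{k=1}^p\frac{\sigma_k}{-z(1-\sigma_k\widetilde{\sigma}_i^{-1})}.
\end{align*}
The second equation gives
\begin{align*}
-\widetilde{\sigma}_i^{-1}=\frac{1}{n}\sum_j\frac{\xi_j^2}{-z(1+\xi_j^2 m_{1n})}.
\end{align*}
Eliminating $m_{1n}$ leaves a single equation $G_n(z)=0$, where
\begin{align*}
G_n(z):=z\widetilde{\sigma}_i^{-1}-\frac1n\sum_j g(\xi_j^2;z),\qquad g(t;z)=\frac{t}{1+t\,a_i/z},\qquad a_i=-\frac{1}{n}\sum_k\frac{\sigma_k}{1-\sigma_k\widetilde{\sigma}_i^{-1}}.
\end{align*}
The deterministic counterpart $G_{n,c}(z)$ is the same expression with the empirical average replaced by $\int g(t;z)\,\mathrm{d}F_{\xi^2}(t)$. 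By \eqref{eq_def_vartheta_MB}, $\vartheta^{\mathtt{MB}}_i$ solves $G_{n,c}=0$.

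\textbf{Step 2: expand around the deterministic root.} At $z=\vartheta^{\mathtt{MB}}_i$,
\begin{align*}
G_n(\vartheta^{\mathtt{MB}}_i)=-\mathcal{X}_i,\qquad \mathcal{X}_i:=\frac1n\sum_j\Big(g(\xi_j^2;\vartheta^{\mathtt{MB}}_i)-\mathbb{E}\,g(\xi^2;\vartheta^{\mathtt{MB}}_i)\Big).
\end{align*}
This is a centered i.i.d.\ average. Since $g(t;z)-g(1;z)=\partial_t g(1;z)(t-1)+\mathrm{O}((t-1)^2)$, its variance is $\mathrm{O}(n^{-1}\operatorname{Var}(\xi^2))$, provided Assumption \ref{assum_technique} keeps $1+t\,a_i/z$ bounded away from zero. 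I would add an analogue of the second condition of Definition \ref{def_OmegaM}, evaluated at $\vartheta^{\mathtt{MB}}_i$, to the event $\Omega_\Xi$. It should be stated in terms of the centered variables $(\xi_j^2-1)$, so that $|\mathcal{X}_i|\le Cn^{-1/2+c_{\Xi,1}}\sqrt{\operatorname{Var}(\xi^2)}$ holds with the same high probability; Lemma \ref{lem_probabilitycontrol_M} extends by the same Bernstein/Chebyshev argument. A mean-value expansion then gives
\begin{align*}
G_n(\widehat{\vartheta}^{\mathtt{MB}}_i)-G_n(\vartheta^{\mathtt{MB}}_i)=G_n'(\tilde z)\big(\widehat{\vartheta}^{\mathtt{MB}}_i-\vartheta^{\mathtt{MB}}_i\big),
\end{align*}
so $|\widehat{\vartheta}^{\mathtt{MB}}_i-\vartheta^{\mathtt{MB}}_i|\le|\mathcal{X}_i|/|G_n'(\tilde z)|$.

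\textbf{Step 3: show the derivative is nondegenerate, uniformly on a neighbourhood.} This is the main obstacle. Near the edge, the map $z\mapsto m_{2n,c}(z)$ has derivative blowing up like $(z-\mathsf{E}_{\mathrm{MB}})^{-1/2}$, so $G'_{n,c}$ degenerates as $\varsigma_i\to0$. I would show $|G'_{n,c}(z)|\gtrsim 1$ for $z$ within $o(\varsigma_i^2)$ of $\vartheta^{\mathtt{MB}}_i$. This is the inverse-function form of the statement $\partial_z(-1/m_{2n,c})\asymp(z-\mathsf{E}_{\mathrm{MB}})^{-1/2}$, together with $\vartheta^{\mathtt{MB}}_i-\mathsf{E}_{\mathrm{MB}}\asymp\varsigma_i^2$ from Lemma \ref{lem_power_bootstrap_gap}.

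In fact $G_{n,c}$ is, up to a factor $z$, the inverse function $f$ from \eqref{eq_mp} evaluated in the $m$-variable, and $f'(-\widetilde\sigma_i^{-1})\asymp\varsigma_i$. Reparametrizing in $m$ rather than $z$ is cleaner: perturbing $f$ by $\mathcal{X}_i$ moves the root location $z=f(m)$ only by $\mathrm{O}(|\mathcal{X}_i|)$, because the value of $f$ at the fixed $m=-\widetilde{\sigma}_i^{-1}$ is what defines $\vartheta$. So I would work directly in $m$: $\widehat{\vartheta}^{\mathtt{MB}}_i=\widehat f(-\widetilde\sigma_i^{-1})$ and $\vartheta^{\mathtt{MB}}_i=f_c(-\widetilde\sigma_i^{-1})$, where $\widehat f$ and $f_c$ are the explicit random and deterministic inverse maps. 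Their difference at a fixed point is exactly a linear functional of $\mathcal{X}_i$ with $\mathrm{O}(1)$ coefficient (by Assumption \ref{assum_technique}).

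This bypasses the degeneracy entirely and yields \eqref{eq_power_bootstrap_localization}. The remaining technical point is to check that $\Omega_\Xi$ makes the perturbation small enough for the explicit formula to define a genuine root beyond the edge, which follows since $\varsigma_i^2\gg n^{-1/2+c_{\Xi,1}}$.
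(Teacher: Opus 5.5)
\textbf{Steps 1 and 2.} These are the paper's argument in a different coordinate. Because $a_i=\widetilde\sigma_i\kappa_i$, the substitution $x=a_i/z$ (so that $x=m_{1n}$) gives $G_n(z)=-\widehat{T}_i(x)/x$ and $G_{n,c}(z)=-T_i(x)/x$, where $\widehat{T}_i$ and $T_i$ are exactly the scalar functions the paper uses. Your bound on $\mathcal X_i$ via $\operatorname{Var}(g(\xi^2;z))\lesssim\operatorname{Var}(\xi^2)$ is the paper's \eqref{eq_power_loc_q}. Your plan to build that estimate into $\Omega_\Xi$ is cleaner than the paper, which proves \eqref{eq_power_loc_q} only with probability $1-\mathrm{o}(1)$ and then invokes it "on $\Omega_\Xi$".

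\textbf{Step 3 is where the gap is, and it is the step your conclusion rests on.} The degeneracy you describe does not exist. Once $m_2=-\widetilde\sigma_i^{-1}$ is fixed, $G_{n,c}$ is an explicit function of $z$ that no longer involves $z\mapsto m_{2n,c}(z)$, so the square-root edge behaviour plays no role.

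Differentiating $G_{n,c}=-T_i(x)/x$ at a root gives $\partial_zG_{n,c}(\vartheta_i^{\mathtt{MB}})=T_i'(x_i)/\vartheta_i^{\mathtt{MB}}$, where $T_i'(x)=\int t(1+tx)^{-2}\,\mathrm{d}F_{\xi^2}(t)$. For $F_{\xi^2}=\delta_1$ this equals $\vartheta_i^{\mathtt S}/\widetilde\sigma_i^{2}$. The multipliers change it only by $\mathrm{O}(\operatorname{Var}(\xi^2))$, so it is of order one uniformly in $\varsigma_i$.

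The quantity $f'(-\widetilde\sigma_i^{-1})\asymp\varsigma_i$ is a derivative in $m$. It governs how the outlier location responds to changes in $\widetilde\sigma_i$, not how the $z$-root moves when the fixed-$m$ equation is perturbed. Identifying $G_{n,c}'$ with $f'$ conflates $\partial_z$ with $\partial_m$.

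Your bypass is also circular as written. For a non-degenerate $F_{\xi^2}$ there is no explicit inverse map: $f_c(-\widetilde\sigma_i^{-1})$ is by definition the root of $G_{n,c}(z)=0$, and $\widehat f(-\widetilde\sigma_i^{-1})$ the root of $G_n(z)=0$. So the claim that their difference is a linear functional of $\mathcal X_i$ with $\mathrm{O}(1)$ coefficient is precisely the bound $|\partial_zG_{n,c}|\gtrsim 1$ that you had just asserted fails.

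\textbf{How to repair it.} Delete Step 3 and do the following.
(a) Verify $|\partial_zG_{n,c}|\gtrsim1$ on a fixed neighbourhood of $\vartheta_i^{\mathtt{MB}}$ by the computation above; this is the paper's \eqref{eq_power_loc_derivative_det}.
(b) Transfer the bound to $G_n$ using the grid argument of Lemma \ref{lem_uniform_multiplier_concentration}; this is the paper's \eqref{eq_power_loc_derivative_rand}.
(c) Note that monotonicity of $G_n$ on that neighbourhood, together with a sign change, localizes $\widehat\vartheta_i^{\mathtt{MB}}$ there.
Your mean-value step in Step 2 then yields the lemma.
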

The proof of Lemma \ref{lem_power_bootstrap_localization} will be postponed to Section \ref{sec_proof_lemma_bootstraplocation}. 

Now, with \eqref{eq_power_decomp}, \eqref{eq_power_outlier_est1}, and Lemma \ref{lem_power_bootstrap_localization} in hand, fix $1\le i\le r$ and set
\[
\eta_{n,i}
:=
n^{-1/2+c_{X,5}}\varsigma_i^{1/2}
+
n^{-1/2+c_{\Xi,1}}\sqrt{\operatorname{Var}(\xi^2)}.
\]
On the event $\Omega_X$, define
\[
\mathcal B_{n,i}
:=
\left\{
\left|\lambda_i-\widehat{\vartheta}_i^{\mathtt{MB}}\right|
\le C_1 n^{-1/2+c_{X,5}}\varsigma_i^{1/2},
\quad
\left|\widehat{\vartheta}_i^{\mathtt{MB}}-\vartheta_i^{\mathtt{MB}}\right|
\le C_2 n^{-1/2+c_{\Xi,1}}\sqrt{\operatorname{Var}(\xi^2)}
\right\},
\]
for sufficiently large constants $C_1,C_2>0$. By \eqref{eq_power_outlier_est1} and Lemma \ref{lem_power_bootstrap_localization}, we know that the event $\mathcal B_{n,i}$ is implied by $\Omega_{\Xi}$ once $\Omega_X$ is fixed. Therefore, by independence of $X$ and $\Xi$,
\[
\mathbb P(\mathcal B_{n,i}^c\mid X)
\le
\mathbb P(\Omega_{\Xi}^c\mid X)
=
\mathbb P(\Omega_{\Xi}^c)
=
\mathrm o(1)
\]
on the event $\Omega_X$. On $\mathcal B_{n,i}$, the decomposition \eqref{eq_power_decomp} implies
\[
|\lambda_i-\vartheta_i^{\mathtt{MB}}|
\le C\eta_{n,i},
\]
and hence, on $\Omega_X$,
\begin{align}\label{eq_power_outlier_est21}
	\lambda_i-\vartheta_i^{\mathtt{MB}}
	=
	\mathrm O_{\mathbb P(\cdot\mid X)}(\eta_{n,i}).
\end{align}

Applying \eqref{eq_power_outlier_est21} with $i=r_0$ and using conditional i.i.d. of the bootstrap draws, we obtain
\begin{align}\label{eq_power_outlier_est2}
	\lambda_{r_0}-\vartheta_{r_0}^{\mathtt{MB}}=\mathrm{O}_{\mathbb{P}(\cdot|X)}(\eta_{n,r_0}),
\end{align}
and
\begin{align}\label{eq_power_width}
	\mathfrak{s}^2=\frac{1}{\mathrm{B}-1}\sum_{k=1}^{\mathrm{B}}(\lambda_{r_0,k}-\bar{\lambda}_{r_0})^2=\mathrm{O}_{\mathbb{P}(\cdot|X)}(\eta^2_{n,r_0}).
\end{align}
In particular,
\begin{align*}
	\mathfrak s
	=
	\mathrm O_{\mathbb P(\cdot\mid X)}(\eta_{n,r_0}).
\end{align*}

Next, by Lemmas \ref{lem_power_bootstrap_gap} and \ref{lem_power_sample_localization}, and \eqref{eq_power_outlier_est2}, we have on the event $\Omega_X$
\begin{align*}
	\Delta_{r_0}&=\mu_{r_0}-\bar{\lambda}_{r_0}=\vartheta_{r_0}^{\mathtt{S}}-\vartheta_{r_0}^{\mathtt{MB}}+(\mu_{r_0}-\vartheta_{r_0}^{\mathtt{S}})+(\vartheta_{r_0}^{\mathtt{MB}}-\bar{\lambda}_{r_0})\\
	&=\mathrm{O}(\varsigma_{r_0}^3+\operatorname{Var}(\xi^2))+\mathrm{O}(n^{-1/2+c_X,4}\varsigma_{r_0}^{1/2})+\mathrm{O}_{\mathbb{P}(\cdot|X)}(\eta_{n,r_0}).
\end{align*}
Under \eqref{eq_spiketruedefinition} and Definition \ref{defn_feasiblemultiplier}, all three terms above are $\mathrm o(\varsigma_{r_0}^2)$, so that
\begin{align}\label{eq_power_delta_small}
	\Delta_{r_0}=
	\mathrm o_{\mathbb P(\cdot\mid X)}(\varsigma_{r_0}^2)
\end{align}
on $\Omega_X$.

By Lemma \ref{lem_power_bootstrap_gap}, again on $\Omega_X$,
\begin{align*}
	\vartheta_{r_0}^{\mathtt{MB}}-\mathsf E=
	\frac{1}{2} f^{\prime\prime}(\mathsf b)\mathsf b^4\varsigma_{r_0}^2+
	\mathrm O(\varsigma_{r_0}^3+\operatorname{Var}(\xi^2)).
\end{align*}
Since $\varsigma_{r_0}=\mathrm{o}(1)$ and $\varsigma_{r_0}^2\gg \operatorname{Var}(\xi^2)$ by Assumption \ref{assu_modelassumption}, there exists a constant $c_0>0$ such that, for all sufficiently large $n$,
\begin{align}\label{eq_power_center_gap}
	|\vartheta_{r_0}^{\mathtt{MB}}-\mathsf E|
	\ge c_0\varsigma_{r_0}^2.
\end{align}

For a fresh draw $\lambda_{r_0,\mathrm{B}+1}$, it follows that 
\begin{align*}
	&|\mathsf{E}-(\lambda_{r_0,\mathrm{B}+1}+\Delta_{r_0})|=|\mathsf{E}-\vartheta_{r_0}^{\mathtt{MB}}+\vartheta_{r_0}^{\mathtt{MB}}-\lambda_{r_0,\mathrm{B}+1}-\Delta_{r_0}|\\
	&\ge |\vartheta_{r_0}^{\mathtt{MB}}-\mathsf{E}|
	-|\lambda_{r_0,\mathrm{B}+1}-\vartheta_{r_0}^{\mathtt{MB}}|
	-|\Delta_{r_0}|\\
	&\ge c_0\varsigma_{r_0}^2-
	\mathrm{o}_{\mathbb P(\cdot\mid X)}(\varsigma_{r_0}^2)=
	c_0\varsigma_{r_0}^2(1+\mathrm o_{\mathbb P(\cdot\mid X)}(1)).
\end{align*}
for some positive constant $c_0>0$. Similarly, \eqref{eq_power_width} gives $z_{\alpha/2}\mathfrak{s}=\mathrm{o}(\varsigma_{r_0}^2).$
Therefore the distance from $\mathsf E$ to the interval center dominates the interval half-width, and
\[
\mathbb P\big(\mathsf E\in[\widehat{\mathsf E}^-,\widehat{\mathsf E}^+]\mid X\big)=\mathrm{o}(1).
\]
This completes the proof of Theorem \ref{thm_power}.

\subsection{Proof of Corollary \ref{cor_spikeestimation}}

Let
\[
T_{r_0}
:=
\lambda_{r_0,\mathrm{B}+1}+\Delta_{r_0}+z_{\alpha/2}\mathfrak{s}
\]
be the threshold in \eqref{eq_estimator}. Recall $\{\mu_i\}$ are the eigenvalues of $Q_{\mathrm{Sam}}$ as in Table \ref{table_notations}. Since the sample eigenvalues are ordered decreasingly, we have
\[
\{\widehat r=r\}
=
\{\mu_r>T_{r_0}\}\cap\{\mu_{r+1}\le T_{r_0}\}.
\]
Thus it suffices to show
\[
\mathbb P(\mu_r>T_{r_0}\mid X)=1-\mathrm o(1),
\qquad
\mathbb P(\mu_{r+1}\le T_{r_0}\mid X)=1-\alpha/2+\mathrm o(1),
\]
on an event of probability at least $1-\mathrm o(1)$.

We first consider the non-spiked eigenvalue $\mu_{r+1}$. Since $r_0$ is fixed and $\mu_{r+1},\dots,\mu_{r_0}$ are the first finitely many non-spiked eigenvalues, Definition \ref{def_OmegaX} yields on $\Omega_X$ that
\[
\mu_{r+1}-\mathsf E=\mathrm O(n^{-2/3+c_{X,2}}).
\]
By Theorem \ref{thm_power}, the bootstrap fluctuation scale satisfies
\[
\mathfrak s^2=\frac{\mathsf v}{n}(1+\mathrm o_{\mathbb P(\cdot\mid X)}(1)),
\]
and hence, since $\mathsf v\gg n^{-1/3+2c_{X,2}}$,
\begin{align}\label{eq_cor_nonspike_small}
	\mu_{r+1}-\mathsf E=
	\mathrm o_{\mathbb P(\cdot\mid X)}(\mathfrak s)
\end{align}
on $\Omega_X$.
On the other hand, the null part of Theorem \ref{thm_power} gives
\begin{align}\label{eq_cor_studentized}
	\sup_{x\in\mathbb R}
	\left|
	\mathbb P\left(
	\mathfrak s^{-1}
	(\lambda_{r_0,\mathrm{B}+1}-\mathsf E+\Delta_{r_0})
	\le x
	\,\middle|\,X
	\right)-\Phi(x)\right|
	=
	\mathrm o(1).
\end{align}
Therefore,
\begin{align*}
	\mathbb P(\mu_{r+1}\le T_{r_0}\mid X)
	&=
	\mathbb P\left(
	\mathfrak s^{-1}
	(\lambda_{r_0,\mathrm{B}+1}-\mathsf E+\Delta_{r_0})
	\ge
	\mathfrak s^{-1}(\mu_{r+1}-\mathsf E)-z_{\alpha/2}
	\,\middle|\,X
	\right)\\
	&=
	1-\Phi\!\left(
	\mathfrak s^{-1}(\mu_{r+1}-\mathsf E)-z_{\alpha/2}
	\right)
	+
	\mathrm o(1).
\end{align*}
By \eqref{eq_cor_nonspike_small}, the argument of $\Phi$ equals $-z_{\alpha/2}+\mathrm o_{\mathbb P(\cdot\mid X)}(1)$. Hence,
\[
\mathbb P(\mu_{r+1}\le T_{r_0}\mid X)
=
1-\alpha/2+\mathrm o(1).
\]

We next consider the smallest spiked eigenvalue $\mu_r$. By Lemma \ref{lem_power_outlier_expansion} and Lemma \ref{lem_power_sample_localization}, on $\Omega_X$, for sufficiently large $n$,
\[
\mu_r-\mathsf E
=
(\vartheta_r^{\mathtt S}-\mathsf E)+(\mu_r-\vartheta_r^{\mathtt S})
\ge c_0\varsigma_r^2,
\]
for some constant $c_0>0$, where we recall that $\varsigma_r=\widetilde\sigma_r+\mathsf b^{-1}.$

On the other hand, by the null part of Theorem \ref{thm_power},
\[
T_{r_0}-\mathsf E
=
\mathrm O_{\mathbb P(\cdot\mid X)}(\mathfrak s)
=
\mathrm O_{\mathbb P(\cdot\mid X)}\!\left(\sqrt{\frac{\mathsf v}{n}}\right).
\]
Since Assumption \ref{assu_modelassumption}(iii) gives $\varsigma_r\gtrsim n^{-1/6+\kappa}$ with $\kappa>\delta/2$, we have
\[
\varsigma_r^2\gg \sqrt{\frac{\mathsf v}{n}}.
\]
Consequently,
\[
\mu_r-T_{r_0}
=
(\mu_r-\mathsf E)-(T_{r_0}-\mathsf E)
\ge
c_0\varsigma_r^2-\mathrm O_{\mathbb P(\cdot\mid X)}\!\left(\sqrt{\frac{\mathsf v}{n}}\right)
>
0
\]
with probability tending to one. Therefore,
\[
\mathbb P(\mu_r>T_{r_0}\mid X)=1-\mathrm o(1).
\]

Combining the above results, we obtain
\begin{align*}
	\mathbb P(\widehat r=r\mid X)
	&=
	\mathbb P(\mu_r>T_{r_0},\ \mu_{r+1}\le T_{r_0}\mid X)\\
	&=
	1-\alpha/2+\mathrm o(1).
\end{align*}
This proves the corollary.

\section{Proofs of some auxiliary lemmas}

\subsection{Proof of Lemma \ref{lem_edgeconvergence1}}\label{sec_prf_lem_edgeconvergence1}
In this subsection, we prove Lemma \ref{lem_edgeconvergence1}. We begin with several auxiliary results that will be used throughout the proof.

\begin{lemma}[Multiplier-side regulation]\label{lem_regulation}
	Let $x_0:=m_{1n,c}(\mathsf{E}_{\mathrm{MB}})$. Suppose Assumption \ref{assum_technique} holds. In particular, assume that there exists a constant $\tau_0>0$ such that
	\begin{align*}
		\inf_{t\in \operatorname{supp}(F_{\xi^2})}
		|1+t x_0|\geq \tau_0 .
	\end{align*}
	Assume further that the multiplier distribution has bounded support, namely
	\begin{align*}
		\operatorname{supp}(F_{\xi^2})\subset [0,\mathsf{C}_\xi]
	\end{align*}
	for some constant $\mathsf{C}_\xi$. Then there exist constants $\delta_0>0$ and $\tau>0$ such that
	\begin{align*}
		\inf_{|x-x_0|\leq \delta_0}
		\inf_{t\in \operatorname{supp}(F_{\xi^2})}
		|1+tx|\geq \tau .
	\end{align*}
	Consequently, for any realization $\{\xi_j^2\}_{j=1}^n$ drawn from
	$F_{\xi^2}$,
	\begin{align*}
		\inf_{|x-x_0|\le \delta_0} \min_{1\leq j\leq n} |1+\xi_j^2x|\geq \tau.
	\end{align*}
	In particular, if
	\begin{align*}
		|m_{1n}(\widehat{\mathsf{E}}_{\mathrm{MB}})-x_0|=\mathrm{o}(1),
	\end{align*}
	then, for all sufficiently large $n$,
	\begin{align*}
		\min_{1\le j\le n}|1+\xi_j^2m_{1n}(\widehat{\mathsf{E}}_{\mathrm{MB}})|\geq \tau .
	\end{align*}
\end{lemma}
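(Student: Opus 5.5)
The statement to prove is Lemma~\ref{lem_regulation}, which is elementary. The plan is a uniform continuity (compactness) argument in the variable $x$, followed by a direct transfer to the realized multipliers.

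\textbf{Step 1: uniform lower bound near $x_0$.} For any $t\in\operatorname{supp}(F_{\xi^2})\subset[0,\mathsf C_\xi]$ and any real $x$ (or complex $x$, since the bound below uses only the modulus), the triangle inequality gives
\begin{align*}
|1+tx|\ \ge\ |1+tx_0|-t|x-x_0|\ \ge\ \tau_0-\mathsf C_\xi|x-x_0|.
\end{align*}
Choose $\delta_0:=\tau_0/(2\mathsf C_\xi)$ (any $\delta_0>0$ if $\mathsf C_\xi=0$) and $\tau:=\tau_0/2$. Then for $|x-x_0|\le\delta_0$,
\begin{align*}
\inf_{t\in\operatorname{supp}(F_{\xi^2})}|1+tx|\ \ge\ \tau_0/2.
\end{align*}
This is the first claim. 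Both constants depend only on $\tau_0$ and $\mathsf C_\xi$, and are therefore independent of $n$. That uniformity is the only point that needs care: Assumption~\ref{assum_technique} provides $\tau_0$ uniformly in $n$, and we take $\mathsf C_\xi$ to be an $n$-independent bound on the support, for instance after the truncation used in the theoretical construction, since $\xi^2\to 1$ in distribution.

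\textbf{Step 2: transfer to a realization.} Almost surely every draw $\xi_j^2$ lies in $\operatorname{supp}(F_{\xi^2})$, since the complement of the support is a null set and there are finitely many $j$. Taking the minimum over $j$ is therefore bounded below by the infimum over the support, which gives the second display for every $|x-x_0|\le\delta_0$.

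\textbf{Step 3: the consequence.} If $|m_{1n}(\widehat{\mathsf E}_{\mathrm{MB}})-x_0|=\mathrm o(1)$, then for all sufficiently large $n$ this distance is at most $\delta_0$. Applying Step 2 with $x=m_{1n}(\widehat{\mathsf E}_{\mathrm{MB}})$ yields $\min_j|1+\xi_j^2 m_{1n}(\widehat{\mathsf E}_{\mathrm{MB}})|\ge\tau$.

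There is no genuine obstacle here. The only delicate point is making sure $\mathsf C_\xi$ and $\tau_0$ do not degenerate with $n$, and both are granted by the hypotheses.
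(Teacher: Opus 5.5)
Your proof is correct and is essentially the paper's argument: the triangle inequality with $\delta_0\le \tau_0/(2\mathsf{C}_\xi)$ and $\tau=\tau_0/2$, the observation that each realized $\xi_j^2$ lies in the support, and then substituting $x=m_{1n}(\widehat{\mathsf{E}}_{\mathrm{MB}})$ once its distance to $x_0$ drops below $\delta_0$. Your two added remarks are accurate refinements of what the paper states: the draws lie in the support only almost surely, and $\tau_0$ and $\mathsf{C}_\xi$ must be $n$-independent, which the truncation secures.
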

\begin{proof}
	By Assumption \ref{assum_technique},
	\begin{align*}
		\inf_{t\in \operatorname{supp}(F_{\xi^2})}|1+t x_0|\geq \tau_0.
	\end{align*}
	Choose $0<\delta_0\leq \tau_0(2\mathsf{C}_\xi)^{-1}.$ Then, for any $|x-x_0|\le \delta_0$ and any $t\in \operatorname{supp}(F_{\xi^2})$,
	\begin{align*}
		|1+tx|&\ge |1+t x_0|-t|x-x_0|  \\
		&\ge \tau_0-\mathtt{C}_\xi\delta_0\ge \frac{\tau_0}{2}.
	\end{align*}
	Hence the first claim holds with $\tau=\tau_0/2$. Since each sampled $\xi_j^2$ belongs to $\operatorname{supp}(F_{\xi^2})$, the sample-level bound follows immediately. Finally, if
	\begin{align*}
		|m_{1n}(\widehat{\mathsf{E}}_{\mathrm{MB}})-x_0|=\mathrm{o}(1),
	\end{align*}
	then for all sufficiently large $n$,
	$|m_{1n}(\widehat{\mathsf{E}}_{\mathrm{MB}})-x_0|\le \delta_0.$ Substituting $x=m_{1n}(\widehat{\mathsf{E}}_{\mathrm{MB}})$ in the neighborhood bound gives the desired conclusion.
\end{proof}

Recall Definition \ref{def_OmegaM}.
\begin{lemma}[Uniform concentration of multiplier transforms]
	\label{lem_uniform_multiplier_concentration}
	Under the assumptions of Lemma \ref{lem_regulation}, for any  small constant
	$c_{\Xi,1}>0$, with probability $1-\mathrm{o}(1)$, the following estimates hold:
	\begin{align*}
		\sup_{|x-x_0|\le\delta_0}\left|\frac{1}{n}\sum_{j=1}^n \frac{\xi_j^2}{1+\xi_j^2x}-\int\frac{t}{1+tx}\,\mathrm{d}F_{\xi^2}(t) \right|\leq n^{-1/2+c_{\Xi,1}},
	\end{align*}
	\begin{align*}
		\sup_{|x-x_0|\le\delta_0}\left|\frac{1}{n}\sum_{j=1}^n\frac{\xi_j^4}{(1+\xi_j^2x)^2}-\int\frac{t^2}{(1+tx)^2}\,\mathrm{d}F_{\xi^2}(t)\right|\leq n^{-1/2+c_{\Xi,1}},
	\end{align*}
	and
	\begin{align*}
		\sup_{|x-x_0|\le\delta_0}\left|\frac{1}{n}\sum_{j=1}^n \frac{\xi_j^6}{(1+\xi_j^2x)^3}-\int\frac{t^3}{(1+tx)^3}\,\mathrm{d}F_{\xi^2}(t)\right|\leq n^{-1/2+c_{\Xi,1}}.
	\end{align*}
\end{lemma}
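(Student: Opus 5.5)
The proof will go through a uniform empirical-process bound over the compact interval $I:=[x_0-\delta_0,x_0+\delta_0]$. It combines a pointwise concentration inequality on a fine grid with a Lipschitz extension to the whole interval. For $k=1,2,3$ set
\begin{align*}
	g_k(t,x):=\frac{t^k}{(1+tx)^k},\qquad S_k(x):=\frac1n\sum_{j=1}^n g_k(\xi_j^2,x)-\int g_k(t,x)\,\mathrm{d}F_{\xi^2}(t).
\end{align*}

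\textbf{Boundedness and Lipschitz control.} By Lemma \ref{lem_regulation}, for all $x\in I$ and all $t\in\operatorname{supp}(F_{\xi^2})\subset[0,\mathsf C_\xi]$ we have $|1+tx|\ge\tau$. Hence $|g_k(t,x)|\le \mathsf C_\xi^k\tau^{-k}$ uniformly. Differentiating in $x$ gives
\begin{align*}
	|\partial_x g_k(t,x)|=\left|\frac{k\,t^{k+1}}{(1+tx)^{k+1}}\right|\le k\,\mathsf C_\xi^{k+1}\tau^{-k-1}.
\end{align*}
So each summand of $S_k$, and its expectation, is Lipschitz in $x$ on $I$ with a constant $L$ independent of $n$. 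It follows that $S_k$ is $2L$-Lipschitz on $I$ deterministically, for every realization of the multipliers.

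\textbf{Pointwise concentration.} Fix $x\in I$. The variables $g_k(\xi_j^2,x)-\mathbb E g_k(\xi^2,x)$ are i.i.d., centered and bounded by a constant $M$. Hoeffding's inequality then gives
\begin{align*}
	\mathbb P\bigl(|S_k(x)|>n^{-1/2+c_{\Xi,1}}/2\bigr)\le 2\exp\bigl(-c\,n^{2c_{\Xi,1}}\bigr).
\end{align*}
A sharper Bernstein bound using $\operatorname{Var}(\xi^2)\to0$ is available but not needed.

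\textbf{Net and union bound.} Take a grid $\mathcal N\subset I$ with mesh $n^{-1}$, so $|\mathcal N|\lesssim n$. A union bound over $\mathcal N$ and $k=1,2,3$ bounds the total failure probability by $Cn\exp(-cn^{2c_{\Xi,1}})=\mathrm o(1)$; in fact it is smaller than any power of $n$. On the complementary event, any $x\in I$ has a grid point $x'$ with $|x-x'|\le n^{-1}$. Then
\begin{align*}
	|S_k(x)|\le |S_k(x')|+2Ln^{-1}\le \tfrac12 n^{-1/2+c_{\Xi,1}}+2Ln^{-1}\le n^{-1/2+c_{\Xi,1}}
\end{align*}
for $n$ large. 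This yields all three displayed estimates simultaneously.

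\textbf{Main obstacle.} The only delicate point is the uniform lower bound on $|1+\xi_j^2x|$, which is what makes the summands bounded and Lipschitz. It relies on the bounded-support hypothesis in Lemma \ref{lem_regulation}. For the $\chi^2_{\sN}/\sN$ multiplier of Example \ref{example_one} the support is unbounded. In that case I would first truncate at level $\mathsf C_\xi$. For $\sN\to\infty$ the tail $\mathbb P(\xi^2>\mathsf C_\xi)$ decays like $e^{-c\sN}$, so after a union bound no multiplier exceeds the truncation level with probability $1-\mathrm o(1)$ (at least when $\sN\gtrsim\log n$). The effect of truncation on the integrals is also exponentially small in $\sN$, hence $\mathrm o(n^{-1/2})$ in that regime. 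This reduction to the bounded case is consistent with the paper's remark that "the multipliers are bounded in the theoretical construction".
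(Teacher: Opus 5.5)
Your argument is correct and is essentially the paper's proof: pointwise Hoeffding on a polynomial-size grid, a union bound, and a Lipschitz extension using the bound $|1+tx|\ge\tau$ from Lemma~\ref{lem_regulation}. The mesh $n^{-1}$ versus the paper's $n^{-2}$ is immaterial, and your split $\tfrac12 n^{-1/2+c_{\Xi,1}}+2Ln^{-1}$ is slightly tidier bookkeeping than the paper's.

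One caution on your closing aside, which lies outside the statement. For the untruncated $\chi^2_{\sN}/\sN$ law the integrals are not just perturbed but undefined: since $x_0<0$, the integrand has a pole at $t=-1/x_0\in(0,\infty)$, inside the support. So the truncated (and recentered) law must be taken as $F_{\xi^2}$ from the outset, with truncation level below $-1/x_0$. This is what the paper's remark does by truncating to $[1-\mathrm{o}(1),1+\mathrm{o}(1)]$.
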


\begin{proof}
	We prove the first estimate; the other two follow by the same argument. Define
	\begin{align*}
		h(t,x):=\frac{t}{1+tx}.
	\end{align*}
	By Lemma \ref{lem_regulation}, for all $|x-x_0|\le \delta_0$ and all
	$t\in \operatorname{supp}(F_{\xi^2})$, we have $|1+tx|\ge \tau.$ Since $t\in[0,\mathsf{C}_\xi]$, both $h(t,x)$ and $\partial_xh(t,x)$ are uniformly bounded on this set. Hence $h(t,x)$ is uniformly Lipschitz in $x$.
	
	Let $\mathcal G_n\subset[x_0-\delta_0,x_0+\delta_0]$ be a grid with mesh size $n^{-2}$. Then $|\mathcal G_n|=\mathrm{O}(n^2)$. For each fixed
	$x\in\mathcal G_n$, the random variables $h(\xi_j^2,x)-\mathbb E h(\xi^2,x)$ are independent, centered, and uniformly bounded. By Hoeffding's inequality,
	\begin{align*}
		\mathbb P\left(\left|\frac1n\sum_{j=1}^n h(\xi_j^2,x)- \mathbb E h(\xi^2,x)\right|>n^{-1/2+c_{\Xi,1}}\right)\leq2\exp(-c n^{2c_{\Xi,1}})
	\end{align*}
	for some constant $c>0$. Taking a union bound over
	$x\in\mathcal G_n$, we get
	\begin{align*}
		\max_{x\in\mathcal G_n}\left|\frac{1}{n}\sum_{j=1}^n h(\xi_j^2,x)-\mathbb E h(\xi^2,x)\right|\leq n^{-1/2+c_{\Xi,1}}
	\end{align*}
	with probability $1-\mathrm{o}(1)$.
	
	For any $x\in[x_0-\delta_0,x_0+\delta_0]$, choose $x'\in\mathcal G_n$ such that $|x-x'|\le n^{-2}.$ By the uniform Lipschitz property,
	\begin{align*}
		\left|\frac{1}{n}\sum_{j=1}^n h(\xi_j^2,x)-\frac{1}{n}\sum_{j=1}^n h(\xi_j^2,x')\right|\leq Cn^{-2},
	\end{align*}
	and similarly,
	\begin{align*}
		\left|\mathbb E h(\xi^2,x)-\mathbb E h(\xi^2,x')\right|\le Cn^{-2}.
	\end{align*}
	This proves the first estimate. The same argument applies to
	\begin{align*}
		h(t,x)=\frac{t^2}{(1+tx)^2}
		\quad\text{and}\quad
		h(t,x)=\frac{t^3}{(1+tx)^3},
	\end{align*}
	because these functions and their $x$-derivatives are uniformly bounded under the same denominator separation. This proves the claim.
\end{proof}
\begin{remark}
	For the Chi-squared multiplier in Example \ref{example_one}, since $\chi_{\mathsf{N}}^2/\mathsf{N}$ has unbounded support, we use a truncated and recentered version in the theoretical analysis. Let
	\begin{align*}
		\zeta=\chi_{\mathsf{N}}^2/\mathsf{N},\qquad a_n=C\sqrt{\frac{\log n}{\mathsf{N}}},
	\end{align*}
	and define
	\begin{align*}
		\xi_{\operatorname{tr}}^2:=
		1+\zeta\mathbf 1_{\{|\zeta-1|\le a_n\}}-
		\mathbb E\left[\zeta\mathbf 1_{\{|\zeta-1|\le a_n\}}\right].
	\end{align*}
	If $\log n\ll \mathsf{N}\ll n^{1/3-2\delta_{*}}$, then
	\begin{align*}
		\mathbb E\xi_{\operatorname{tr}}^2=1,\qquad
		\operatorname{Var}(\xi_{\operatorname{tr}}^2)=\frac{2}{\mathsf{N}}(1+\mathrm{o}(1)),
	\end{align*}
	and
	\begin{align*}
		\operatorname{supp}(F_{\xi_{\operatorname{tr}}^2})
		\subset [1-\mathrm{o}(1),1+\mathrm{o}(1)].
	\end{align*}
	Hence, under $|1+m_{1n,c}(\mathsf{E}_{\mathrm{MB}})|\ge \tau_0$, the multiplier-side regularity condition in Assumption \ref{assum_technique} holds.
\end{remark}

We now turn to the main proof. Recall the definitions of the bivariate functions $F_n(x,y)$ and $F_{n,c}(x,y)$ in \eqref{eq_def_systemequation}. Also, recall the systems equations in \eqref{eq_systemequations_edge} of the pairs $(m_{1n,c}(\mathsf{E}_{\mathrm{MB}}),\mathsf{E}_{\mathrm{MB}})$ and $(m_{1n}(\widehat{\mathsf{E}}_{\mathrm{MB}}),\widehat{\mathsf{E}}_{\mathrm{MB}})$. To proceed with the proof, we need the following definition.
\begin{definition}[Regular-edge condition]\label{def_regularedge}
	Let $x_0:=m_{1n,c}(\mathsf E_{\mathrm{MB}})$ and $y_0:=\mathsf E_{\mathrm{MB}}$. We say that the edge system is regular if there exists a constant $c_0>0$ such that
	\begin{align}\label{eq_regular_edge_condition}
		|\partial_yF_{n,c}(x_0,y_0)|\ge c_0,
		\qquad
		|\partial_{xx}F_{n,c}(x_0,y_0)|\ge c_0.
	\end{align}
\end{definition}
We first record the following stability result.
\begin{lemma}[Stability of the edge system]
	\label{lem_stabilityargument}
	Let $x_0:=m_{1n,c}(\mathsf E_{\mathrm{MB}})$ and $y_0:=\mathsf E_{\mathrm{MB}}$. Suppose that the system is regular in the sense that Definition \ref{def_regularedge} holds.
	Then there exists a unique local pair $(x_1,y_1)$ such that
	\begin{align*}
		F_n(x_1,y_1)=0,
		\qquad
		\partial_xF_n(x_1,y_1)=0
	\end{align*}
	and
	\begin{align*}
		|x_1-x_0|+|y_1-y_0|=\mathrm{O}(n^{-1/2+c_{\Xi,1}}).
	\end{align*}
\end{lemma}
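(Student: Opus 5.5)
The plan is to view the pair of equations $F_n=0$, $\partial_xF_n=0$ as a small perturbation of $F_{n,c}=0$, $\partial_xF_{n,c}=0$ and solve it by a quantitative implicit function theorem (or Newton--Kantorovich) argument around $(x_0,y_0)$. Set $G_c(x,y):=(F_{n,c}(x,y),\partial_xF_{n,c}(x,y))$ and $G_n:=(F_n,\partial_xF_n)$. By \eqref{eq_systemequations_edge}, $G_c(x_0,y_0)=0$. The Jacobian of $G_c$ at $(x_0,y_0)$ is
\begin{align*}
DG_c(x_0,y_0)=\begin{pmatrix}\partial_xF_{n,c} & \partial_yF_{n,c}\\ \partial_{xx}F_{n,c} & \partial_{xy}F_{n,c}\end{pmatrix}(x_0,y_0)=\begin{pmatrix}0 & \partial_yF_{n,c}\\ \partial_{xx}F_{n,c} & \partial_{xy}F_{n,c}\end{pmatrix}(x_0,y_0),
\end{align*}
whose determinant equals $-\partial_yF_{n,c}\,\partial_{xx}F_{n,c}$. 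Definition \ref{def_regularedge} bounds its absolute value below by $c_0^2$. Together with uniform bounds on the entries, this gives $\|DG_c(x_0,y_0)^{-1}\|=\mathrm{O}(1)$.

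The next step is to show that $G_n-G_c$ and $DG_n-DG_c$ are uniformly $\mathrm{O}(n^{-1/2+c_{\Xi,1}})$ on a fixed neighbourhood $\mathcal U=\{|x-x_0|\le\delta_0,|y-y_0|\le\delta_0\}$. Both $F_n$ and $F_{n,c}$ have the form $\frac1n\sum_i\sigma_i\,g(-y+\sigma_i S(x))-x$, where $g(u)=1/u$. Here $S(x)=\int t(1+tx)^{-1}\mathrm dF_{\xi^2}(t)$ for $F_{n,c}$, and $S$ is replaced by the empirical average $\widehat S(x)$ for $F_n$. The $x$-derivatives up to second order involve $S,S',S''$, that is, the transforms $t/(1+tx)$, $t^2/(1+tx)^2$ and $t^3/(1+tx)^3$. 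Lemma \ref{lem_uniform_multiplier_concentration} gives uniform control of $\widehat S-S$, $\widehat S'-S'$ and $\widehat S''-S''$ by $n^{-1/2+c_{\Xi,1}}$ on $|x-x_0|\le\delta_0$, with probability $1-\mathrm o(1)$; I would absorb this event into $\Omega_\Xi$. Lemma \ref{lem_regulation} keeps all denominators $1+\xi_j^2x$ away from zero.

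The denominators $-y+\sigma_iS(x)$ also need to stay away from zero. At $(x_0,y_0)$ this follows from Assumption \ref{assum_technique}, since by \eqref{eq_systemequation1} we have $-y_0+\sigma_iS(x_0)=-y_0(1+\sigma_im_{2n,c}(y_0))$. After shrinking $\delta_0$, continuity extends the bound to all of $\mathcal U$. The mean value theorem then gives
\begin{align*}
\sup_{\mathcal U}\bigl(|G_n-G_c|+\|DG_n-DG_c\|\bigr)=\mathrm{O}(n^{-1/2+c_{\Xi,1}}),
\end{align*}
and the second derivatives of $G_n$ are uniformly bounded on $\mathcal U$.

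With these bounds, I would run the contraction map $\Phi(w)=w-DG_c(x_0,y_0)^{-1}G_n(w)$ on a ball of radius $Kn^{-1/2+c_{\Xi,1}}$ around $w_0=(x_0,y_0)$. Since $G_n(w_0)=G_n(w_0)-G_c(w_0)=\mathrm O(n^{-1/2+c_{\Xi,1}})$, the map sends the ball into itself for $K$ large. It is also a contraction there, because $\|I-DG_c(w_0)^{-1}DG_n(w)\|\le C(|w-w_0|+n^{-1/2+c_{\Xi,1}})=\mathrm o(1)$. The resulting fixed point $(x_1,y_1)$ satisfies $G_n(x_1,y_1)=0$ and obeys the stated bound.

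For uniqueness in a fixed small neighbourhood, I would use the inverse function theorem: $G_n$ is injective on $\mathcal U'$ because $DG_n$ stays uniformly invertible there. I expect the main obstacle to be the verification that all denominators stay uniformly away from zero on a fixed, $n$-independent neighbourhood, simultaneously for the deterministic and the random system. This is what makes the Lipschitz and second-derivative bounds uniform, and it rests on Assumption \ref{assum_technique} together with the truncated multipliers having support in $[1-\mathrm o(1),1+\mathrm o(1)]$.
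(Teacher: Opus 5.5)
Your proposal is correct and follows essentially the same route as the paper. Both arguments use invertibility of $DG_c(x_0,y_0)$ from Definition~\ref{def_regularedge}, uniform $\mathrm{O}(n^{-1/2+c_{\Xi,1}})$ closeness of $G_n, DG_n$ to $G_c, DG_c$ via Lemma~\ref{lem_uniform_multiplier_concentration}, and a Newton-type contraction on a ball of radius $\asymp n^{-1/2+c_{\Xi,1}}$; the paper inverts $DG_n(x_0,y_0)$ and writes out the quadratic Taylor remainder, while you use the chord map with the deterministic Jacobian, which makes no difference.

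One small fix on uniqueness: invertibility of $DG_n$ alone does not give injectivity, but your estimate $\|I-DG_c(w_0)^{-1}DG_n(w)\|\le C(|w-w_0|+n^{-1/2+c_{\Xi,1}})\le 1/2$ on a fixed small ball makes your map $\Phi$ a contraction there, so any zero of $G_n$ in that ball is its unique fixed point.
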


Given Lemma \ref{lem_stabilityargument}, the first statement of Lemma \ref{lem_edgeconvergence1} follows immediately. We therefore proceed to prove Lemma \ref{lem_stabilityargument}.

\begin{proof}
	Fix a realization of multipliers in $\Omega_{\Xi}$. The following argument is deterministic.
	Define the two-dimensional maps
	
	\begin{align*}
		G_n(x,y):=
		\begin{pmatrix}
			F_n(x,y)\\
			\partial_xF_n(x,y)
		\end{pmatrix},
		\qquad
		G_{n,c}(x,y):=
		\begin{pmatrix}
			F_{n,c}(x,y)\\
			\partial_xF_{n,c}(x,y)
		\end{pmatrix}.
	\end{align*}
	By the definition of $(x_0,y_0)$ in Lemma \ref{lem_stabilityargument}, we have $G_{n,c}(x_0,y_0)=0.$ Moreover,
	\begin{align*}
		DG_{n,c}(x_0,y_0)
		=
		\begin{pmatrix}
			\partial_xF_{n,c}(x_0,y_0)
			&
			\partial_yF_{n,c}(x_0,y_0)\\
			\partial_{xx}F_{n,c}(x_0,y_0)
			&
			\partial_{xy}F_{n,c}(x_0,y_0)
		\end{pmatrix}.
	\end{align*}
	Since $\partial_xF_{n,c}(x_0,y_0)=0,$
	we have
	\begin{align*}
		\det DG_{n,c}(x_0,y_0)
		=
		-\partial_yF_{n,c}(x_0,y_0)
		\partial_{xx}F_{n,c}(x_0,y_0).
	\end{align*}
	Therefore, by the non-degeneracy assumption,
	\begin{align*}
		|\det DG_{n,c}(x_0,y_0)|\ge c_0^2.
	\end{align*}
	Thus $DG_{n,c}(x_0,y_0)$ is invertible and its inverse is uniformly bounded.
	
	Let $\mathcal N_{\delta_0}:=\{(x,y): |x-x_0|+|y-y_0|\le \delta_0\}$ for some sufficiently small constant $\delta_0>0$. By Definition \ref{def_OmegaM} and Lemma \ref{lem_uniform_multiplier_concentration}, on $\Omega_{\Xi}$,
	\begin{align*}
		\sup_{(x,y)\in\mathcal N_{\delta_0}}
		\|G_n(x,y)-G_{n,c}(x,y)\|=\mathrm{O}(n^{-1/2+c_{\Xi,1}}),
	\end{align*} 
	and
	\begin{align*}
		\sup_{(x,y)\in\mathcal N_{\delta_0}}
		\|DG_n(x,y)-DG_{n,c}(x,y)\|=\mathrm{O}(n^{-1/2+c_{\Xi,1}}).
	\end{align*}
	In particular,
	\begin{align*}
		G_n(x_0,y_0)=G_n(x_0,y_0)-G_{n,c}(x_0,y_0)=\mathrm{O}(n^{-1/2+c_{\Xi,1}}),
	\end{align*}
	and
	\begin{align*}
		DG_n(x_0,y_0)= DG_{n,c}(x_0,y_0)+\mathrm{O}(n^{-1/2+c_{\Xi,1}}).
	\end{align*}
	Hence, for all sufficiently large $n$, $DG_n(x_0,y_0)$ is also invertible and
	\begin{align*}
		\|[DG_n(x_0,y_0)]^{-1}\|\le C
	\end{align*}
	for some constant $C>0$. We now solve $G_n(x,y)=0$ near $(x_0,y_0)$. Write
	\begin{align*}
		d:=
		\begin{pmatrix}
			x-x_0\\
			y-y_0
		\end{pmatrix},
		\qquad
		d_0:=
		-[DG_n(x_0,y_0)]^{-1}G_n(x_0,y_0).
	\end{align*}
	Then $\|d_0\|=\mathrm{O}(n^{-1/2+c_{\Xi,1}})$.
	For $\|d\|\le C_1n^{-1/2+c_{\Xi,1}}$ for some $C_1>0$, Taylor's expansion gives
	\begin{align*}
		G_n(x,y)=G_n(x_0,y_0)+DG_n(x_0,y_0)d+R_n(d),
	\end{align*}
	where, after possibly shrinking $\delta_0$, the second derivatives of $G_n$ are uniformly bounded on $\mathcal N_{\delta_0}$. Hence, we have $ \|R_n(d)\|\le C_2\|d\|^2,$ for some constant $C_2>0$. Equivalently, the equation $G_n(x,y)=0$ can be written as
	\begin{align*}
		d=-[DG_n(x_0,y_0)]^{-1}G_n(x_0,y_0)-
		[DG_n(x_0,y_0)]^{-1}R_n(d)
		=
		d_0-\,[DG_n(x_0,y_0)]^{-1}R_n(d).
	\end{align*}
	Define the map
	\begin{align*}
		\mathcal T(d)
		:=
		d_0-\,[DG_n(x_0,y_0)]^{-1}R_n(d).
	\end{align*}
	For $C_3>C$ large enough and $n$ sufficiently large, $\mathcal T$ maps the
	ball $\mathcal B_n:=\{d:\|d\|\le C_3n^{-1/2+c_{\Xi,1}}$
	into itself, because
	\begin{align*}
		\|\mathcal T(d)\|
		\le
		C n^{-1/2+c_{\Xi,1}}+CC_2\|d\|^2
		\le
		C n^{-1/2+c_{\Xi,1}}+C C_1^2 C_2 n^{-1+2c_{\Xi,1}}
		\le
		C_3n^{-1/2+c_{\Xi,1}}.
	\end{align*}
	Moreover, for $d,d^{\prime}\in\mathcal B_n$,
	\begin{align*}
		\|\mathcal T(d)-\mathcal T(d^{\prime})\|
		\le
		C(\|d\|+\|d^{\prime}\|)\|d-d^{\prime}\|
		\le
		C n^{-1/2+c_{\Xi,1}}\|d-d^{\prime}\|.
	\end{align*}
	Since the coefficient on the right-hand side is $\mathrm{o}(1)$, $\mathcal T$ is a contraction on $\mathcal B_n$ for all sufficiently large $n$. By the contraction mapping theorem, there exists a unique $d_*\in\mathcal B_n$ such that $\mathcal T(d_*)=d_*$. Setting $x_1:=x_0+d_{*,1},
	y_1:=y_0+d_{*,2},$ we obtain $G_n(x_1,y_1)=0;$ that is,
	\begin{align*}
		F_n(x_1,y_1)=0,
		\qquad
		\partial_xF_n(x_1,y_1)=0.
	\end{align*}
	Furthermore,
	\begin{align*}
		|x_1-x_0|+|y_1-y_0|
		\le C\|d_*\|
		=\mathrm{O}(n^{-1/2+c_{\Xi,1}}).
	\end{align*}
	
	The same contraction argument also gives uniqueness in the local ball $\mathcal B_n$. After shrinking $\delta_0$, this is the unique local solution in $\mathcal N_{\delta_0}$. The proof is complete.
\end{proof}
We now prove the second statement of Lemma \ref{lem_edgeconvergence1}. Using \eqref{eq_edgeequations1} and \eqref{eq_edgeequations2}, we obtain
\begin{align}\label{eq_generaldecomposition}
	&m_{1n,c}(\mathsf{E}_{\mathrm{MB}})-m_{1n}(\widehat{\mathsf{E}}_{\mathrm{MB}})\\
	&=\frac{1}{n}\sum_i\frac{\sigma_i}{\widehat{\mathsf{E}}_{\mathrm{MB}}-\frac{\sigma_i}{n}\sum_j\frac{\xi^2_j}{1+\xi^2_jm_{1n}(\widehat{\mathsf{E}}_{\mathrm{MB}})}}-\frac{1}{n}\sum_i\frac{\sigma_i}{\mathsf{E}_{\mathrm{MB}}-\frac{\sigma_i}{n}\sum_j\frac{\xi^2_j}{1+\xi^2_jm_{1n,c}(\mathsf{E}_{\mathrm{MB}})}} \nonumber \\
	&+\frac{1}{n}\sum_i\frac{-\sigma_i\int\frac{t}{1+tm_{1n,c}(\mathsf{E}_{\mathrm{MB}})}\mathrm{d}F_{\xi^2}(t)+\frac{\sigma_i}{n}\sum_j\frac{\xi^2_j}{1+\xi^2_jm_{1n,c}(\mathsf{E}_{\mathrm{MB}})}}{(\mathsf{E}_{\mathrm{MB}}-\frac{\sigma_i}{n}\sum_j\frac{\xi^2_j}{1+\xi^2_jm_{1n,c}(\mathsf{E}_{\mathrm{MB}})})(\mathsf{E}_{\mathrm{MB}}-\sigma_i\int\frac{t}{1+tm_{1n,c}(\mathsf{E}_{\mathrm{MB}})}\mathrm{d}F_{\xi^2}(t))} \nonumber \\
	&=\frac{1}{n}\sum_i\frac{\sigma_i(\mathsf{E}_{\mathrm{MB}}-\widehat{\mathsf{E}}_{\mathrm{MB}})}{(\widehat{\mathsf{E}}_{\mathrm{MB}}-\frac{\sigma_i}{n}\sum_j\frac{\xi^2_j}{1+\xi^2_jm_{1n,c}(\mathsf{E}_{\mathrm{MB}})})(\mathsf{E}_{\mathrm{MB}}-\frac{\sigma_i}{n}\sum_j\frac{\xi^2_j}{1+\xi^2_jm_{1n,c}(\mathsf{E}_{\mathrm{MB}})})} \nonumber \\
	&+\frac{1}{n}\sum_i\frac{-\frac{\sigma_i^2}{n}\sum_j\frac{\xi^4_j(m_{1n}(\widehat{\mathsf{E}}_{\mathrm{MB}})-m_{1n,c}(\mathsf{E}_{\mathrm{MB}}))}{(1+\xi^2_jm_{1n}(\widehat{\mathsf{E}}_{\mathrm{MB}}))(1+\xi^2_jm_{1n,c}(\mathsf{E}_{\mathrm{MB}}))}}{(\widehat{\mathsf{E}}_{\mathrm{MB}}-\frac{\sigma_i}{n}\sum_j\frac{\xi^2_j}{1+\xi^2_jm_{1n}(\widehat{\mathsf{E}}_{\mathrm{MB}})})(\widehat{\mathsf{E}}_{\mathrm{MB}}-\frac{\sigma_i}{n}\sum_j\frac{\xi^2_j}{1+\xi^2_jm_{1n,c}(\mathsf{E}_{\mathrm{MB}})})} \nonumber \\
	&+\frac{1}{n}\sum_i\frac{-\sigma^2_i\int\frac{t}{1+tm_{1n,c}(\mathsf{E}_{\mathrm{MB}})}\mathrm{d}F_{\xi^2}(t)+\frac{\sigma^2_i}{n}\sum_j\frac{\xi^2_j}{1+\xi^2_jm_{1n,c}(\mathsf{E}_{\mathrm{MB}})}}{(\mathsf{E}_{\mathrm{MB}}-\frac{\sigma_i}{n}\sum_j\frac{\xi^2_j}{1+\xi^2_jm_{1n,c}(\mathsf{E}_{\mathrm{MB}})})(\mathsf{E}_{\mathrm{MB}}-\sigma_i\int\frac{t}{1+tm_{1n,c}(\mathsf{E}_{\mathrm{MB}})}\mathrm{d}F_{\xi^2}(t))} \nonumber \\
	&:=\mathsf{T}_1+\mathsf{T}_2+\mathsf{T}_3. 
\end{align}
By the first statement of Lemma \ref{lem_edgeconvergence1}, together with Assumption \ref{assum_technique} and Definition \ref{def_OmegaM}, on the event $\Omega_{\Xi}$ we have
\begin{equation}\label{eq_T1control}
	\mathsf{T}_1=\mathsf{C}_1(\mathsf{E}_{\mathrm{MB}}-\widehat{\mathsf{E}}_{\mathrm{MB}})+\mathrm{O}(n^{-1+2c_{\Xi,1}}).
\end{equation}

For the term $\mathsf{T}_2,$ we see that on the event $\Omega_{\Xi}$,

\begin{align}\label{eq_T2control}
	&\mathsf{T}_2=\frac{1}{n}\sum_i\frac{-\frac{\sigma_i^2}{n}\sum_j\frac{\xi^4_j(m_{1n}(\widehat{\mathsf{E}}_{\mathrm{MB}})-m_{1n,c}(\mathsf{E}_{\mathrm{MB}}))}{(1+\xi^2_jm_{1n}(\widehat{\mathsf{E}}_{\mathrm{MB}}))(1+\xi^2_jm_{1n,c}(L_{+}))}}{(\widehat{\mathsf{E}}_{\mathrm{MB}}-\frac{\sigma_i}{n}\sum_j\frac{\xi^2_j}{1+\xi^2_jm_{1n}(\widehat{\mathsf{E}}_{\mathrm{MB}})})^2} \nonumber \\
	&+\frac{1}{n}\sum_i\frac{\big(\frac{\sigma_i^2}{n}\sum_j\frac{\xi^4_j}{(1+\xi^2_jm_{1n}(\widehat{\mathsf{E}}_{\mathrm{MB}}))(1+\xi^2_jm_{1n,c}(L_{+}))}\big)^2(m_{1n}(\widehat{\mathsf{E}}_{\mathrm{MB}})-m_{1n,c}(\mathsf{E}_{\mathrm{MB}}))^2}{(\widehat{\mathsf{E}}_{\mathrm{MB}}-\frac{\sigma_i}{n}\sum_j\frac{\xi^2_j}{1+\xi^2_jm_{1n}(\widehat{\mathsf{E}}_{\mathrm{MB}})})^2(\widehat{\mathsf{E}}_{\mathrm{MB}}-\frac{\sigma_i}{n}\sum_j\frac{\xi^2_j}{1+\xi^2_jm_{1n,c}(\mathsf{E}_{\mathrm{MB}})})} \nonumber \\
	&=\frac{1}{n}\sum_i\frac{-\frac{\sigma_i^2}{n}\sum_j\frac{\xi^4_j(m_{1n}(\widehat{\mathsf{E}}_{\mathrm{MB}})-m_{1n,c}(\mathsf{E}_{\mathrm{MB}}))}{(1+\xi^2_jm_{1n}(\widehat{\mathsf{E}}_{\mathrm{MB}}))^2}}{(\widehat{\mathsf{E}}_{\mathrm{MB}}-\frac{\sigma_i}{n}\sum_j\frac{\xi^2_j}{1+\xi^2_jm_{1n}(\widehat{\mathsf{E}}_{\mathrm{MB}})})^2}+\frac{1}{n}\sum_i\frac{-\frac{\sigma_i^2}{n}\sum_j\frac{\xi^4_j(m_{1n}(\widehat{\mathsf{E}}_{\mathrm{MB}})-m_{1n,c}(\mathsf{E}_{\mathrm{MB}}))^2}{(1+\xi^2_jm_{1n}(\widehat{\mathsf{E}}_{\mathrm{MB}}))^2(1+\xi^2_jm_{1n,c}(\mathsf{E}_{\mathrm{MB}}))}}{(\widehat{\mathsf{E}}_{\mathrm{MB}}-\frac{\sigma_i}{n}\sum_j\frac{\xi^2_j}{1+\xi^2_jm_{1n,c}(\mathsf{E}_{\mathrm{MB}})})^2} \nonumber \\
	&+\frac{1}{n}\sum_i\frac{\big(\frac{\sigma_i^2}{n}\sum_j\frac{\xi^4_j}{(1+\xi^2_jm_{1n}(\widehat{\mathsf{E}}_{\mathrm{MB}}))(1+\xi^2_jm_{1n,c}(\mathsf{E}_{\mathrm{MB}}))}\big)^2(m_{1n}(\widehat{\mathsf{E}}_{\mathrm{MB}})-m_{1n,c}(\mathsf{E}_{\mathrm{MB}}))^2}{(\widehat{\mathsf{E}}_{\mathrm{MB}}-\frac{\sigma_i}{n}\sum_j\frac{\xi^2_j}{1+\xi^2_jm_{1n}(\widehat{\mathsf{E}}_{\mathrm{MB}})})(\widehat{\mathsf{E}}_{\mathrm{MB}}-\frac{\sigma_i}{n}\sum_j\frac{\xi^2_j}{1+\xi^2_jm_{1n,c}(\mathsf{E}_{\mathrm{MB}})})^2}  \\
	&=-(m_{1n}(\widehat{\mathsf{E}}_{\mathrm{MB}})-m_{1n,c}(\mathsf{E}_{\mathrm{MB}}))+\mathrm{O}(n^{-1+2c_{\Xi,1}}), \nonumber
\end{align}
where we used \eqref{eq_edgeequations1}, \eqref{eq_edgeconvergence_firststatement}, Definition \ref{def_OmegaM}, and Assumption \ref{assum_technique}. For $\mathsf{T}_3$, we have
\begin{equation}\label{eq_T3control}
	\mathsf{T}_3=\mathsf{C}_2 \mathcal{X}+\mathrm{O}(n^{-1+2c_{\Xi,1}}). 
\end{equation}
Combining \eqref{eq_T1control}, \eqref{eq_T2control}, and \eqref{eq_T3control} with \eqref{eq_generaldecomposition} completes the proof of Lemma \ref{lem_edgeconvergence1}.

\subsection{Proof of Lemma \ref{lem_probabilitycontrol_M}}\label{sec_prf_Omega_M}
In this subsection, we prove Lemma \ref{lem_probabilitycontrol_M}. We focus on the feasible multiplier construction introduced in Example \ref{example_one} under the constraints of Definition \ref{defn_feasiblemultiplier}.

The first and third statements follow directly from Chebyshev's inequality, while the fourth follows directly from Markov's inequality. We omit the routine details. 

For the second statement, define the centered random variables $\mathbf{b}_{\xi^2_i}$, $1\leq i\leq n$, by
\begin{gather*}
	\mathbf{b}_{\xi^2_i}:=\frac{\xi^2_i}{1+\xi^2_im_{1n,c}(\mathsf{E}_{\mathrm{MB}})}-\int\frac{t}{1+tm_{1n,c}(\mathsf{E}_{\mathrm{MB}})}\mathrm{d}F_{\xi^2}(t).
\end{gather*}
By construction, $\mathbb{E}\mathbf{b}_{\xi^2_i}=0$ for $1\le i\le n$. On the other hand, since
\begin{align*}
	\frac{1}{n} \sum_{i=1}^p \frac{\sigma_i^2 \int \frac{t^2}{|1+t m_{1n,c}(\mathsf{E}_{\mathrm{MB}})|^2} \mathrm{d} F_{\xi^2}(t) }{|-\mathsf{E}_{\mathrm{MB}}+\sigma_i \int \frac{t}{1+t m_{1n,c}(\mathsf{E}_{\mathrm{MB}})} \mathrm{d} F_{\xi^2}(t)|^2}=1,
\end{align*}
by Assumption \ref{assum_technique}, we have
\begin{align*}
	\int \frac{t^2}{|1+t m_{1n,c}(\mathsf{E}_{\mathrm{MB}})|^2} \mathrm{d} F_{\xi^2}(t)\le C,
\end{align*}
for some constant $C>0$. Consequently, by the Cauchy--Schwarz inequality, we obtain constants $C_1, C_2>0$ such that
\begin{gather*}
	\mathbb{E}|\mathbf{b}_{\xi^2}|^2\leq C_1  \int\frac{t^2}{|1+tm_{1n,c}(\mathsf{E}_{\mathrm{MB}})|^2}\mathrm{d}F_{\xi^2}(t) < C_2<\infty. 
\end{gather*}
Since the variables $\mathbf{b}_{\xi_i^2}$, $1 \leq i \leq n$, are independent, the proof is completed by another application of Markov's inequality.

\subsection{Proof of Lemma \ref{lem_power_bootstrap_gap}}\label{sec_proof_lemma_bootstrapgap}

Fix $1\le i\le r$. Write
\begin{align*}
	x_i:=m_{1n,c}(\vartheta_i^{\mathtt{MB}}).
\end{align*}
By \eqref{eq_def_vartheta_MB},
\[
1+\widetilde{\sigma}_i m_{2n,c}(\vartheta_i^{\mathtt{MB}})=0,
\]
we have
\begin{align}\label{eq_power_gap_m1}
	x_i
	=
	\frac{1}{n}\sum_{a=1}^p\frac{\sigma_a}{-\vartheta_i^{\mathtt{MB}}(1-\sigma_a/\widetilde{\sigma}_i)}
	=
	\frac{\widetilde{\sigma}_i}{\vartheta_i^{\mathtt{MB}}}\cdot
	\frac{1}{n}\sum_{a=1}^p\frac{\sigma_a}{\sigma_a-\widetilde{\sigma}_i}.
\end{align}
On the other hand,
\begin{align}\label{eq_power_gap_m2}
	m_{2n,c}(\vartheta_i^{\mathtt{MB}})
	=
	\int\frac{t}{-\vartheta_i^{\mathtt{MB}}(1+t x_i)}\,\mathrm{d}F_{\xi^2}(t)
	=
	-\widetilde{\sigma}_i^{-1},
\end{align}
which is equivalent to
\begin{align}\label{eq_power_gap_theta_rep}
	\vartheta_i^{\mathtt{MB}}
	=
	\widetilde{\sigma}_i\int\frac{t}{1+t x_i}\,\mathrm{d}F_{\xi^2}(t).
\end{align}
Combining \eqref{eq_power_gap_m1} and \eqref{eq_power_gap_theta_rep}, we obtain
\begin{align}\label{eq_power_gap_root}
	x_i\int\frac{t}{1+t x_i}\,\mathrm{d}F_{\xi^2}(t)
	=
	\frac{1}{n}\sum_{a=1}^p\frac{\sigma_a}{\sigma_a-\widetilde{\sigma}_i}.
\end{align}

Let
\begin{align*}
	\kappa_i:=\frac{1}{n}\sum_{a=1}^p\frac{\sigma_a}{\sigma_a-\widetilde{\sigma}_i},
	\qquad
	T_i(x):=
	x\int\frac{t}{1+t x}\,\mathrm{d}F_{\xi^2}(t)-\kappa_i.
\end{align*}
Then \eqref{eq_power_gap_root} implies that $T_i(x_i)=0$.

Parallelly, we can introduce the following equation associated with $Q_{\text{sam}}$ without the multiplier
\begin{align*}
	T_i^{(0)}(x):=\frac{x}{1+x}-\kappa_i.
\end{align*}
Let $x_i^{(0)}$ be the unique solution of $T_i^{(0)}(x)=0$. Since
\begin{align*}
	\frac{x_i^{(0)}}{1+x_i^{(0)}}=\kappa_i,
\end{align*}
we have
\begin{align*}
	\frac{\widetilde{\sigma}_i}{1+x_i^{(0)}}
	=
	\widetilde{\sigma}_i(1-\kappa_i)
	=
	\widetilde{\sigma}_i
	-
	\widetilde{\sigma}_i\frac{1}{n}\sum_{a=1}^p\frac{\sigma_a}{\sigma_a-\widetilde{\sigma}_i}.
\end{align*}
On the other hand, by the definition \eqref{eq_def_vartheta_S},
\begin{align*}
	\vartheta_i^{\mathtt{S}}
	=
	f(-\widetilde{\sigma}_i^{-1})
	=
	\widetilde{\sigma}_i
	+
	\frac{1}{n}\sum_{a=1}^p\frac{1}{-\widetilde{\sigma}_i^{-1}+\sigma_a^{-1}}
	=
	\widetilde{\sigma}_i
	-
	\widetilde{\sigma}_i\frac{1}{n}\sum_{a=1}^p\frac{\sigma_a}{\sigma_a-\widetilde{\sigma}_i}.
\end{align*}
Hence,  for $i=1,\dots, r$
\begin{align}\label{eq_power_gap_thetaS_rep}
	\vartheta_i^{\mathtt{S}}
	=
	\frac{\widetilde{\sigma}_i}{1+x_i^{(0)}}.
\end{align}

Let $\mathcal N_i$ be a fixed neighborhood of $x_i^{(0)}$. By the truncation remark after Lemma \ref{lem_uniform_multiplier_concentration} and Assumption \ref{assum_technique}, after shrinking $\mathcal N_i$ if necessary, there exists a constant $c>0$ such that
\begin{align}\label{eq_power_gap_den}
	|1+t x|\ge c,\qquad x\in\mathcal N_i,\ t\in \operatorname{supp}(F_{\xi^2}).
\end{align}
Therefore, for
\begin{align*}
	h_x(t):=\frac{t}{1+t x},
\end{align*}
the derivatives $\partial_t h_x(t)$ and $\partial_{tt} h_x(t)$ are uniformly bounded on $\mathcal N_i\times \operatorname{supp}(F_{\xi^2})$.
Since $\mathbb E\xi^2=1$, Taylor expansion at $t=1$ yields, uniformly for $x\in\mathcal N_i$,
\begin{align*}
	h_x(t)
	=
	h_x(1)+h_x'(1)(t-1)+\mathrm{O}((t-1)^2).
\end{align*}
Integrating with respect to $F_{\xi^2}$ and using Definition \ref{defn_feasiblemultiplier}, we obtain
\begin{align}\label{eq_power_gap_H}
	\int\frac{t}{1+t x}\,\mathrm{d}F_{\xi^2}(t)
	=
	\frac{1}{1+x}
	+
	\mathrm{O}\big(\operatorname{Var}(\xi^2)\big),
\end{align}
uniformly for $x\in\mathcal N_i$. Consequently,
\begin{align}\label{eq_power_gap_T}
	T_i(x)-T_i^{(0)}(x)
	=
	\mathrm{O}\big(\operatorname{Var}(\xi^2)\big),
\end{align}
uniformly for $x\in\mathcal N_i$.

Moreover,
\begin{align*}
	\big(T_i^{(0)}\big)'(x)=\frac{1}{(1+x)^2},
\end{align*}
in particular
\begin{align}\label{eq_power_gap_derivative}
	\big(T_i^{(0)}\big)'(x_i^{(0)})=\frac{1}{(1+x_i^{(0)})^2}\asymp 1.
\end{align}
Therefore, by \eqref{eq_power_gap_T}, \eqref{eq_power_gap_derivative}, and the stability of simple roots,
\begin{align}\label{eq_power_gap_x}
	x_i-x_i^{(0)}
	=
	\mathrm{O}\big(\operatorname{Var}(\xi^2)\big).
\end{align}

Finally, combining \eqref{eq_power_gap_theta_rep}, \eqref{eq_power_gap_thetaS_rep}, \eqref{eq_power_gap_H}, and \eqref{eq_power_gap_x}, we get
\begin{align*}
	\vartheta_i^{\mathtt{MB}}-\vartheta_i^{\mathtt{S}}
	&=
	\widetilde{\sigma}_i
	\left(
	\int\frac{t}{1+t x_i}\,\mathrm{d}F_{\xi^2}(t)-\frac{1}{1+x_i^{(0)}}
	\right)\\
	&=
	\widetilde{\sigma}_i
	\left(
	\int\frac{t}{1+t x_i}\,\mathrm{d}F_{\xi^2}(t)-\frac{1}{1+x_i}
	\right)
	+
	\widetilde{\sigma}_i
	\left(
	\frac{1}{1+x_i}-\frac{1}{1+x_i^{(0)}}
	\right)\\
	&=
	\mathrm{O}\big(\operatorname{Var}(\xi^2)\big),
\end{align*}
which proves \eqref{eq_power_gap}. Combining \eqref{eq_power_gap} with Lemma \ref{lem_power_outlier_expansion}, we obtain
\[
\vartheta_i^{\mathtt{MB}}-\mathsf E
=
\frac{1}{2}f''(\mathsf b)\mathsf b^4\varsigma_i^2
+
\mathrm{O}\big(\varsigma_i^3+\operatorname{Var}(\xi^2)\big).
\]
We then complete the proof of \eqref{eq_power_theta_c_gap}.

\subsection{Proof of Lemma \ref{lem_power_bootstrap_localization}}\label{sec_proof_lemma_bootstraplocation}

Fix $1\le i\le r$. Write
\begin{align*}
	\widehat{x}_i:=m_{1n}(\widehat{\vartheta}_i^{\mathtt{MB}}),
	\qquad
	x_i:=m_{1n,c}(\vartheta_i^{\mathtt{MB}}).
\end{align*}
As in the proof of Lemma \ref{lem_power_bootstrap_gap}, the equations
\[
1+\widetilde{\sigma}_im_{2n}(\widehat{\vartheta}^{\mathtt{MB}}_i)=0,
\qquad
1+\widetilde{\sigma}_im_{2n,c}(\vartheta^{\mathtt{MB}}_i)=0
\]
imply
\begin{align*}
	\widehat{\vartheta}^{\mathtt{MB}}_i
	=
	\widetilde{\sigma}_i\frac{1}{n}\sum_{j=1}^n\frac{\xi_j^2}{1+\xi_j^2\widehat{x}_i},
	\qquad
	\vartheta^{\mathtt{MB}}_i
	=
	\widetilde{\sigma}_i\int\frac{t}{1+t x_i}\,\mathrm{d}F_{\xi^2}(t),
\end{align*}
and
\begin{align*}
	\widehat{x}_i\frac{\widehat{\vartheta}^{\mathtt{MB}}_i}{\widetilde{\sigma}_i}
	=
	\kappa_i,
	\qquad
	x_i\frac{\vartheta^{\mathtt{MB}}_i}{\widetilde{\sigma}_i}
	=
	\kappa_i,
\end{align*}
where
\[
\kappa_i=\frac{1}{n}\sum_{a=1}^p\frac{\sigma_a}{\sigma_a-\widetilde{\sigma}_i}.
\]
Equivalently,
\begin{align}\label{eq_power_loc_root}
	\widehat{T}_i(\widehat{x}_i)=0,
	\qquad
	T_i(x_i)=0,
\end{align}
where
\begin{align*}
	\widehat{T}_i(x):=
	x\frac{1}{n}\sum_{j=1}^n\frac{\xi_j^2}{1+\xi_j^2x}-\kappa_i,
	\qquad
	T_i(x):=
	x\int\frac{t}{1+t x}\,\mathrm{d}F_{\xi^2}(t)-\kappa_i.
\end{align*}

By Lemma \ref{lem_power_bootstrap_gap},
\[
\vartheta_i^{\mathtt{MB}}-\vartheta_i^{\mathtt{S}}
=
\mathrm{O}\big(\operatorname{Var}(\xi^2)\big).
\]
Hence $x_i$ lies in a fixed neighborhood $\mathcal N_i$ of the reference root $x_i^{(0)}$ introduced in the proof of Lemma \ref{lem_power_bootstrap_gap}. On this neighborhood, the denominator separation \eqref{eq_power_gap_den} holds. In particular,
\begin{align*}
	T_i'(x)
	=
	\int\frac{t}{(1+t x)^2}\,\mathrm{d}F_{\xi^2}(t)
\end{align*}
is continuous and strictly positive on $\mathcal N_i$, so there exist constants $0<c<C<\infty$ such that
\begin{align}\label{eq_power_loc_derivative_det}
	c\le T_i'(x)\le C,\qquad x\in\mathcal N_i.
\end{align}

Moreover,
\begin{align*}
	\widehat{T}_i'(x)-T_i'(x)
	=
	\frac{1}{n}\sum_{j=1}^n\frac{\xi_j^2}{(1+\xi_j^2x)^2}
	-
	\int\frac{t}{(1+t x)^2}\,\mathrm{d}F_{\xi^2}(t).
\end{align*}
By the same grid argument as in Lemma \ref{lem_uniform_multiplier_concentration}, applied to $h(t,x)=t(1+t x)^{-2}$, we have on the event $\Omega_{\Xi}$
\begin{align}\label{eq_power_loc_derivative_rand}
	\sup_{x\in\mathcal N_i}
	|\widehat{T}_i'(x)-T_i'(x)|
	=
	\mathrm{O}(n^{-1/2+c_{\Xi,1}}).
\end{align}
Combining \eqref{eq_power_loc_derivative_det} and \eqref{eq_power_loc_derivative_rand}, we may assume on $\Omega_{\Xi}$ that
\begin{align}\label{eq_power_loc_derivative_hat}
	c/2\le \widehat{T}_i'(x)\le 2C,\qquad x\in\mathcal N_i.
\end{align}

We next estimate $\widehat{T}_i(x_i)$. Let
\begin{align*}
	q_i(t):=\frac{t}{1+t x_i}.
\end{align*}
Then
\begin{align}\label{eq_power_loc_Txi}
	\widehat{T}_i(x_i)
	=
	x_i\left(
	\frac{1}{n}\sum_{j=1}^n q_i(\xi_j^2)-\int q_i(t)\,\mathrm{d}F_{\xi^2}(t)
	\right).
\end{align}
Since $x_i\in\mathcal N_i$ and \eqref{eq_power_gap_den} hold, the derivative
\[
q_i'(t)=\frac{1}{(1+t x_i)^2}
\]
is uniformly bounded on $\operatorname{supp}(F_{\xi^2})$. Therefore, if $\xi_*^2$ is an independent copy of $\xi^2$,
\begin{align*}
	\operatorname{Var}(q_i(\xi^2))
	=
	\frac{1}{2}\mathbb E\big(q_i(\xi^2)-q_i(\xi_*^2)\big)^2
	\le
	C\frac{1}{2}\mathbb E(\xi^2-\xi_*^2)^2
	=
	C\operatorname{Var}(\xi^2).
\end{align*}
Because the multiplier law is bounded after truncation, $q_i(\xi^2)$ is uniformly bounded. Hence Bernstein's inequality yields

\begin{align}\label{eq_power_loc_q}
	\frac{1}{n}\sum_{j=1}^n q_i(\xi_j^2)-\int q_i(t)\,\mathrm{d}F_{\xi^2}(t)
	=
	\mathrm{O}\!\left(n^{-1/2+c_{\Xi,1}}\sqrt{\operatorname{Var}(\xi^2)}\right)
\end{align}
with probability $1-\mathrm{o}(1)$. Then on the event $\Omega_{\Xi}$, \eqref{eq_power_loc_Txi} becomes
\begin{align}\label{eq_power_loc_Tsmall}
	\widehat{T}_i(x_i)
	=
	\mathrm{O}\!\left(n^{-1/2+c_{\Xi,1}}\sqrt{\operatorname{Var}(\xi^2)}\right).
\end{align}

Since $\widehat{T}_i(\widehat{x}_i)=0$, the mean-value theorem and \eqref{eq_power_loc_derivative_hat} imply
\begin{align}\label{eq_power_loc_x}
	\widehat{x}_i-x_i
	=
	\mathrm{O}\!\left(n^{-1/2+c_{\Xi,1}}\sqrt{\operatorname{Var}(\xi^2)}\right).
\end{align}

Finally,
\begin{align*}
	\widehat{\vartheta}_i^{\mathtt{MB}}-\vartheta_i^{\mathtt{MB}}
	&=
	\widetilde{\sigma}_i
	\left(
	\frac{1}{n}\sum_{j=1}^n\frac{\xi_j^2}{1+\xi_j^2\widehat{x}_i}
	-
	\int\frac{t}{1+t x_i}\,\mathrm{d}F_{\xi^2}(t)
	\right)\\
	&=
	\widetilde{\sigma}_i
	\left(
	\frac{1}{n}\sum_{j=1}^n\frac{\xi_j^2}{1+\xi_j^2\widehat{x}_i}
	-
	\frac{1}{n}\sum_{j=1}^n\frac{\xi_j^2}{1+\xi_j^2x_i}
	\right)
	+
	\widetilde{\sigma}_i
	\left(
	\frac{1}{n}\sum_{j=1}^n\frac{\xi_j^2}{1+\xi_j^2x_i}
	-
	\int\frac{t}{1+t x_i}\,\mathrm{d}F_{\xi^2}(t)
	\right).
\end{align*}
The second term is exactly the fluctuation in \eqref{eq_power_loc_q}. For the first term, the map
\[
x\mapsto \frac{1}{n}\sum_{j=1}^n\frac{\xi_j^2}{1+\xi_j^2x}
\]
is uniformly Lipschitz on $\mathcal N_i$ by \eqref{eq_power_gap_den}. Therefore, using \eqref{eq_power_loc_x},
\begin{align*}
	\frac{1}{n}\sum_{j=1}^n\frac{\xi_j^2}{1+\xi_j^2\widehat{x}_i}
	-
	\frac{1}{n}\sum_{j=1}^n\frac{\xi_j^2}{1+\xi_j^2x_i}
	=
	\mathrm{O}\!\left(n^{-1/2+c_{\Xi,1}}\sqrt{\operatorname{Var}(\xi^2)}\right).
\end{align*}
Combining the last two displays proves
\[
\widehat{\vartheta}_i^{\mathtt{MB}}-\vartheta_i^{\mathtt{MB}}
=
\mathrm{O}\!\left(n^{-1/2+c_{\Xi,1}}\sqrt{\operatorname{Var}(\xi^2)}\right).
\]

\bibliographystyle{chicago}

\bibliography{ref}

\end{document}